%
%

\documentclass[aip,jmp,preprint,frafta4paper]{revtex4-1}
\usepackage{cancel}
\usepackage{amsmath}
\usepackage{amssymb}
\usepackage{amsthm}
\usepackage{graphicx}
\usepackage{dcolumn}
\usepackage{bm}
\usepackage{upgreek}
\usepackage{esint}

\usepackage{commands}

\theoremstyle{definition}
\newtheorem{definition}{Definition}

\theoremstyle{remark}
\newtheorem{remark}{Remark}

\theoremstyle{theorem}
\newtheorem{theorem}{Theorem}

\theoremstyle{lemma}
\newtheorem{lemma}{Lemma}

\theoremstyle{corollary}
\newtheorem{corollary}{Corollary}

\theoremstyle{proposition}
\newtheorem{proposition}{Proposition}

\begin{document}


\title{Variational nonlinear WKB in the Eulerian frame} 



\author{J. W. Burby}
\affiliation{Los Alamos National Laboratory, Los Alamos, New Mexico 87545, USA}
 \affiliation{Mathematical Sciences Research Institute, 17 Gauss Way, Berkeley, California 94720, USA}
\author{D. E. Ruiz}
 \affiliation{Sandia National Laboratories, P.O. Box 5800, Albuquerque, New Mexico 87185, USA}


\date{\today}

\begin{abstract}
Nonlinear WKB is a multiscale technique for studying locally-plane-wave solutions of nonlinear partial differential equations (PDE). Its application comprises two steps: (1) replacement of the original PDE with an extended system separating the large scales from the small, and (2) reduction of the extended system to its slow manifold. In the context of variational fluid theories with particle relabeling symmetry, nonlinear WKB in the \emph{mean} Eulerian frame is known to possess a variational structure. This much has been demonstrated using, for instance, the theoretical apparatus known as the generalized Lagrangian mean. On the other hand, the variational structure of nonlinear WKB in the conventional Eulerian frame remains mysterious. By exhibiting a variational principle for the extended equations from step (1) above, we demonstrate that nonlinear WKB in the Eulerian frame is in fact variational. Remarkably, the variational principle for the extended system admits loops of relabeling transformations as a symmetry group. Noether's theorem therefore implies that the extended Eulerian equations possess a \emph{family} of circulation invariants parameterized by $S^1$. As an illustrative example, we use our results to systematically deduce a variational model of high-frequency acoustic waves interacting with a larger-scale compressible isothermal flow.
%
\end{abstract}

\pacs{}

\maketitle 

\section{Introduction}
Nonlinear WKB is a powerful tool for studying solutions of partial differential equations (PDE) whose local behavior about any point is well approximated by a plane wave. The method, which is a generalization of the usual WKB method for linear PDE, goes back at least to the mid 1960's, when it was used to study large-amplitude locally-plane-wave solutions of a variety of systems, including the Bousinesque equations\cite{Whitham_1965_eom} and the Kortweg-DeVries equation.\cite{Miura_1974} Generally speaking, given a (possibly nonlinear) PDE of the form
\begin{align}
F^b(\varphi^a(x),\partial_\mu\varphi^a(x),\partial^2_{\mu\nu}\varphi^a(x),\dots) = 0,\label{gen_PDE}
\end{align}
for the unknown multi-component field $\varphi^a$, application of nonlinear WKB comprises two steps. First Eq.\,\eqref{gen_PDE} is extended to a larger system of PDE using a procedure that we will refer to as ``nonlinear WKB extension."  Next, scale separation present in the original system of PDE, either in $F^b$ or the initial conditions, is leveraged to identify slow solutions of the extended system. The power of this procedure comes from the fact that rapidly oscillating locally-plane-wave solutions $\varphi^a$ of Eq.\,\eqref{gen_PDE} correspond to slowly-varying solutions of the extended system, which are easier to treat using asymptotic methods.

The nonlinear WKB extension procedure amounts to the following. First one introduces the nonlinear WKB ansatz 
\begin{align}
\varphi^a(x) = \tilde{\varphi}^a(x,S(x)),\label{WKB_ansatz}
\end{align}
where $\tilde{\varphi}^a(x,\theta)$ is $2\pi$-periodic in the second argument, and $S(x)$ is referred to as a phase function.
More explicitly, since $\tilde{\varphi}^a$ is periodic in the second argument, it can be written as a sum of Fourier harmonics in $S(x)$; that is,
\begin{align}
\tilde{\varphi}^a(x,\theta)
	&=\sum_{n=-\infty}^{\infty}
		\hat{\varphi}_n^a e^{inS(x)}.
\end{align}
Thus, the nonlinear WKB ansatz differs from the conventional WKB ansatz in that it contains all harmonics in $S$. The term ``nonlinear" is appropriate here because the ansatz \eqref{WKB_ansatz} can handle nonlinear terms appearing in the PDE \eqref{gen_PDE} that produce harmonic coupling.
The ansatz \eqref{WKB_ansatz} is then substituted into Eq.\,\eqref{gen_PDE} and the chain rule is applied to express $x$-derivatives of ${\varphi}$ in terms of $x$- and $\theta$-derivatives of $\tilde{\varphi}$ and $S$. Finally, the argument $S(x)$ in any of the derivatives of $\tilde{\varphi}^a$ is replaced with any arbitrary angle $\theta$ in order to obtain the extended system
\begin{align}
F^b(\tilde{\varphi}^a(x,\theta),\partial_\mu\tilde{\varphi}^a(x,\theta)+\partial_\mu S(x) \partial_\theta\tilde{\varphi}^a(x,\theta),\dots)=0.\label{WKB_extension_eom}
\end{align}
The dependent variables are now $\tilde{\varphi}^a(x,\theta)$ and $S(x)$, while the independent variables are $x$ and $\theta$. As is readily checked, each solution $(\tilde{\varphi}^a,S)$ of Eq.\,\eqref{WKB_extension_eom} yields a solution $\varphi^a$ of Eq.\,\eqref{gen_PDE}, with $\varphi^a$ given by Eq.\,\eqref{WKB_ansatz}. It is in this sense that Eq.\,\eqref{WKB_extension_eom} \emph{extends} the original equation \eqref{gen_PDE}.

In this paper we will study the nonlinear WKB extension procedure, i.e. the passage from Eq.\,\eqref{gen_PDE} to Eq.\,\eqref{WKB_extension_eom}, as an interesting mathematical construction in its own right, independent of any asymptotic methods. Specifically, we will be concerned with nonlinear WKB extension as it applies to a particular class of PDE from fluid mechanics known as Euler-Poincar\'e equations.\cite{Holm_1998} Such equations describe the evolution of ideal, i.e. dissipation-free, fluids. In the Euler-Poincar\'e setting, we will address the question of whether structural properties of the original system of PDE \eqref{gen_PDE} are inherited by the extended equations \eqref{WKB_extension_eom}. We will be particularly interested in the fate of variational structure and particle relabeling symmetry, the latter being the source of circulation invariants in ideal fluid models.


The methods of Whitham\cite{Whitham_1965_lag} are sufficient to study the fate of variational structure under nonlinear WKB extension when the system \eqref{gen_PDE} is equivalent to the Euler-Lagrange equations associated with a classical field theory. As we will review, Whitham's method of averaged Lagrangians provides a variational principle for the extended system in this case. However, conventional Euler-Poincar\'e variational principles for ideal fluid flow do not fit into the mold of variational principles used in classical field theory. Therefore Whitham's methods cannot be applied directly to show that the system \eqref{WKB_extension_eom} is variational when Eq.\,\eqref{gen_PDE} is an Euler-Poincar\'e fluid equation.

The essential difficulty can be understood through a close look at the ideal isothermal Euler equations
\begin{gather}
\rho (\partial_t\bm{u}+\bm{u}\cdot\del\bm{u}) = -c^2 \del\rho\label{euler_mom}\\
\partial_t\rho+\del\cdot(\rho\bm{u}) = 0,\label{euler_cont}
\end{gather}
where the unknown fields are the fluid velocity $\bm{u}(\bm{x},t) $ and the mass density $\rho(\bm{x},t)$, and $c$ is a constant representing the speed of small-amplitude sound waves. This system of equations, which has the form \eqref{gen_PDE}, arises from an Euler-Poincar\'e variational principle in the following sense.\cite{Holm_1998} Let $Q$ be a compact region in $\mathbb{R}^3$ that represents the fluid container, and let $Q_0$ be a diffeomorphic copy of $Q$ equipped with a non-vanishing function $\rho_0:Q_0\rightarrow \mathbb{R}$ that represents a reference configuration of fluid elements. A path $t\mapsto \bm{g}(t)\in\text{Diff}(Q_0,Q)$ in the space of diffeomorphisms $Q_0\rightarrow Q$ is a critical point of the functional 
\begin{align}
\mathcal{A}_{\rho_0}(\bm{g}) = \int_{t_1}^{t_2} \int_{Q_0} \frac{1}{2} |\dot{\bm{g}}(\bm{x}_0)|^2 \,\rho_0(\bm{x}_0)\,d\bm{x}_0-\int c^2 \rho_0(\bm{x}_0)\text{ln}\left(\frac{\rho_0(\bm{x}_0)}{\text{det}(\del_0\bm{g})(\bm{x}_0)}\right) \,d\bm{x}_0
\end{align}
if and only if Eq.\,\eqref{euler_mom} is satisfied with $\bm{u}$ and $\rho$ \emph{defined} according to
\begin{align}
\bm{u}(\bm{x}) &= \dot{\bm{g}}(\bm{g}^{-1}(\bm{x}))\label{eulerian_velocity}\\
\rho(\bm{x}) & = \frac{1}{\text{det}(\del_0\bm{g})(\bm{g}^{-1}(\bm{x}))}\rho_0(\bm{g}^{-1}(\bm{x})),\label{density_advection}
\end{align}
where we have suppressed the time argument $t$ for the sake of presentation. In particular, it is a consequence of these definitions that $\rho$ as defined in Eq.\,\eqref{density_advection} satisfies the continuity equation \eqref{euler_cont}. Thus, each critical point of $\mathcal{A}_{\rho_0}$ corresponds to a solution of the ideal isothermal Euler equations. Conversely, given a solution of the ideal isothermal Euler equations, there is some $\rho_0$ such that the time-dependent flow map of $\bm{u}$ is a critical point of $\mathcal{A}_{\rho_0}$. It is therefore appropriate to say that the system \eqref{euler_mom}-\eqref{euler_cont} is variational. However, the field that appears in the variational principle is $\bm{g}(\bm{x}_0,t)$ instead of $\bm{u}(\bm{x},t)$ or $\rho(\bm{x},t)$, as one might expect from experience with classical field theory. In fact, $\bm{g}$ is not even defined on the same \emph{domain} as $\bm{u}$ and $\rho$. It is therefore not at all obvious how, or if, Whitham's averaged Lagrangian technique can be applied to yield a variational principle for the nonlinear WKB extension of Eqs.\,\eqref{euler_mom}-\eqref{euler_cont},
\begin{gather}
\tilde{\rho}\left(\partial_t\tilde{\bm{u}}+\tilde{\bm{u}}\cdot\del\tilde{\bm{u}}+\Omega\partial_\theta\tilde{\bm{u}}\right) = -c^2\del\tilde{\rho}-c^2\del S\,\partial_\theta\tilde{\rho}\label{euler_ext_mom}\\
\partial_t\tilde{\rho}+\del\cdot(\tilde{\rho}\tilde{\bm{u}})+\partial_\theta(\Omega\tilde{\rho})= 0\label{euler_ext_cont}\\
\Omega = \pd_t{S}+\tilde{\bm{u}}\cdot\del S,
\end{gather}
where $\tilde{\rho} = \tilde{\rho}(\bm{x},t,\theta)$ and $\tilde{\bm{u}}(\bm{x},t,\theta)$ comprise the multi-component field $\tilde{\varphi}^a$ in Eq.\,\eqref{WKB_extension_eom} and $S = S(\bm{x},t)$ is the phase function.
For instance, one question that arises when attempting to apply Whitham's averaging to the action functional $\mathcal{A}_{\rho_0}$ is ``what is the appropriate nonlinear WKB ansatz for the mapping $\bm{g}$?'' The naive guess $\bm{g}(\bm{x}_0,t) = \tilde{\bm{g}}(\bm{x}_0,t,S(\bm{x}_0,t))$ does not make sense because the proper spatial domain of the phase function $S$ is $Q$ --- not $Q_0$. (In WKB theory, phases are assigned to spatial locations, not fluid element labels.)

We are by no means the first to consider the interplay between WKB theory and Euler-Poincar\'e variational principles. Before the terminology ``Euler-Poincar\'e variational principle" was even invented, Dewar\cite{Dewar_1970} (and independently Bretherton\cite{Bretherton_1971}) proposed the ansatz
\begin{align}
\bm{g}(\bm{x}_0,t) &= \overline{\bm{g}}(\bm{x}_0,t)+\bm{\xi}(\overline{\bm{g}}(\bm{x}_0,t),t)\label{old_lagrangian_wkb}\\
\bm{\xi}(\overline{\bm{x}},t) & = \text{Re}\left( \bm{a}(\overline{\bm{x}},t) \exp(i S(\overline{\bm{x}},t))\right)
\end{align}
for the fluid configuration map that appears in the Euler-Poincar\'e variational principle for magnetohydrodynamic (MHD) flow.\cite{Newcomb_1962} The intuition leading to \eqref{old_lagrangian_wkb} is that $\overline{\bm{g}}$ represents the ``mean" configuration of Lagrangian fluid elements. Under the assumptions that $\bm{\xi}$ is small and $S$ varies rapidly, this idea leads to a variational model of small-amplitude locally-plane waves interacting with a slowly-varying MHD background. In the context of purely hydrodynamic flow, Gjaja and Holm\cite{Gjaja_Holm_1996} explored this idea further, and uncovered the consequences of the \emph{mean} relabeling symmetry present in averaged Lagrangians built upon Eq.\,\eqref{old_lagrangian_wkb}. Here mean fluid particle relabeling symmetry refers to invariance of the averaged Lagrangian under the replacement
\begin{align}
\overline{\bm{g}}_t\mapsto \overline{\bm{g}}_t\circ \overline{\bm{\eta}},
\end{align}
where $\overline{\bm{\eta}}:Q_0\rightarrow Q_0$ is any diffeomorphism that preserves the density $\rho_0\, d\bm{x}_0$. Perhaps surprisingly, none of this previous work manages to provide a variational principle for the usual WKB extension of the fluid equations, e.g. Eqs.\,\eqref{euler_ext_mom}-\eqref{euler_ext_cont}. Instead, the ansatz \eqref{old_lagrangian_wkb} leads to an alternative, ostensibly inequivalent extension of the fluid equations,\cite{Gjaja_Holm_1996} and Whitham averaging produces a variational structure for this alternative system. It is reasonable to refer to this alternative extension as an extension \emph{in the mean Eulerian frame} because the quantity $\overline{\bm{x}} = \overline{\bm{g}}(\bm{x}_0,t)$ gives the \emph{phase average}  of the Eulerian fluid element location. Thus, nonlinear WKB extension in the mean Eulerian frame is known to be variational. However, the variational structure of nonlinear WKB extension in the conventional Eulerian frame has never been found.

In what follows, we will prove that nonlinear WKB extension in the Eulerian frame is in fact variational. The proof will make use of Whitham averaging, but will \emph{not} make use of the ansatz \eqref{old_lagrangian_wkb}. In fact, even the more general notion of separating quantities into mean and fluctuating parts will not play a role in the argument. Exploiting this fact, we will also prove that the extended equations admit as a symmetry group the space of \emph{loops} of particle relabeling transformations. Remarkably, this loop group\cite{Pressley_1988} is much larger than the group of mean relabeling transformations present in the work of Gjaja and Holm.\cite{Gjaja_Holm_1996} The presence of this loop group symmetry will allow us to prove that the nonlinear WKB extension of Euler-Poincar\'e fluid equations in the Eulerian frame admits a \emph{family} of circulation invariants parameterized by $S^1$. This result extends the circulation theorem of Gjaja and Holm, which may be seen as a consequence of invariance under the subgroup of constant loops. Finally, in order to demonstrate the utility of our results, we will apply them to derive a systematic, all-orders variational model of weakly-nonlinear high-frequency acoustic waves interacting with a longer-scale isothermal compressible flow. This example may be regarded as a fresh take on the analysis of Bretherton in Ref.\,\onlinecite{Bretherton_1971}.

Our discussion will be organized in the following manner. In Section \ref{sec:two} we will prove that applying Whitham averaging to the Lagrangian of a classical field theory is equivalent to applying the usual nonlinear WKB extension procedure directly to the Euler-Lagrange equations. In particular, we will show that Whitham's averaged Lagrangian is the Lagrangian for the nonlinear WKB extension of a classical field theory. In Section \ref{sec:three} we will show how fluid equations arising from Euler-Poincar\'e variational principles with local Lagrangians may be recast as classical field theories. In Section \ref{sec:four} we will then combine the results of Section \ref{sec:two} and \ref{sec:three} to produce the variational structure underlying the nonlinear WKB extension (in the Eulerian frame) of Euler-Poincar\'e fluid equations. We will investigate the relabeling symmetries of this new variational principle in Section \ref{sec:five}. In particular, we will prove that the symmetry group of the WKB extension includes the space of loops of particle relabeling transformations, and identify the corresponding momentum map using Noether's theorem. Finally, we will apply our results to high-frequency acoustic waves interacting with longer-scale compressible isothermal flow in Section \ref{sec:six}. After presenting our results, we will discuss the relationship of our work with existing literature, in particular with Ref.\,\onlinecite{Gjaja_Holm_1996} and the theory of generalized Lagrangian means, in Section \ref{sec:seven}.

{As a forewarning remark, unless indicated otherwise, we will assume in this paper that all mappings are $C^\infty$. We make this assumption in spite of the fact that some of the PDEs we will encounter may not have a good existence and uniqueness theory in the smooth setting. In addition, the discussion contained in Section \ref{sec:six} will proceed at the level of formal asymptotics. Regularity assumptions mentioned in Section \ref{sec:six} are merely included to ensure that coefficients in various asymptotic expansion may be computed.
}

\section{A basic theorem on Whitham averaging\label{sec:two}}
In this section we provide an anachronistic review of the nonlinear WKB extension procedure as it applies to general first-order classical field theories. Our goal is to prove that the nonlinear WKB extension of a field theory satisfies a variational principle. We will build upon this result in subsequent sections when uncovering the variational structure of nonlinear WKB extensions of Euler-Poincar\'e fluid equations. Essentially all of the ideas in this section can be found in the work of Whitham.\cite{Whitham_1965_lag} 

For the purposes of our discussion, a first-order classical field theory will be defined as follows. 
\begin{definition}\label{def_ft}
A \emph{first-order classical field theory} is a triple $(M,\mathcal{C},\mathcal{L})$ comprising a manifold $M$, a space of functions $\mathcal{C}$, and a function $\mathcal{L}$ with the following properties.
\begin{itemize}
\item The \emph{spacetime} $M$ is an $m$-dimensional space presented as the product of a vector space with a torus of some dimension. The natural coordinates on $M$ are denoted $x^\mu$, $\mu\in\{1,\dots,m\}$.
\item The \emph{space of fields} $\mathcal{C}$ is a vector space of functions $\varphi:M\rightarrow F$, where the \emph{fiber} $F$ is an $f$-dimensional space of the same type as $M$, but with a possibly different dimension. The natural coordinates on $F$ are denoted $\varphi^a$, $a\in\{1,\dots,f\}$.
\item The \emph{Lagrangian density} $\mathcal{L}$ is a real-valued function on $M\times F\times D$, where $D$ is the space of $f\times m$ matrices with components $v^a_\mu$, $a\in \{1,\dots,f\}$, $\mu\in\{1,\dots,m\}$.
\end{itemize}
\end{definition}
\noindent To each first-order classical field theory and compact subset $U\subset M$, we associate the local action functional
\begin{align}
A_U(\varphi) = \int_U \mathcal{L}(x,\varphi(x), \partial\varphi(x) )\,dx.\label{eq_16}
\end{align}
Here $\partial\varphi(x)\in D$ has entries $[\partial\varphi(x)]^a_\mu = \partial_\mu\varphi^a(x)$. We say that a field $\varphi$ is a critical point of $A_U$ if 
\begin{gather}
\frac{d}{d\epsilon}\bigg|_0A_U(\varphi+\epsilon \delta\varphi) = 0
\end{gather}
for all $\delta\varphi\in \mathcal{C}$ that vanish on $\partial U$. 

Suppose that $\mathcal{C}$ contains all smooth fields with compact support. Then it is a standard result in the calculus of variations that $\varphi$ is a critical point of $A_U$ for all $U\subset M$ if and only if $\varphi$ satisfies the system of second-order PDE known as the Euler-Lagrange equations:
\begin{align}\label{ele_general}
\frac{\partial\mathcal{L}}{\partial \varphi^a}(x,\varphi(x),\partial\varphi(x)) = \frac{\partial}{\partial x^\mu} \left(\frac{\partial\mathcal{L}}{\partial v^a_\mu}(x,\varphi(x),\partial\varphi(x))\right).
\end{align}
In this setting, we refer to the first-order classical field theory $(M,\mathcal{C},\mathcal{L})$ as \emph{ordinary}.
\begin{definition}
An \emph{ordinary} first-order classical field theory is a first-order classical field theory $(M,\mathcal{C},\mathcal{L})$ where the space of fields $\mathcal{C}$ contains all smooth fields $\varphi: M\rightarrow F$ with compact support.
\end{definition}

Given an ordinary first-order classical field theory $(M,\mathcal{C},\mathcal{L})$, the nonlinear WKB extension procedure described in the introduction may be applied to the theory's Euler-Lagrange equations. It will be convenient to refer to the resulting extended system as the nonlinear WKB extension of $(M,\mathcal{C},\mathcal{L})$.
\begin{definition}\label{def_2}
\emph{The nonlinear WKB (NL-WKB) extension} of the ordinary first-order classical field theory $(M,\mathcal{C},\mathcal{L})$ is the nonlinear WKB extension of the field theory's Euler-Lagrange equations \eqref{ele_general}. That is, the NL-WKB extension of $(M,\mathcal{C},\mathcal{L})$ is the system of partial differential equations
\begin{align}
\label{pre_extension}
\frac{\partial\mathcal{L}}{\partial \varphi^a}(j(x,\theta)) =& \left(\frac{\partial}{\partial x^\mu}+\partial_\mu S(x)\frac{\partial}{\partial\theta}\right)\left(\frac{\partial\mathcal{L}}{\partial v^a_\mu}(j(x,\theta))\right),
\end{align}
where
\begin{align}
j(x,\theta)= (x,\tilde{\varphi}(x,\theta),\partial\tilde{\varphi}(x,\theta) + \partial_\theta\tilde{\varphi}(x,\theta)\partial S(x))\label{shorthand_j}
\end{align}
is convenient shorthand notation, and $\tilde{\varphi}:M\times S^1\rightarrow F$ and $S:M\rightarrow S^1$ are the unknown fields in the extended system. We regard $\partial_\theta\tilde{\varphi}(x,\theta)$ and $\partial S(x)$ as $f\times 1$ and $1\times m$ matrices, respectively.
\end{definition}
\noindent Our goal is to describe elements of the relationship between the ordinary field theory $(M,\mathcal{C},\mathcal{L})$ and its NL-WKB extension. Due to the following Lemma, we expect this relationship to be strong.

\begin{lemma}\label{lemma_1}
If $(\tilde{\varphi},S)$ is a solution of the NL-WKB extension of $(M,\mathcal{C},\mathcal{L})$, then $\varphi(x) = \tilde{\varphi}(x,S(x))$ is a solution of the Euler-Lagrange equations associated with $\mathcal{L}$. Conversely, if $\varphi$ is a solution of $\mathcal{L}$'s Euler-Lagrange equations, then $\tilde{\varphi}(x,\theta) = \varphi(x)$, $S(x) = 0$ is a solution of the NL-WKB extension of $(M,\mathcal{C},\mathcal{L})$.
\end{lemma}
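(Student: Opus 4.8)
The plan is to prove both directions by direct substitution, exploiting the single algebraic identity that underlies the entire nonlinear WKB construction: for any smooth $G(x,\theta)$, the chain rule gives $\partial_\mu[G(x,S(x))] = (\partial_\mu G)(x,S(x)) + \partial_\mu S(x)\,(\partial_\theta G)(x,S(x))$. In other words, the ``total derivative'' operator $\partial/\partial x^\mu + \partial_\mu S\,\partial/\partial\theta$ appearing on the right-hand side of \eqref{pre_extension}, once evaluated along the diagonal $\theta = S(x)$, coincides with the honest $x$-derivative of the restriction $G(x,S(x))$. Everything reduces to this observation together with the matching fact for the first derivative of the field itself.

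For the forward direction I would start from a solution $(\tilde\varphi,S)$ of \eqref{pre_extension} and simply restrict the equation to $\theta = S(x)$. The first step is to check that the shorthand jet \eqref{shorthand_j} collapses correctly: since $\varphi(x) = \tilde\varphi(x,S(x))$, the chain rule yields $\partial_\mu\varphi^a(x) = \partial_\mu\tilde\varphi^a(x,S(x)) + \partial_\theta\tilde\varphi^a(x,S(x))\,\partial_\mu S(x)$, which is precisely the $(a,\mu)$ entry of the third slot of $j(x,S(x))$ once one keeps track of the matrix product $\partial_\theta\tilde\varphi\,\partial S$. Hence $j(x,S(x)) = (x,\varphi(x),\partial\varphi(x))$, the ordinary first jet of $\varphi$, and the left-hand side of \eqref{pre_extension} becomes $\partial\mathcal{L}/\partial\varphi^a$ evaluated on this jet. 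For the right-hand side I would apply the identity above with $G(x,\theta) = (\partial\mathcal{L}/\partial v^a_\mu)(j(x,\theta))$: evaluating at $\theta = S(x)$ converts the total-derivative operator into $\partial/\partial x^\mu$ acting on $(\partial\mathcal{L}/\partial v^a_\mu)(x,\varphi(x),\partial\varphi(x))$. Equating the two sides produces exactly the Euler-Lagrange equations \eqref{ele_general}.

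The converse is even more immediate. With $\tilde\varphi(x,\theta) = \varphi(x)$ and $S(x) = 0$ one has $\partial_\theta\tilde\varphi \equiv 0$ and $\partial S \equiv 0$, so the jet \eqref{shorthand_j} reduces to $j(x,\theta) = (x,\varphi(x),\partial\varphi(x))$, independent of $\theta$, while the $\partial_\mu S\,\partial_\theta$ term in the operator on the right-hand side of \eqref{pre_extension} vanishes identically. The extension \eqref{pre_extension} then reads $\partial\mathcal{L}/\partial\varphi^a = \partial_\mu(\partial\mathcal{L}/\partial v^a_\mu)$ evaluated on the jet of $\varphi$ at every $\theta$, which holds precisely because $\varphi$ solves the Euler-Lagrange equations by hypothesis.

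I expect no genuine obstacle here: the content is entirely the chain-rule identity together with careful bookkeeping of the matrix slots in \eqref{shorthand_j}. The one point deserving a little care is the forward direction, where one must remember that a solution of the extension satisfies \eqref{pre_extension} for \emph{all} $(x,\theta)$, and that it suffices to use the equation along the single diagonal $\theta = S(x)$; the remaining $\theta$-dependence of the extended system plays no role in recovering \eqref{ele_general}.
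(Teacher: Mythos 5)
Your proof is correct and follows exactly the same route as the paper's: the forward direction is the chain-rule restriction to the diagonal $\theta = S(x)$ (which the paper dismisses as ``a straightforward application of the chain rule''), and the converse is the observation that \eqref{pre_extension} collapses to \eqref{ele_general} when $\partial_\theta\tilde{\varphi}$ and $\partial S$ vanish. You have simply filled in the bookkeeping — in particular the verification that $j(x,S(x))$ equals the ordinary first jet of $\varphi$ — that the paper leaves implicit.
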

\begin{proof}
That $\varphi(x) = \tilde{\varphi}(x,S(x))$ satisfies the Euler-Lagrange equations associated with $\mathcal{L}$ is a straightforward application of the chain rule. The converse statement follows from the fact that Eq.\,\eqref{pre_extension} reduces to Eq.\,\eqref{ele_general} when $\tilde{\varphi}(x,\theta) = \varphi(x)$ and $S(x) = 0$.
\end{proof}


In order to go beyond Lemma \ref{lemma_1} in our description of the relationship between an ordinary classical field theory and its nonlinear WKB extension, it is useful to understand the heuristic origins of the nonlinear WKB extension procedure. The key idea is scale separation. Suppose $\varphi$ is a solution of the Euler-Lagrange equations associated with $(M,\mathcal{C},\mathcal{L})$ that locally has the appearance of a plane wave. Formally, we may then write $\varphi(x) = \tilde{\varphi}(x,S(x))$, where $\tilde{\varphi}:M\times S^1\rightarrow F$ is a profile and $S$ is a rapidly oscillating phase function. The derivatives of $\varphi$ are apparently given by
\begin{align}
\partial_\mu\varphi^a(x) = \partial_\mu\tilde{\varphi}^a(x,S(x)) + \partial_\mu S(x) \,\partial_\theta\tilde{\varphi}^a(x,S(x)).
\end{align}
In light of the Euler-Lagrange equations \eqref{ele_general}, the profile and phase function must therefore satisfy
\begin{align}\label{pre_eom}
\frac{\partial\mathcal{L}}{\partial \varphi^a}(j(x,S(x))) &= \frac{\partial}{\partial x^\mu}\left(\frac{\partial\mathcal{L}}{\partial v^a_\mu}(j(x,S(x)))\right)\nonumber\\
&=\left(\frac{\partial}{\partial x^\mu}+\partial_\mu S(x)\frac{\partial}{\partial\theta}\right)\left(\frac{\partial\mathcal{L}}{\partial v^a_\mu}(j(x,\theta))\right)\bigg|_{\theta = S(x)},
\end{align}
where we have used the shorthand notation $j(x,\theta)$ introduced in Definition \ref{def_2}.
Because the phase function is, by hypothesis, rapidly rotating, we can extract more information from Eq.\,\eqref{pre_eom} by considering the latter in a spacetime region that is small compared with the long spacetime scale, but large compared with the short spacetime scale. In such a region, we may regard the argument $x$ in $j(x,S(x))$ as being fixed, while the argument $S(x)$ retains its rapidly oscillating character. If we make the (very weak) assumption that $S(x)$ makes at least one complete rotation in our intermediate-scale region, we may therefore conclude that the following strengthened version of Eq.\,\eqref{pre_eom} must be satisfied: 
\begin{align}\label{extended_ele_heuristic}
\frac{\partial\mathcal{L}}{\partial \varphi^a}(j(x,\theta)) =& \left(\frac{\partial}{\partial x^\mu}+\partial_\mu S(x)\frac{\partial}{\partial\theta}\right)\left(\frac{\partial\mathcal{L}}{\partial v^a_\mu}(j(x,\theta))\right),
\end{align}
where $\theta\in S^1$ is now arbitrary. Equation \eqref{extended_ele_heuristic} reproduces the NL-WKB extension of $(M,\mathcal{C},\mathcal{L})$. Thus, the NL-WKB extension of $(M,\mathcal{C},\mathcal{L})$ can be deduced by applying heuristic arguments based on scale separation to the Euler-Lagrange equations associated with $\mathcal{L}$.

Now consider the application of similar heuristic arguments to the variational principle associated with $(M,\mathcal{C},\mathcal{L})$. Suppose once more that $\varphi$ is a solution of the field theory that is locally a plane wave. Then, as before, we may write $\varphi(x) = \tilde{\varphi}(x,S(x))$, where $\tilde{\varphi}:M\times S^1\rightarrow F$ is a profile and $S$ is a rapidly oscillating phase function. Moreover, the action $A_U$ evaluated on this special $\varphi$ can be written
\begin{align}
A_U(\varphi) = \int_U \mathcal{L}(j(x,S(x)))\,dx.
\end{align}
Because the phase function $S$ is rapidly oscillating by hypothesis, we may partition the integration domain $U = \cup_i U_i$ into cells with diameters that are large compared with the short scale and short compared with the large scale, and then write $A_U(\varphi) = \sum_i A_{U_i}(\varphi)$. In each of the integrals $A_{U_i}$ the first argument of $j(x,S(x))$ may be replaced with the center $x_i$ of cell $U_i$ without appreciably changing the value of the integral. Moreover, because $S(x)$ varies rapidly in $U_i$, the dominant contribution to the integral $A_{U_i}$ is given by averaging over $S(x)$ according to
\begin{align}
A_{U_i}\approx \frac{1}{2\pi} \int_0^{2\pi}\int_{U_i} \mathcal{L}(j(x_i,\theta))\,dx\,d\theta.
\end{align}
If we now interpret the previously established formula $A_U(\varphi) = \sum_i A_{U_i}(\varphi)$ as a Riemann sum, we conclude that the action functional evaluated on a locally-plane $\varphi$ is given approximately by
\begin{align}
A_U(\varphi)&\approx \frac{1}{2\pi}\int_0^{2\pi}\int_{U} \mathcal{L}(j(x,\theta))\,dx\,d\theta\nonumber\\
&\equiv \tilde{A}_{U\times S^1}(\tilde{\varphi},S),
\end{align}
where we have introduced the extended action functional $\tilde{A}_{U\times S^1}(\tilde{\varphi},S)$. Moreover, because $\varphi$ is by assumption a critical point of $A_U$, this argument suggests that 
\begin{align}
\frac{d}{d\epsilon}\bigg|_0 \tilde{A}_{U\times S^1}(\tilde{\varphi}+\epsilon \delta\tilde{\varphi},S+\epsilon \delta S) \approx 0,
\end{align}
where $\delta\tilde{\varphi}(x,\theta)$ and $\delta S(x)$ are arbitrary functions that vanish when $x\in \partial U$. That is, the NL-WKB extension of $(M,\mathcal{C},\mathcal{L})$ at least approximately satisfies a variational principle. 

Somewhat surprisingly, the result suggested by the previous heuristic argument is correct. The extended action functional provides the NL-WKB extension of an ordinary first-order classical field theory with the following variational formulation.

\begin{definition}\label{looping_of_field_theory}
Given an ordinary first-order classical field theory $(M,\mathcal{C},\mathcal{L})$, the \emph{looping} of $(M,\mathcal{C},\mathcal{L})$ is the first-order classical theory $(\tilde{M},\tilde{\mathcal{C}},\tilde{\mathcal{L}})$ prescribed as follows.
\begin{itemize}
\item The \emph{looped spacetime} $\tilde{M}=M\times S^1$ is the trivial $S^1$ bundle over $M$.
\item The \emph{looped space of fields} $\tilde{\mathcal{C}}$ comprises maps $\Phi:\tilde{M}\rightarrow \tilde{F}$ of the form $\Phi(x,\theta) = (\tilde{\varphi}(x,\theta),S(x))$ with $\tilde{\varphi}:\tilde{M}\rightarrow F$, $S:M\rightarrow S^1$, and $\tilde{F} = F\times S^1$.
\item Set $\tilde{D}=M_{(f+1)\times(m+1)}(\mathbb{R})$, the space of real-valued $(f+1)\times(m\times 1)$ matrices. Let $X=(x,\theta)\in\tilde{M}$ and $\Phi=(\tilde{\varphi},S)\in \tilde{F}$. The \emph{looped Lagrangian density} $\tilde{\mathcal{L}}:\tilde{M}\times \tilde{F}\times \tilde{D}\rightarrow\mathbb{R}$ is given by
\begin{align}\label{extended_lag_gen}
\tilde{\mathcal{L}}(X,\Phi,V) = \frac{1}{2\pi} \mathcal{L}(x,\tilde{\varphi},\tilde{v}+\zeta\kappa),
\end{align}
where the matrix $V\in \tilde{D}$ has the block structure
\begin{align}
V=\left(\begin{array}{cc}
\tilde{v} & \zeta\\
\kappa & \alpha
\end{array}\right),\quad \begin{array}{c} \tilde{v}\in M_{f\times m}(\mathbb{R})\\ \zeta\in M_{f\times 1}(\mathbb{R})\end{array}\quad\begin{array}{c}  \kappa\in M_{1\times m}(\mathbb{R})\\ \alpha\in\mathbb{R}.\end{array}
\end{align}
\end{itemize}
\end{definition}

\begin{remark}
The last component of $\Phi = (\tilde{\varphi},S)\in\tilde{\mathcal{C}}$, being a function of $x$ alone, cannot be localized near an arbitrary angle $\theta$. Thus, $\tilde{\mathcal{C}}$ does not contain all functions with compact support, meaning $(\tilde{M},\tilde{\mathcal{C}},\tilde{\mathcal{L}})$ is \emph{not} ordinary. The Euler-Lagrange equations therefore do not take the standard form \eqref{ele_general}. The appropriate modification of the Euler-Lagrange equations will be found in the process of proving the next theorem.
\end{remark}

\begin{theorem}[Whitham averaging]\label{whitham_thm}
Let $(\tilde{M},\tilde{\mathcal{C}},\tilde{\mathcal{L}})$ be the looping of the ordinary first-order classical field theory  $(M,\mathcal{C},\mathcal{L})$. If $\Phi = (\tilde{\varphi},S)$ is a solution of $(M,\mathcal{C},\mathcal{L})$'s NL-WKB extension, then $\Phi$ is a critical point of $(\tilde{M},\tilde{\mathcal{C}},\tilde{\mathcal{L}})$'s local action functional $\tilde{A}_{U\times S^1}$ for each $U\subset M$. Conversely, if $\Phi= (\tilde{\varphi},S)\in\tilde{\mathcal{C}}$ is a critical point of $\tilde{A}_{U\times S^1}$ for each $U\subset M$, then $(\tilde{\varphi},S)$ is a solution of $(M,\mathcal{C},\mathcal{L})$'s NL-WKB extension.
\end{theorem}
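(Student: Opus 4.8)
The plan is to prove the equivalence by computing the first variation of the extended action directly and reading off its Euler-Lagrange equations, paying careful attention to the non-ordinary nature of $(\tilde{M},\tilde{\mathcal{C}},\tilde{\mathcal{L}})$ flagged in the Remark. Writing the extended action explicitly as $\tilde{A}_{U\times S^1}(\tilde{\varphi},S) = \frac{1}{2\pi}\int_0^{2\pi}\int_U \mathcal{L}(j(x,\theta))\,dx\,d\theta$, I would differentiate along a path $(\tilde{\varphi}+\epsilon\,\delta\tilde{\varphi},\,S+\epsilon\,\delta S)$, where $\delta\tilde{\varphi}$ and $\delta S$ vanish on $\partial U$ and $\delta S$ depends on $x$ alone. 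The chain rule, together with the fact that the third slot of $j$ varies as $\partial_\mu\delta\tilde{\varphi}^a + \partial_\theta\delta\tilde{\varphi}^a\,\partial_\mu S + \partial_\theta\tilde{\varphi}^a\,\partial_\mu\delta S$, produces three families of terms against $\partial\mathcal{L}/\partial\varphi^a$ and $\partial\mathcal{L}/\partial v^a_\mu$.

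Next I would integrate by parts, discarding boundary terms on $\partial U$ (where the variations vanish) and in $\theta$ (by $2\pi$-periodicity of $\mathcal{L}(j)$ and $\tilde{\varphi}$). Collecting the coefficient of $\delta\tilde{\varphi}^a$ yields
\[
\frac{\partial\mathcal{L}}{\partial\varphi^a}(j) - \left(\frac{\partial}{\partial x^\mu}+\partial_\mu S\,\frac{\partial}{\partial\theta}\right)\left(\frac{\partial\mathcal{L}}{\partial v^a_\mu}(j)\right),
\]
which is exactly the left-minus-right side of the NL-WKB extension \eqref{pre_extension}. Because $\delta S$ is independent of $\theta$, however, its Euler-Lagrange equation is not pointwise in $\theta$: integrating the $\partial_\mu\delta S$ term by parts in $x$ and then performing the $\theta$-integral gives only the averaged condition $\int_0^{2\pi}\partial_\mu\!\left(\frac{\partial\mathcal{L}}{\partial v^a_\mu}(j)\,\partial_\theta\tilde{\varphi}^a\right)d\theta = 0$. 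The backward implication is then immediate: any critical point satisfies the full Euler-Lagrange system, and restricting to variations with $\delta S=0$ while letting $\delta\tilde{\varphi}$ localize near an arbitrary interior $(x,\theta)$, the fundamental lemma of the calculus of variations forces the coefficient of $\delta\tilde{\varphi}^a$ to vanish, i.e. \eqref{pre_extension} holds.

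The forward implication is where the real content lies, and I expect the redundancy of the $S$-equation to be the main obstacle: one must check that the averaged condition above holds automatically for every solution of \eqref{pre_extension}. The key is to read the integrand as the divergence of a wave-action flux $P^\mu = \frac{\partial\mathcal{L}}{\partial v^a_\mu}(j)\,\partial_\theta\tilde{\varphi}^a$ and to establish the pointwise identity
\[
\frac{\partial P^\mu}{\partial x^\mu} + \partial_\mu S\,\frac{\partial P^\mu}{\partial\theta} = \frac{\partial}{\partial\theta}\mathcal{L}(j),
\]
valid whenever \eqref{pre_extension} is satisfied. I would prove this by expanding both sides with the chain rule, using $\partial_\theta(\partial_\mu S)=0$, and substituting \eqref{pre_extension} for $\partial\mathcal{L}/\partial\varphi^a$; the stray second-derivative terms $\frac{\partial\mathcal{L}}{\partial v^a_\mu}\,\partial_\theta^2\tilde{\varphi}^a\,\partial_\mu S$ reorganize precisely into the $\partial_\mu S\,\partial_\theta P^\mu$ term. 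Integrating this identity over $\theta\in[0,2\pi]$, the right-hand side vanishes by periodicity of $\mathcal{L}(j)$, and the term $\partial_\mu S\,\partial_\theta P^\mu$ integrates to zero since $\partial_\mu S$ is $\theta$-independent and $\partial_\theta P^\mu$ is a total $\theta$-derivative; what survives is exactly the averaged $S$-equation. Hence \eqref{pre_extension} implies both Euler-Lagrange equations, so every solution of the NL-WKB extension is a critical point of $\tilde{A}_{U\times S^1}$ for each $U$, completing the equivalence.
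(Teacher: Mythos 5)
Your proposal is correct and follows essentially the same route as the paper's proof: the first variation splits into a pointwise equation in $(x,\theta)$ from the $\delta\tilde{\varphi}$ terms and an averaged wave-action equation from the $\theta$-independent $\delta S$, the backward implication follows by localization, and the forward implication rests on the pointwise identity
$\left(\frac{\partial}{\partial x^\mu}+\partial_\mu S\,\frac{\partial}{\partial\theta}\right)\left[\frac{\partial\mathcal{L}}{\partial v^a_\mu}(j)\,\partial_\theta\tilde{\varphi}^a\right]=\partial_\theta\mathcal{L}(j)$,
which is exactly the paper's Eq.\,\eqref{proof_id_1} rewritten in your $P^\mu$ notation. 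Integrating this over $\theta$ and using periodicity to kill both the right-hand side and the $\partial_\mu S\,\partial_\theta P^\mu$ term recovers the redundant averaged $S$-equation just as the paper does.
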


\begin{proof}
First suppose that $\Phi=(\tilde{\varphi},S)\in\tilde{\mathcal{C}}$ is a critical point of $\tilde{A}_{U\times S^1}$ for each $U\subset M$. Introduce the indices $\tilde{a}\in\{1,\dots,f+1\}$, $\tilde{\mu}\in\{1,\dots,m+1\}$, as well as the shorthand notation
\begin{align}
J(X) = (X,\Phi(X),\partial\Phi(X))\in \tilde{M}\times \tilde{F}\times\tilde{D}.
\end{align} 
(We refer the reader to the text below Eq.\,\eqref{eq_16} for the definition of $\partial\Phi$.) Then it must be true that
\begin{align}\label{critical}
\int_{U\times S^1} \left[\frac{\partial\tilde{\mathcal{L}}}{\partial \Phi^{\tilde{a}}}(J(X))-\frac{\partial}{\partial X^{\tilde{\mu}}}\left(\frac{\partial\tilde{\mathcal{L}}}{\partial V_{\tilde{\mu}}^{\tilde{a}}}(J(X))\right)\right]\delta\Phi^{\tilde{a}}\,dX = 0,
\end{align}
for all $\delta\Phi\in \tilde{\mathcal{C}}$ that vanish on $\partial U\times S^1$. Because $\delta\Phi^{a} = \delta\tilde{\varphi}^{a}$ when $a\in\{1,\dots,f\}$ and $\delta\tilde{\varphi}(x,\theta)$ is arbitrary away from the boundary, Eq.\,\eqref{critical} implies
\begin{align}\label{extended_ele_x}
\frac{\partial\tilde{\mathcal{L}}}{\partial \tilde{\varphi}^{a}}(J(X))=\frac{\partial}{\partial x^{\mu}}\left(\frac{\partial\tilde{\mathcal{L}}}{\partial \tilde{v}_{\mu}^{a}}(J(X))\right)+\frac{\partial}{\partial\theta}\left(\frac{\partial\tilde{\mathcal{L}}}{\partial \zeta^a}(J(X))\right).
\end{align}
Likewise, because $\delta\Phi^{f+1} = \delta S$ and $\delta S(x)$ is arbitrary away from the boundary, we have
\begin{align}\label{extended_ele_s}
\frac{\partial}{\partial x^{\mu}}\int_{S^1}\left(\frac{\partial\tilde{\mathcal{L}}}{\partial \kappa_{\mu}}(J(X))\right)\,d\theta = 0.
\end{align} 
Here we have used $\frac{\partial\tilde{\mathcal{L}}}{\partial S}=0$, which follows from Eq.\,\eqref{extended_lag_gen}. We will refer to Eqs.\,\eqref{extended_ele_x} and \eqref{extended_ele_s} as the Euler-Lagrange equations associated with the looping $(\tilde{M},\tilde{\mathcal{C}},\tilde{\mathcal{L}})$, or the \emph{looped Euler-Lagrange equations}. By reading the previous argument in reverse, we see that $\Phi$ is a critical point of $\tilde{A}_{U\times S^1}$ for all $U\subset M$ if and only if $\Phi$ satisfies the looped Euler-Lagrange equations.

The derivatives of $\tilde{\mathcal{L}}$ that appear in Eqs.\,\eqref{extended_ele_x} and \eqref{extended_ele_s} may be expressed in terms of derivatives of $\mathcal{L}$ using the definition \eqref{extended_lag_gen} according to
\begin{align}
\frac{\partial\tilde{\mathcal{L}}}{\partial \tilde{\varphi}^a}(J(X))& =\frac{1}{2\pi} \frac{\partial\mathcal{L}}{\partial \varphi^a}(j(x,\theta))\\
\frac{\partial\tilde{\mathcal{L}}}{\partial \tilde{v}^a_\mu}(J(X)) &=\frac{1}{2\pi} \frac{\partial\mathcal{L}}{\partial v^a_\mu}(j(x,\theta))\\
\frac{\partial\tilde{\mathcal{L}}}{\partial \zeta^a}(J(X)) &=\frac{1}{2\pi}\partial_\mu S(x)\frac{\partial\mathcal{L}}{\partial v^a_\mu}(j(x,\theta))\\
\frac{\partial\tilde{\mathcal{L}}}{\partial \kappa_\mu}(J(X)) &= \frac{1}{2\pi}\frac{\partial\tilde{\varphi}^a}{\partial\theta}(x,\theta)\frac{\partial\mathcal{L}}{\partial v^a_\mu}(j(x,\theta)).
\end{align}
Therefore the looped Euler-Lagrange equations are equivalent to
\begin{gather}
 \frac{\partial\mathcal{L}}{\partial \varphi^a}(j(x,\theta)) = \left(\frac{\partial}{\partial x^\mu}+\partial_\mu S(x)\frac{\partial}{\partial\theta}\right)\left(\frac{\partial\mathcal{L}}{\partial v^a_\mu}(j(x,\theta))\right)\label{looped_ele_x}\\
 \frac{\partial}{\partial x^\mu} \int_{S^1}\frac{\partial\tilde{\varphi}^a}{\partial\theta}(x,\theta)\frac{\partial\mathcal{L}}{\partial v^a_\mu}(j(x,\theta))\,d\theta = 0,\label{looped_ele_s}
\end{gather}
where $j(x,\theta)$ was defined in Eq.\,\eqref{shorthand_j}. In particular, Eq.\,\eqref{looped_ele_x} reproduces the NL-WKB extension of $(M,\mathcal{C},\mathcal{L})$, i.e. Eq.\,\eqref{pre_extension}. This proves that each $\Phi$ that is a critical point of $\tilde{A}_{U\times S^1}$ for all $U\subset M$ is also a solution of the NL-WKB extension of $(M,\mathcal{C},\mathcal{L})$.

Now suppose conversely that $\Phi = (\tilde{\varphi},S)$ is a solution of the NL-WKB extension of $(M,\mathcal{C},\mathcal{L})$. The first of the looped Euler-Lagrange equations, i.e. Eq.\,\eqref{looped_ele_x}, is then clearly satisfied. However it is not immediately clear that the second equation \eqref{looped_ele_s} is also satisfied. To establish Eq.\,\eqref{looped_ele_s} first note that we have the following identity:
\begin{align}
\frac{\partial}{\partial\theta} \mathcal{L}(j(x,\theta)) =& \frac{\partial\tilde{\varphi}^a}{\partial\theta}(x,\theta) \frac{\partial\mathcal{L}}{\partial \varphi^a}(j(x,\theta))+\frac{\partial}{\partial\theta}\left(\frac{\partial\tilde{\varphi}^a}{\partial x^\mu}(x,\theta)+\partial_\mu S(x)\frac{\partial\tilde{\varphi}^a}{\partial\theta}\right)\frac{\partial\mathcal{L}}{\partial v^a_\mu}(j(x,\theta))\nonumber\\
=&\frac{\partial\tilde{\varphi}^a}{\partial\theta}(x,\theta) \left(\frac{\partial}{\partial x^\mu}+\partial_\mu S(x)\frac{\partial}{\partial\theta}\right)\left(\frac{\partial\mathcal{L}}{\partial v^a_\mu}(j(x,\theta))\right)\nonumber\\
&+\frac{\partial}{\partial\theta}\left(\frac{\partial\tilde{\varphi}^a}{\partial x^\mu}(x,\theta)+\partial_\mu S(x)\frac{\partial\tilde{\varphi}^a}{\partial\theta}\right)\frac{\partial\mathcal{L}}{\partial v^a_\mu}(j(x,\theta)),\label{proof_id_1}
\end{align}
where we have used the chain rule on the first line, and the NL-WKB extension of $(M,\mathcal{C},\mathcal{L})$ (cf. Eq.\,\eqref{pre_extension}) on the second line. Next integrate Eq.\,\eqref{proof_id_1} over $S^1$ and apply integration by parts as follows:
\begin{align}
0 = &\int_{S^1} \frac{\partial\tilde{\varphi}^a}{\partial\theta}(x,\theta) \left(\frac{\partial}{\partial x^\mu}+S_{,\mu}(x)\frac{\partial}{\partial\theta}\right)\left(\frac{\partial\mathcal{L}}{\partial v^a_\mu}(j(x,\theta))\right)\,d\theta\nonumber\\
&+\int_{S^1}\frac{\partial}{\partial\theta}\left(\frac{\partial\tilde{\varphi}^a}{\partial x^\mu}(x,\theta)+S_{,\mu}(x)\frac{\partial\tilde{\varphi}^a}{\partial\theta}\right)\frac{\partial\mathcal{L}}{\partial v^a_\mu}(j(x,\theta))\,d\theta\nonumber\\
=&\int_{S^1} \frac{\partial\tilde{\varphi}^a}{\partial\theta}(x,\theta) \left(\frac{\partial}{\partial x^\mu}+S_{,\mu}(x)\frac{\partial}{\partial\theta}\right)\left(\frac{\partial\mathcal{L}}{\partial v^a_\mu}(j(x,\theta))\right)\,d\theta\nonumber\\
&-\int_{S^1}\left(\frac{\partial\tilde{\varphi}^a}{\partial x^\mu}(x,\theta)+S_{,\mu}(x)\frac{\partial\tilde{\varphi}^a}{\partial\theta}\right)\frac{\partial}{\partial\theta}\frac{\partial\mathcal{L}}{\partial v^a_\mu}(j(x,\theta))\,d\theta\nonumber\\
=&\int_{S^1} \frac{\partial\tilde{\varphi}^a}{\partial\theta}(x,\theta) \frac{\partial}{\partial x^\mu}\left(\frac{\partial\mathcal{L}}{\partial v^a_\mu}(j(x,\theta))\right)\,d\theta-\int_{S^1}\frac{\partial\tilde{\varphi}^a}{\partial x^\mu}(x,\theta)\frac{\partial}{\partial\theta}\frac{\partial\mathcal{L}}{\partial v^a_\mu}(j(x,\theta))\,d\theta\nonumber\\
=&\frac{\partial}{\partial x^\mu}\int_{S^1} \frac{\partial\tilde{\varphi}^a}{\partial\theta}(x,\theta) \frac{\partial\mathcal{L}}{\partial v^a_\mu}(j(x,\theta))\,d\theta - \int_{S^1} \frac{\partial^2\tilde{\varphi}^a}{\partial x^\mu\partial\theta}(x,\theta) \left(\frac{\partial\mathcal{L}}{\partial v^a_\mu}(j(x,\theta))\right)\,d\theta\nonumber\\
&+\int_{S^1}\frac{\partial^2\tilde{\varphi}^a}{\partial x^\mu\partial\theta}(x,\theta)\frac{\partial\mathcal{L}}{\partial v^a_\mu}(j(x,\theta))\,d\theta\nonumber\\
=&\frac{\partial}{\partial x^\mu}\int_{S^1} \frac{\partial\tilde{\varphi}^a}{\partial\theta}(x,\theta) \frac{\partial\mathcal{L}}{\partial v^a_\mu}(j(x,\theta))\,d\theta,
\end{align}
where have used the equality of mixed partial derivatives. This completes the proof. 
\end{proof}
\begin{remark}
The preceding proof demonstrated the presence of a redundancy in the looped Euler-Lagrange equations \eqref{looped_ele_x}-\eqref{looped_ele_s}. The reason for this redundancy is the presence of a gauge symmetry. The gauge group is given by smooth functions $\psi:M\rightarrow S^1$ with addition as the group composition law. The action of $\psi$ on $(\tilde{\varphi},S)$ is given by $\psi\cdot (\tilde{\varphi},S) = (\tilde{\varphi}^\prime,S^\prime)$, where
\begin{align}
\tilde{\varphi}^\prime(x,\theta) =& \tilde{\varphi}(x,\theta-\psi(x))\\
S^\prime(x)=&S(x)+\psi(x).
\end{align}
The redundancy in the looped Euler-Lagrange equations, i.e. the fact that \Eq{looped_ele_x} implies \Eq{looped_ele_s}, may be seen as a consequence of gauge symmetry by applying Noether's second theorem. The presence of this gauge symmetry could have been anticipated by noting that the nonlinear WKB ansatz does not uniquely specify the phase function $S$.
\end{remark}

\section{Euler-Poincar\'e fluids as classical field theories\label{sec:three}}
While the results from Section \ref{sec:two} are useful for identifying variational principles that govern the nonlinear WKB extension of a large class of dissipation-free PDE, they are not immediately applicable to many of the PDEs that appear in fluid dynamics. In particular, they cannot be applied directly to the fluid-mechanical PDEs that arise from Euler-Poincar\'e variational principles.\cite{Holm_1998} The essential issue is that, as we will review, Euler-Poincar\'e variational principles do not fit into the mold of classical field theory. The purpose of this section is to construct an alternative variational principle for Euler-Poincar\'e fluid equations to which the Whitham averaging (i.e. Theorem \ref{whitham_thm}) can be profitably applied. We will show how to use Whitham averaging to identify a variational principle for the NL-WKB extension of Euler-Poincar\'e fluid equations in the following section.

We will restrict our attention to a large subclass of Euler-Poincar\'e fluid equations defined as follows.

\begin{definition}[LBEP equations]
Given a function $\mathcal{L}_{\text{EP}}:\mathbb{R}^3\times\mathbb{R}\times\mathbb{R}^3\rightarrow \mathbb{R}:(\bm{u},\rho,\del\rho)\mapsto \mathcal{L}(\bm{u},\rho,\del\rho)$ such that $\bm{u}\mapsto (\partial\mathcal{L}_{\text{EP}}/\partial\bm{u})(\bm{u},\rho,\del\rho)$ is a diffeomorphism for each $(\rho,\del\rho)\in\mathbb{R}\times\mathbb{R}^3$, the associated \emph{local barotropic Euler-Poincar\'e fluid equations} (LBEP equations) are the system of PDEs
\begin{gather}
\partial_t\rho+\del\cdot(\rho\bm{u}) = 0\\
\partial_t\frac{\partial\mathcal{L}_{\text{EP}}}{\partial\bm{u}}+\del\cdot\left(\bm{u}\otimes \frac{\partial\mathcal{L}_{\text{EP}}}{\partial\bm{u}}\right)  \nonumber\\=\del\cdot\left(\left[\rho\frac{\partial\mathcal{L}_{\text{EP}}}{\partial\rho}-\rho\del\cdot\frac{\partial\mathcal{L}_{\text{EP}}}{\partial\del\rho} - \mathcal{L}_{\text{EP}}\right]\mathbb{I}+\frac{\partial\mathcal{L}_{\text{EP}}}{\partial\del\rho}\otimes\del\rho \right),\label{mom_eqn_LBEP_def}
\end{gather} 
where the unknown fields are $(\rho(\bm{x},t),\bm{u}(\bm{x},t))\in\mathbb{R}\times\mathbb{R}^3$ and all derivatives of $\mathcal{L}_{\text{EP}}$ are evaluated at$(\bm{u}(\bm{x},t),\rho(\bm{x},t),\del\rho(\bm{x},t))$. The function $\mathcal{L}_{\text{EP}}$ is called the \emph{Euler-Poincar\'e Lagrange density} and our convention for the tensor divergence is $(\del\cdot T)_j = \partial_i T_{ij}$. We have also introduced the notation $\otimes$ for the point-wise tensor product, i.e. the tensor product over the ring $C^\infty(Q)$.
\end{definition}
\noindent Upon introducing the EP Hamiltonian density $\mathcal{H}_{\text{EP}}$, the LBEP equations may also be conveniently written in terms of the momentum density $\bm{p} = \partial\mathcal{L}_{\text{EP}}/\partial\bm{u}$ as follows.
\begin{definition}
Let $\underline{\bm{u}}:\mathbb{R}^3\times\mathbb{R}\times\mathbb{R}^3\rightarrow\mathbb{R}^3$ be defined implicitly by the formula
\begin{align}
\frac{\pd \mc{L}_{\text{EP}}}{\pd \bm{u}}(\underline{\bm{u}}(\bm{p},\rho,\del\rho),\rho,\del\rho) = \bm{p}.\label{u_underline}
\end{align} 
The \emph{EP Hamiltonian density} $\mathcal{H}_{\text{EP}}:\mathbb{R}^3\times\mathbb{R}\times \mathbb{R}^3\rightarrow \mathbb{R}$ is defined by
\begin{gather}
	\mathcal{H}_{\text{EP}}(\vec{p},\rho, \del \rho) 
		\doteq  \underline{\vec{u}}(\bm{p},\rho,\del\rho)\cdot \vec{p} 
					- \mathcal{L}_{\text{EP}} (\underline{\vec{u}}(\bm{p},\rho,\del\rho),\rho, \del \rho).\label{eq:EP:Ham}
\end{gather}
\end{definition}
\begin{lemma}[LBEP equations, momentum form]\label{mLBEP_eqs}
The LBEP equations for the unknown fields $(\rho(\bm{x},t),\bm{u}(\bm{x},t))\in\mathbb{R}\times\mathbb{R}^3$ are equivalent to the following system of PDEs for the unknown fields $(\rho(\bm{x},t),\bm{p}(\bm{x},t))\in\mathbb{R}\times\mathbb{R}^3$:
\begin{gather}
\partial_t\rho+\del\cdot\left(\rho\frac{\pd\mc{H}_{\text{EP}}}{\pd \bm{p}}\right) = 0\label{ham_cont_def}\\
\partial_t\vec{p} 
		+ 	\del \cdot \left( \frac{\pd\mc{H}_{\text{EP}}}{\pd\bm{p}}\otimes \bm{p} +  \frac{\pd \mathcal{H}_{\text{EP}}}{ \pd \del \rho}  \otimes \del \rho \right)\nonumber\\ 
		=	-\del \left( \rho   \frac{\pd \mathcal{H}_{\text{EP}}}{\pd \rho}    
				-  \rho\del \cdot \frac{\pd \mathcal{H}_{\text{EP}}}{ \pd \del \rho} 
				+ 	\bm{p}\cdot\frac{\partial\mathcal{H}_{\text{EP}}}{\partial\bm{p}}-\mathcal{H}_{\text{EP}} \right).\label{ham_mom_def}
\end{gather}
We will refer to Eqs.\,\eqref{ham_cont_def} and \eqref{ham_mom_def} as the \emph{momentum form} of the LBEP equations, or mLBEP equations for brevity.
\end{lemma}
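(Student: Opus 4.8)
The plan is to recognize Lemma \ref{mLBEP_eqs} as an instance of the standard duality between Lagrangian and Hamiltonian descriptions, implemented here by the fiberwise Legendre transform in the velocity variable. The hypothesis that $\bm{u}\mapsto(\partial\mathcal{L}_{\text{EP}}/\partial\bm{u})$ is a diffeomorphism for each fixed $(\rho,\del\rho)$ is precisely what guarantees this transform is invertible, with inverse $\underline{\bm{u}}$ defined in Eq.\,\eqref{u_underline}. I would first record the elementary consequences of the definition \eqref{eq:EP:Ham}. Differentiating $\mathcal{H}_{\text{EP}}=\underline{\bm{u}}\cdot\bm{p}-\mathcal{L}_{\text{EP}}(\underline{\bm{u}},\rho,\del\rho)$ with respect to $\bm{p}$ at fixed $(\rho,\del\rho)$, the terms proportional to $\partial\underline{\bm{u}}/\partial\bm{p}$ cancel because $\partial\mathcal{L}_{\text{EP}}/\partial\bm{u}=\bm{p}$, yielding
\begin{align}
\frac{\partial\mathcal{H}_{\text{EP}}}{\partial\bm{p}}=\underline{\bm{u}}.
\end{align}
The same cancellation, now differentiating at fixed $\bm{p}$, gives the pair of relations
\begin{align}
\frac{\partial\mathcal{H}_{\text{EP}}}{\partial\rho}=-\frac{\partial\mathcal{L}_{\text{EP}}}{\partial\rho},\qquad \frac{\partial\mathcal{H}_{\text{EP}}}{\partial\del\rho}=-\frac{\partial\mathcal{L}_{\text{EP}}}{\partial\del\rho},
\end{align}
together with the Legendre identity $\mathcal{L}_{\text{EP}}=\bm{p}\cdot\partial\mathcal{H}_{\text{EP}}/\partial\bm{p}-\mathcal{H}_{\text{EP}}$. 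In all of these, the left-hand sides are evaluated at $(\bm{p},\rho,\del\rho)$ and the right-hand sides at $(\underline{\bm{u}}(\bm{p},\rho,\del\rho),\rho,\del\rho)$.

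With these identities in hand, the equivalence becomes a matter of substitution. Given a solution $(\rho,\bm{u})$ of the LBEP equations, define $\bm{p}=(\partial\mathcal{L}_{\text{EP}}/\partial\bm{u})(\bm{u},\rho,\del\rho)$, so that $\bm{u}=\underline{\bm{u}}(\bm{p},\rho,\del\rho)=\partial\mathcal{H}_{\text{EP}}/\partial\bm{p}$ pointwise. The continuity equation is then literally unchanged, reproducing Eq.\,\eqref{ham_cont_def}. In the momentum equation, the left-hand side $\partial_t(\partial\mathcal{L}_{\text{EP}}/\partial\bm{u})+\del\cdot(\bm{u}\otimes\partial\mathcal{L}_{\text{EP}}/\partial\bm{u})$ becomes $\partial_t\bm{p}+\del\cdot((\partial\mathcal{H}_{\text{EP}}/\partial\bm{p})\otimes\bm{p})$ immediately. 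For the right-hand side I would insert the three relations above, replacing $\partial\mathcal{L}_{\text{EP}}/\partial\rho$, $\partial\mathcal{L}_{\text{EP}}/\partial\del\rho$, and $\mathcal{L}_{\text{EP}}$ by their Hamiltonian counterparts, and then use the elementary identity $\del\cdot(f\mathbb{I})=\del f$ to convert the scalar tensor-divergence into a gradient. Transposing the term $\del\cdot((\partial\mathcal{H}_{\text{EP}}/\partial\del\rho)\otimes\del\rho)$ to the left-hand side then matches Eq.\,\eqref{ham_mom_def} term by term, with the sign flips coming from the Legendre relations conspiring to produce the overall minus sign in front of the pressure-like gradient.

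Finally, I would observe that the argument is reversible: because the Legendre transform is a bijection under the stated diffeomorphism hypothesis, starting from a solution $(\rho,\bm{p})$ of the mLBEP equations and setting $\bm{u}=\underline{\bm{u}}(\bm{p},\rho,\del\rho)$ runs every step above backwards, recovering the LBEP equations. I do not expect any serious obstacle here; the only point demanding care is the bookkeeping of which variables are held fixed in each partial derivative, since the relations $\partial\mathcal{H}_{\text{EP}}/\partial\rho=-\partial\mathcal{L}_{\text{EP}}/\partial\rho$ and $\partial\mathcal{H}_{\text{EP}}/\partial\del\rho=-\partial\mathcal{L}_{\text{EP}}/\partial\del\rho$ hold only when the left sides are taken at fixed $\bm{p}$ and the right sides at fixed $\bm{u}$. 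Keeping the evaluation points aligned through the chain rule is the one place where a sign or variable error could creep in.
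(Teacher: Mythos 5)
Your proposal is correct and follows essentially the same route as the paper's proof: both derive the Legendre-transform identities $\partial\mathcal{H}_{\text{EP}}/\partial\bm{p}=\underline{\bm{u}}$, $\partial\mathcal{H}_{\text{EP}}/\partial\rho=-\partial\mathcal{L}_{\text{EP}}/\partial\rho$, $\partial\mathcal{H}_{\text{EP}}/\partial\del\rho=-\partial\mathcal{L}_{\text{EP}}/\partial\del\rho$ by implicit differentiation of Eq.\,\eqref{eq:EP:Ham}, and then pass between the two systems by substitution, with the diffeomorphism hypothesis on $\bm{u}\mapsto\partial\mathcal{L}_{\text{EP}}/\partial\bm{u}$ supplying the bijection between solution sets. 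The only cosmetic difference is that the paper packages the two directions as surjectivity and injectivity of the map $I:(\rho,\bm{p})\mapsto(\rho,\underline{\bm{u}})$, whereas you phrase it as reversibility of the substitution, which amounts to the same thing.
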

\begin{proof}
By differentiating the definition \eqref{eq:EP:Ham} of $\mc{H}_{\text{EP}}$ and substituting the definition \eqref{u_underline} of $\underline{\bm{u}}$, we obtain
\begin{align}
\frac{\pd\mc{H}_{\text{EP}}}{\pd \bm{p}} (\bm{p},\rho,\del\rho)&= \underline{\bm{u}}(\bm{p},\rho,\del\rho)\label{eq:implicit_diff_first}\\
\frac{\pd\mc{H}_{\text{EP}}}{\pd\rho}(\bm{p},\rho,\del\rho) &= -\frac{\pd\mc{L}_{\text{EP}}}{\pd\rho}(\underline{\bm{u}}(\bm{p},\rho,\del\rho),\rho,\del\rho) \\
\frac{\pd\mc{H}_{\text{EP}}}{\pd\del\rho}(\bm{p},\rho,\del\rho) &= -\frac{\pd\mc{L}_{\text{EP}}}{\pd\del\rho}(\underline{\bm{u}}(\bm{p},\rho,\del\rho),\rho,\del\rho).\label{eq:implicit_diff_last}
\end{align}
If $(\rho,\bm{p})$ is a solution of the mLBEP equations, then the identities \eqref{eq:implicit_diff_first}-\eqref{eq:implicit_diff_last} imply that $(\rho,\underline{\bm{u}}(\bm{p},\rho,\del\rho))$ is a solution of the LBEP equations. This shows that $I:(\rho,\bm{p})\mapsto (\rho,\underline{\bm{u}}(\bm{p},\rho,\del\rho))$ maps solutions of the mLBEP equations into solutions of the LBEP equations. If $(\rho,\bm{u})$ is a solution of the LBEP equations, then the identities \eqref{eq:implicit_diff_first}-\eqref{eq:implicit_diff_last} imply $(\rho,\pd\mc{L}_{\text{EP}}/\pd\bm{u}(\bm{u},\rho,\del\rho))$ is a solution of mLBEP equations. This shows that the mapping $I$ is surjective. Injectivity of $I$ follows from the hypothesis that $\bm{u}\mapsto \pd\mc{L}_{\text{EP}}/\pd \bm{u}$ is a diffeomorphism. The mapping $I$ therefore establishes a bijection between solutions of the LBEP equations and solutions of the mLBEP equations.
\end{proof}

 We aim to identify an ordinary first-order classical field theory whose associated Euler-Lagrange equations reproduce the LBEP equations. 
In order to illustrate why this task is non-trivial, let us briefly review the Euler-Poincar\'e variational formulation of the LBEP equations described in Ref.\,\onlinecite{Holm_1998}. The basic idea is to introduce the space of Lagrangian configuration maps $\bm{g}:Q_0\rightarrow Q$. A Lagrangian configuration map is a diffeomorphism that assigns to each \emph{particle label} $\bm{x}_0\in Q_0$ the current Eulerian position of that fluid particle $\bm{x} = \bm{g}(\bm{x}_0)$. The space $Q_0$ is referred to as the label space, while the space $Q$ is the fluid container. 
Set $Q=(S^1)^3$, fix a positive function $\rho_0:Q_0\rightarrow \mathbb{R}$, and consider the action
\begin{equation}
	\mc{A}_{\rho_0}(\bm{g})=\int_{t_1}^{t_2}  L_{\rho_0}(\bm{g}(t),\dot{\bm{g}}(t))\, \mathrm{d}t,
	\label{eq:EP:act_original}
\end{equation}
where $\bm{g}:[t_1,t_2]\rightarrow \text{Diff}(Q_0,Q)$. Define the fluid velocity $\bm{v}$ and the mass density $\rho$ according to
\begin{align}
\bm{v}(\bm{x}) & = \dot{\bm{g}}(\bm{g}^{-1}(\bm{x}))\\
\rho(\bm{x}) &= \frac{1}{\text{det}(\del_0\bm{g})(\bm{g}^{-1}(\bm{x}))}\rho_0(\bm{g}^{-1}(\bm{x})).
\end{align}
When the Lagrangian $L_{\rho_0}:T\text{Diff}(Q_0,Q)\rightarrow \mathbb{R}$ is given by
\begin{align}
L_{\rho_0}(\bm{g},\dot{\bm{g}}) = \int_Q \mathcal{L}_{\text{EP}}(\bm{v}(\bm{x}),\rho(\bm{x}),\del\rho(\bm{x}))\,d^3\bm{x},
\end{align}
we can establish a close relationship between the LBEP equations and the Euler-Lagrange equations associated with $L_{\rho_0}$. To see this, observe first that
because $\vec{v}$ and $\rho$ are defined in terms of the configuration map $\bm{g}$, they cannot be varied independently. Instead, variations of $\bm{g}$ \emph{induce} variations of $\bm{v}$ and $\rho$ as follows. Given a variation $\delta\bm{g}$ of $\bm{g}$, we may construct an ``Eulerianized" variation $\bm{\xi} = \delta\bm{g}\circ \bm{g}^{-1}$. The variations of $\bm{v}$ and $\rho$ may then be computed as
\begin{gather}
	\delta\vec{v}= \dot{\vec{\xi}} + ( \vec{v} \cdot \del ) \vec{\xi} - (\vec{\xi} \cdot \del)  \vec{v},
	\label{eq:EP:variation_v}  \\
	\delta \rho = - \del \cdot ( \rho\vec{\xi} ) .
	\label{eq:EP:variation_rho}
\end{gather}
The Euler--Lagrange (EL) equations may therefore be obtained by varying the action $\mc{A}_{\rho_0}$ with respect to $t\mapsto \bm{g}(t)$ and making judicious use of the induced variation formulas \eqref{eq:EP:variation_v}-\eqref{eq:EP:variation_rho}. This leads to
\begin{equation}
	0=	\delta \mc{A}_{\rho_0}
		=	\int_{t_1}^{t_2} \int_Q \left( 
		\frac{\partial \mathcal{L}_{\text{EP}}}{\partial \vec{v}} \cdot \delta \vec{v}
		+ \frac{\partial \mathcal{L}_{\text{EP}}}{\partial \rho} \delta \rho
		+ \frac{\partial \mathcal{L}_{\text{EP}}}{ \partial \del \rho} \cdot \del \delta \rho \right)   \, \mathrm{d}^3 \vec{x} \, \mathrm{d}t\,.
	\label{eq:EP:variation_action}
\end{equation}
Substituting \Eqs{eq:EP:variation_v} and \eq{eq:EP:variation_rho} and integrating by parts leads to
\begin{equation}
	\frac{\pd}{\pd t} \vec{p} + \frac{\pd}{\pd x^i }( v^i \vec{p}) + p_i \del v^i  
		= \rho \del \left( \frac{\partial \mathcal{L}_{\text{EP}}}{\partial \rho} - \del \cdot \frac{\partial \mathcal{L}_{\text{EP}}}{ \partial \del \rho}  \right),
	\label{eq:EP:variation_mom}
\end{equation}
where $\vec{p}$ is the momentum density
\begin{equation}
	\vec{p} \doteq \frac{\partial \mathcal{L}_{\text{EP}}}{\partial \vec{v}}(\bm{v},\rho,\del\rho).
	\label{eq:EP:mom}
\end{equation}
Equation \eq{eq:EP:variation_mom} is a generalized momentum conservation equation. It can be written in conservative form as well. A simple calculation leads to
\begin{equation}
	\frac{\pd}{\pd t} \vec{p} 
		+ 	\del \cdot \left(  \vec{v} \otimes \vec{p} -  \frac{\partial \mathcal{L}_{\text{EP}}}{ \partial \del \rho}  \otimes \del \rho \right) 
		=	\del \left( \rho   \frac{\partial \mathcal{L}_{\text{EP}}}{\partial \rho}    
				-  \rho\del \cdot \frac{\partial \mathcal{L}_{\text{EP}}}{ \partial \del \rho} 
				- 	\mc{L}_{\text{EP}} \right).
	\label{eq:EP_mom_conservative}
\end{equation}
Note that by substituting Eq.\,\eqref{eq:EP:mom} and identifying $\vec{v}$ with $\vec{u}$, Eq.\,\eqref{eq:EP_mom_conservative} becomes the momentum equation \eqref{mom_eqn_LBEP_def} of the LBEP equations. In addition, since the fluid mass density $\rho$ is defined in terms of $\bm{g}$ by \Eq{density_advection}, it satisfies by construction the continuity equation
\begin{equation}
	\pd_t \rho + \del \cdot ( \vec{v} \rho ) =0.
	\label{eq:EP:continuity_aux}
\end{equation}
In particular, Eq.\,\eqref{eq:EP:continuity_aux} is \emph{not} a consequence of the Euler-Lagrange equations. It follows that the Euler-Lagrange equations associated with the action $\mathcal{A}_{\rho_0}$ may be regarded as a second-order ordinary differential equation in the variable $\bm{g}\in \text{Diff}(Q_0,Q)$, or, equivalently, a first-order ordinary differential equation in the variables $(\bm{g},\bm{u})\in \text{Diff}(Q_0,Q)\times \mathfrak{X}(Q)\approx T\text{Diff}(Q_0,Q)$; the evolution equation for $\bm{g}$ is $\bm{v}=\dot{\bm{g}}\circ\bm{g}^{-1} = \bm{u}$ and the evolution equation for $\bm{u}$ is given by substituting $\bm{v} = \bm{u}$ in Eq.\,\eqref{eq:EP_mom_conservative}. Note in particular that if $t\mapsto (\bm{g}(t),\bm{u}(t))$ is a solution of the Euler-Lagrange equations, then $t\mapsto (\bm{u}(t),\rho(t))$ is a solution of the LBEP equations when Eq.\,\eqref{eq:EP:continuity_aux} is used to define the mass density $\rho$.

Following the discussion in \Sec{sec:two}, we would like to find a variational formulation for the NL-WKB extension of the LBEP equations. Whitham averaging would seem to be a natural tool for this task. However, there are certain idiosyncrasies of the EP action principle $\delta\mc{A}_{\rho_0}=0$ that prevent us from directly applying the Whitham averaging theorem (Theorem \ref{whitham_thm}). These are the following. First, the Lagrangian $L_{\rho_0}$ depends on the configuration map $\bm{g}$, whose domain is the label space $Q_0\neq Q$. 
In contrast, Theorem \ref{whitham_thm} applies to Lagrangians defined on spaces of fields over spacetime $M = Q\times\mathbb{R}$. Second, for any given $\rho_0$, not \emph{all} of the solutions of the LBEP can be recovered from the EP action principle $\delta\mathcal{A}_{\rho_0} = 0$.  Indeed, while the space of solutions of the LBEP equations on $Q= (S^1)^3$ is parameterized by initial data $(\rho,\bm{u})\in C_+^\infty(Q)\times \mathfrak{X}(Q)$, solutions of the Euler-Lagrange equations associated with $\mathcal{A}_{\rho_0}$ cannot accommodate all possible initial $\rho$.  In fact, solutions of the Euler-Lagrange equations for $\bm{g}$ can only recover solutions of the LBEP equations with initial $\rho$ that may be related to the parameter $\rho_0$ by some diffeomorphism $\bm{g}:Q_0\rightarrow Q$ using the formula \eqref{density_advection}. Note in particular that initial $\rho$ with $\int \rho\,d^3\bm{x}\neq \int \rho_0 \,d^3\bm{x}_0$ cannot be obtained in this manner. In contrast, the PDEs addressed by the Whitham averaging theorem all have the property that solutions of the PDE are precisely the critical points of a single action functional, rather than a family of action functionals like $\mathcal{A}_{\rho_0}$. Finally, and most superficially, since the mass density $\rho$ is defined using the Jacobian of the mapping $\bm{g}$ (see \Eq{density_advection}), second derivatives of $\bm{g}$ appear in the Lagrangian $L_{\rho_0}$. This suggests that theory of Whitham averaging described in Section \ref{sec:two} for first-order field theories cannot be applied.

In order to eventually bring Whitham averaging to bear on the problem of NL-WKB extension of the LBEP equations, we will construct an alternative variational formulation of the LBEP equations that fits into the framework of first-order classical field theory. We proceed as follows. (1) First, let us introduce the inverse of the configuration map, $\bm{h}\doteq  \bm{g}^{-1}$, which is also known as the \textit{back-to-labels map}. Conveniently, $\bm{h}$ is a mapping from the spatial domain $Q$ to the label space $Q_0$, and may therefore be regarded as a $Q_0$-valued field. The motivation here is that while the ansatz $\bm{g}(\bm{x}_0)=\tilde{\bm{g}}(\bm{x}_0,S(\bm{x}_0))$ requires evaluating the phase function on points in the label space, the ansatz $\bm{h}(\bm{x}) = \tilde{\bm{h}}(\bm{x},S(\bm{x}))$ does not.
We then substitute $\bm{g} = \bm{h}^{-1}$ into the action \Eq{eq:EP:act_original}. The velocity field $\vec{v}$, which was originally given by $\vec{v} \doteq \dot{\bm{g}} \circ \bm{g}^{-1}$, is now written in terms of $\bm{h}$ as
\begin{equation}
	\vec{v} = - \dot{\bm{h}} \cdot  (\del \bm{h})^{-1}  .
	\label{eq:EP_velocity_h}
\end{equation}
It can be shown that variations of $\vec{v}$ and $\rho$ with respect to $\bm{h}$ are still given by \Eq{eq:EP:variation_v} and \Eq{eq:EP:variation_rho}, but now the vector field $\vec{\xi}$ is written as $\vec{\xi} = - \delta \bm{h} \cdot  (\del \bm{h})^{-1}$. (2) Next we  construct the parameter-dependent \emph{phase space Lagrangian} $\mathsf{L}_{\rho_0}(\bm{h},\dot{\bm{h}},\bm{p},\dot{\bm{p}})$ given by
\begin{align}
\mathsf{L}_{\rho_0}(\bm{h},\dot{\bm{h}},\bm{p},\dot{\bm{p}}) = \int_Q \bm{p}\cdot\bm{v}\,d^3\bm{x} - \int_Q \mathcal{H}_{\text{EP}}(\bm{p}(\bm{x}),\rho(\bm{x}),\del\rho(\bm{x}))\,d^3\bm{x}.
\end{align}
The associated parameter-dependent phase space action functional, 
\begin{align}
\mathsf{A}_{\rho_0}(\bm{h},\bm{p}) =\int_{t_1}^{t_2} \mathsf{L}_{\rho_0}(\bm{h}(t),\dot{\bm{h}}(t),\bm{p}(t),\dot{\bm{p}}(t))\,dt,
\end{align}
is defined on the space of paths $[t_1,t_2]\rightarrow \text{Diff}(Q,Q_0)\times \mathfrak{X}(Q)$. This implies that variations are to be applied to $\bm{h}$ and $\bm{p}$ independently while holding the values of $\bm{h}$ and $\bm{p}$ fixed at $t_1$ and $t_2$. (3) Finally, we introduce a scalar function $\chi:Q\rightarrow\mathbb{R}$ as a Lagrange multiplier that enforces the continuity equation as in Section 4.2 of Ref.\,\onlinecite{Cotter_Holm_2012}. This leads to the parameter-independent phase space Lagrangian $\mathsf{L}(\bm{h},\dot{\bm{h}},\bm{p},\dot{\bm{p}},\rho,\dot{\rho},\chi,\dot{\chi})$ given by
\begin{align}
\mathsf{L}(\bm{h},\dot{\bm{h}},\bm{p},\dot{\bm{p}},\rho,\dot{\rho},\chi,\dot{\chi}) = \int_Q \bm{p}\cdot\bm{v}\,d^3\bm{x} + \int_Q (\dot{\chi}+\bm{v}\cdot\del\chi)\,\rho\,d^3\bm{x}-\int_Q \mathcal{H}_{\text{EP}}(\bm{p},\rho,\del\rho)\,d^3\bm{x},\label{psl_no_param}
\end{align}
where the velocity $\bm{v}$ is defined in terms of $\bm{h}$ as in \Eq{eq:EP_velocity_h}. The Lagrangian $\mathsf{L}$ is intrinsically a function on $T\mathcal{C}_0$, where $\mathcal{C}_0$ the space of \emph{frozen field configurations}.
\begin{definition}\label{frozen_field_def}
The space of frozen field configurations is the infinite-dimensional manifold $\mathcal{C}_0 = \text{Diff}(Q,Q_0)\times \mathfrak{X}(Q)\times C^\infty_+(Q)\times C^\infty(Q)$, where $C^\infty_+(Q)$ is the set of smooth positive functions on $Q$, and $\mathfrak{X}(Q)$ is the set of vector fields on $Q$. That is, $\mc{C}_0$ comprises maps $Q\ni \bm{x}\mapsto (\bm{h}(\bm{x}),\bm{p}(\bm{x}),\rho(\bm{x}),\chi(\bm{x}))\in Q_0\times\mathbb{R}^3\times\mathbb{R}\times\mathbb{R}$, where $\bm{h}$ is a diffeomorphism and $\rho(\bm{x})>0$ for all $\bm{x}\in Q$.
\end{definition} 
\noindent Correspondingly, the parameter-independent phase space action functional,
\begin{align}
\mathsf{A}(\bm{h},\bm{p},\rho,\chi) = \int_{t_1}^{t_2}\mathsf{L}(\bm{h}(t),\dot{\bm{h}}(t),\bm{p}(t),\dot{\bm{p}}(t),\rho(t),\dot{\rho}(t),\chi(t),\dot{\chi}(t))\,dt,\label{eq:EP:act}
\end{align}
is defined on the space of paths $[t_1,t_2]\rightarrow \mathcal{C}_0$, which implies that variations should be applied to $\bm{h}$, $\bm{p}$, $\rho$, and $\chi$ independently.

%

The following proposition shows that the Euler-Lagrange equations associated with $\mathsf{A}$ define a system of PDEs that completely recover the LBEP equations.

\begin{proposition}\label{extended_vp}
A path $t\mapsto (\bm{h}(t),\bm{p}(t),\rho(t),\chi(t))\in \mathcal{C}_0$ is a critical point of the action functional $\mathsf{A}$ in Eq.\,\eqref{eq:EP:act} if and only $\bm{h}$, $\bm{p}$, $\rho$, and $\chi$ satisfy the following system of PDEs:
\begin{gather}
\partial_t\bm{h}=-\frac{\partial\mathcal{H}_{\text{EP}}}{\partial\bm{p}}\cdot\del\bm{h}\label{eq:extended_first}\\
\partial_t\vec{p} 
		+ 	\del \cdot \left(\frac{\pd\mc{H}_{\text{EP}}}{\pd\bm{p}}\otimes \bm{p}  +  \frac{\pd \mathcal{H}_{\text{EP}}}{ \pd \del \rho}  \otimes \del \rho \right) \nonumber\\
		=	-\del \left( \rho   \frac{\pd \mathcal{H}_{\text{EP}}}{\pd \rho}    
				-  \rho\del \cdot \frac{\pd \mathcal{H}_{\text{EP}}}{ \pd \del \rho} 
				+ 	\bm{p}\cdot\frac{\partial\mathcal{H}_{\text{EP}}}{\partial\bm{p}}-\mathcal{H}_{\text{EP}} \right)\label{eq:ham_momentum}\\
				\partial_t\rho +\del\cdot\left(\rho\frac{\partial\mathcal{H}_{\text{EP}}}{\partial\bm{p}}\right)= 0\label{eq:ham_cont}\\
				\partial_t \chi + \frac{\pd\mc{H}_{\text{EP}}}{\pd\bm{p}} \cdot \del  \chi 
		=  \frac{\partial \mathcal{H}_{\text{EP}}}{\partial \rho}  
					-\del \cdot \frac{\partial \mc{H}_{\text{EP}}}{\partial \del \rho}.\label{eq:extended_last}
\end{gather}
Moreover, every solution $t\mapsto (\bm{u}(t),\rho(t))$ of the LBEP equations may be obtained from some solution $t\mapsto (\bm{h}(t),\bm{p}(t),\rho(t),\chi(t))$ of the Euler-Lagrange equations associated with $\mathsf{A}$ by defining $\bm{u}(t)$ using the Legendre transform
\begin{align}
\frac{\partial\mathcal{L}_{\text{EP}}}{\partial\bm{u}}(\bm{u}(\bm{x},t),\rho(\bm{x},t),\del\rho(\bm{x},t)) = \bm{p}(\bm{x},t).
\end{align}
\end{proposition}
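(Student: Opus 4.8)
The plan is to split the proof into two parts mirroring the two assertions: first the ``if and only if'' characterization of critical points through the four Euler--Lagrange equations \eqref{eq:extended_first}--\eqref{eq:extended_last}, and second the statement that every LBEP solution is recovered. Throughout, I would exploit that $Q=(S^1)^3$ has no spatial boundary and that the temporal endpoints are held fixed, so every integration by parts in $t$ and in $\bm{x}$ is boundary-free.

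For the first part, I would vary $\mathsf{A}$ with respect to each of $\bm{h},\bm{p},\rho,\chi$ independently, recalling that the only dependence on $\bm{h}$ enters through $\bm{v}=-\dot{\bm{h}}\cdot(\del\bm{h})^{-1}$. The $\bm{p}$-variation is immediate: since $\bm{v}$ is independent of $\bm{p}$, the coefficient of $\delta\bm{p}$ is $\bm{v}-\partial\mathcal{H}_{\text{EP}}/\partial\bm{p}$, so stationarity gives $\bm{v}=\partial\mathcal{H}_{\text{EP}}/\partial\bm{p}$, which together with $\bm{v}=-\dot{\bm{h}}\cdot(\del\bm{h})^{-1}$ is exactly \eqref{eq:extended_first}. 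The $\chi$-variation, after integrating by parts in $t$ and $\bm{x}$, yields $\partial_t\rho+\del\cdot(\rho\bm{v})=0$, i.e. \eqref{eq:ham_cont} once $\bm{v}=\partial\mathcal{H}_{\text{EP}}/\partial\bm{p}$ is substituted. The $\rho$-variation, integrating by parts the $(\partial\mathcal{H}_{\text{EP}}/\partial\del\rho)\cdot\del\,\delta\rho$ term, produces $\dot\chi+\bm{v}\cdot\del\chi=\partial\mathcal{H}_{\text{EP}}/\partial\rho-\del\cdot\partial\mathcal{H}_{\text{EP}}/\partial\del\rho$, which is \eqref{eq:extended_last}. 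Reading these three computations in reverse gives the converse direction for them.

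The substantive step, which I expect to be the main obstacle, is the $\bm{h}$-variation. Collecting the $\bm{v}$-linear terms of $\mathsf{L}$ shows that $\bm{h}$ couples only to the effective momentum density $\bm{P}\doteq\bm{p}+\rho\,\del\chi$ through $\int_Q\bm{P}\cdot\bm{v}\,d^3\bm{x}$. Using the induced variation $\delta\bm{v}=\dot{\bm\xi}+(\bm{v}\cdot\del)\bm\xi-(\bm\xi\cdot\del)\bm{v}$ with $\bm\xi=-\delta\bm{h}\cdot(\del\bm{h})^{-1}$ (established in the text around \eqref{eq:EP_velocity_h}) and integrating by parts as in the passage from \eqref{eq:EP:variation_action} to \eqref{eq:EP:variation_mom}, the vanishing of the variation for arbitrary $\bm\xi$ yields the Euler--Poincar\'e balance $\partial_t P_j+\partial_i(v^iP_j)+P_i\,\partial_j v^i=0$. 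The remaining work is to eliminate $\chi$: substituting $\bm{P}=\bm{p}+\rho\,\del\chi$ and grouping terms, the continuity equation kills the $(\partial_t\rho+\del\cdot(\rho\bm{v}))\,\partial_j\chi$ contribution, and the surviving $\chi$-terms collapse to $\rho\,\partial_j(\dot\chi+v^i\partial_i\chi)$, which by the already-derived \eqref{eq:extended_last} equals $\rho\,\partial_j(\partial\mathcal{H}_{\text{EP}}/\partial\rho-\del\cdot\partial\mathcal{H}_{\text{EP}}/\partial\del\rho)$. This leaves the non-conservative momentum equation, and rewriting it in the conservative form \eqref{eq:ham_momentum} is the routine chain-rule manipulation already carried out in going from \eqref{eq:EP:variation_mom} to \eqref{eq:EP_mom_conservative}. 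Hence the Euler--Lagrange equations of $\mathsf{A}$ are equivalent to \eqref{eq:extended_first}--\eqref{eq:extended_last}.

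For the recovery statement, I would observe that \eqref{eq:ham_cont} and \eqref{eq:ham_momentum} are precisely the mLBEP equations of Lemma \ref{mLBEP_eqs}, equivalent to the LBEP equations under the Legendre correspondence $\bm{p}=\partial\mathcal{L}_{\text{EP}}/\partial\bm{u}$. Thus, given any LBEP solution $t\mapsto(\bm{u}(t),\rho(t))$, define $\bm{p}$ by this Legendre transform, so that $\partial\mathcal{H}_{\text{EP}}/\partial\bm{p}=\bm{u}$ by \eqref{eq:implicit_diff_first}; then \eqref{eq:ham_cont} and \eqref{eq:ham_momentum} hold automatically, since neither involves $\chi$ or $\bm{h}$. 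It remains to construct $\bm{h}$ and $\chi$. For $\bm{h}$, fix any $\bm{h}(t_1)\in\text{Diff}(Q,Q_0)$ and integrate $\partial_t\bm{h}=-(\bm{u}\cdot\del)\bm{h}$; since $\bm{u}$ is smooth this flow preserves diffeomorphisms and gives $\bm{v}=-\dot{\bm{h}}\cdot(\del\bm{h})^{-1}=\bm{u}=\partial\mathcal{H}_{\text{EP}}/\partial\bm{p}$, so \eqref{eq:extended_first} holds. For $\chi$, note that \eqref{eq:extended_last} is a linear scalar transport equation along $\bm{u}$ whose source is determined entirely by the known $(\bm{p},\rho)$; choosing any $\chi(t_1)$ and integrating produces the required $\chi$. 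The quadruple $(\bm{h},\bm{p},\rho,\chi)$ then lies in $\mathcal{C}_0$ and satisfies all four Euler--Lagrange equations, hence is a critical point of $\mathsf{A}$ by the first part and recovers the given LBEP solution via the stated Legendre transform.
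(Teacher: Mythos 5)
Your proposal is correct and follows essentially the same route as the paper's own proof: independent variations in $\bm{p}$, $\chi$, $\rho$, and $\bm{h}$ (the last via the induced variation $\bm{\xi}=-\delta\bm{h}\cdot(\del\bm{h})^{-1}$), followed by elimination of $\chi$ from the momentum balance using the continuity and $\chi$-transport equations, and a chain-rule identity for $\del\mathcal{H}_{\text{EP}}$ to reach the conservative form \eqref{eq:ham_momentum}; your packaging of the $\bm{h}$-variation through $\bm{P}=\bm{p}+\rho\,\del\chi$ is just a streamlined rewriting of the paper's Eq.\,\eqref{eq:EP_mom_aux}. On the recovery statement you are actually more explicit than the paper, which only remarks that the mLBEP equations are embedded in the Euler--Lagrange system; your construction of $\bm{h}$ and $\chi$ by integrating their transport equations (method of characteristics) supplies the detail that turns that remark into a complete argument.
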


\begin{proof}
The EL equations associated with the variational principle $\delta \mathsf{A} =0$ may be derived as follows. When varying the momentum density $\vec{p}$, one immediately finds
\begin{equation}
	\vec{v} = \frac{\partial \mathcal{H}_{\text{EP}}}{\partial \vec{p}}.
	\label{eq:EP_p}
\end{equation}
Thus, just as in finite-dimensional Hamiltonian systems, the fluid velocity is given by the partial derivative of the Hamiltonian with respect to the momentum density. Varying the action with respect to the scalar field $\chi$ leads to the continuity equation
\begin{equation}
	\pd_t \rho + \del \cdot ( \vec{v} \rho ) =0.
	\label{ep:EP_continuity}
\end{equation}
Varying the action with respect to $\rho$ gives
\begin{equation}
	\pd_t \chi + (\vec{v} \cdot \del)  \chi 
		=  \frac{\partial \mc{H}_{\text{EP}}}{\partial \rho}  
					-\del \cdot \frac{\partial \mc{H}_{\text{EP}}}{\partial \del \rho} .
	\label{eq:EP_psi}
\end{equation}
As before, since $\vec{v}$ depends on $\bm{h}$, the induced variations of $\bm{v}$ are given by \Eq{eq:EP:variation_v}. Therefore variations of $\bm{h}$ lead to
\begin{equation}
	\pd_t \vec{p}+ \del \cdot  (\vec{v} \otimes \vec{p}) + (\del \bm{v})\cdot\bm{p} 
		=	-\pd_t (\rho \del \chi) - \del \cdot ( \rho \vec{v}  \otimes \del \chi) - \rho ( \del \bm{v})\cdot \del\chi.
	\label{eq:EP_mom_aux}
\end{equation}

Equations \eq{eq:EP_p}--\eq{eq:EP_mom_aux} are the EL equations associated with the action \eq{eq:EP:act}. In order to see that they are equivalent to Eqs.\,\eqref{eq:extended_first}-\eqref{eq:extended_last}, first substitute \Eqs{ep:EP_continuity} and \eq{eq:EP_psi} into \Eq{eq:EP_mom_aux} in order to obtain
\begin{equation}
	\partial_t\vec{p} + \del\cdot (\vec{v} \otimes  \vec{p}) + ( \del \bm{v})\cdot\bm{p} 
		=	-\rho \del \left( 
				\frac{\partial \mathcal{H}_{\text{EP}}}{\partial \rho} 
				- \del \cdot \frac{\partial \mathcal{H}_{\text{EP}}}{\partial \del \rho}
			\right).
	\label{eq:EP_mom}
\end{equation}
Then use the identity 
\begin{align}
\del(\mathcal{H}_{\text{EP}}(\bm{p},\rho,\del\rho)) =& \del\cdot\left(\frac{\partial\mathcal{H}_{\text{EP}}}{\partial\del\rho}\otimes\del\rho+\mathbb{I}\left[\bm{p}\cdot\frac{\partial\mathcal{H}_{\text{EP}}}{\partial\bm{p}}+\rho\frac{\partial\mathcal{H}_{\text{EP}}}{\partial\rho} - \rho\del\cdot\frac{\partial\mathcal{H}_{\text{EP}}}{\partial\del\rho}\right]\right)\nonumber\\
&-\left(\del\frac{\partial\mathcal{H}_{\text{EP}}}{\partial\bm{p}}\right)\cdot \bm{p}-\rho\del\left(\frac{\partial\mathcal{H}_{\text{EP}}}{\partial\rho}-\del\cdot\frac{\partial\mathcal{H}_{\text{EP}}}{\partial\del\rho}\right),
\end{align}
together with \Eq{eq:EP_p} to write \Eq{eq:EP_mom} as \Eq{eq:ham_momentum}. Equations \eqref{eq:extended_first}, \eqref{eq:ham_cont}, and \eqref{eq:extended_last} are finally seen to be equivalent to Eqs.\,\eqref{eq:EP_p}, \eqref{ep:EP_continuity}, and \eqref{eq:EP_psi} in light of the relations $\bm{v} = \pd\mc{H}_{\text{EP}}/\pd \bm{p} = -\partial_t\bm{h}\cdot(\del\bm{h})^{-1}$. 

In order to see that every solution of the LBEP equations may be obtained from solutions of \Eqs{eq:extended_first}-\eqref{eq:extended_last}, we merely observe that Eqs.\,\eqref{eq:ham_momentum}-\eqref{eq:ham_cont} are precisely the momentum form of the LBEP equations. This system of PDEs was shown to be equivalent to the LBEP equations in Lemma \ref{mLBEP_eqs}. We may say that the mLBEP equations are embedded within the Euler-Lagrange equations associated with $\mathsf{A}$.

\end{proof}

\begin{remark}
It is to be noted that, in order to apply Whitham's averaging to the EP action principle, it is not entirely necessary to  construct the Hamiltonian formulation given in \Eqs{eq:EP:Ham} and \eq{eq:EP:act} with $\vec{p}$ as an additional dynamical variable. One could have simply introduced the back-to-labels map $\bm{h}$ and added the term involving the Lagrange multiplier in order to construct the  action. However, one of the advantages that we shall obtain after applying the NL--WKB extension to the generalized fluid system is that having $\vec{p}$ as an as an argument of the action functional is convenient when performing WKB asymptotics. This will be further discussed in \Sec{sec:six}, when we will apply our results to high-frequency acoustic waves interacting with a compressible isothermal flow.
\end{remark}

A simple corollary of Proposition \ref{extended_vp} is that the LBEP equations may be formulated as the Euler-Lagrange equations associated with an ordinary first-order classical field theory.

\begin{theorem}[LBEP field theory]\label{LBEP_field_theory}
Let $(M_{\text{EP}},\mc{C}_{\text{EP}},\mathfrak{L}_{\text{EP}})$ be the ordinary first-order classical field theory defined as follows.
\begin{itemize}
\item $M_{\text{EP}}=Q\times\mathbb{R}$.
\item $\mathcal{C}_{\text{EP}}$ is the space of smooth functions $\varphi:M\rightarrow F$, where $F = Q_0\times\mathbb{R}^3\times\mathbb{R}\times\mathbb{R}$.
\item Write a general element $\partial\varphi\in D = M_{8\times 4}(\mathbb{R})$ as
\begin{align}
\partial\varphi = \left(\begin{array}{cc}
(\del\bm{h})^T & \partial_t\bm{h}\\
(\del\bm{p})^T & \partial_t\bm{p}\\
(\del\rho)^T & \partial_t\rho\\
(\del\chi)^T & \partial_t\chi
\end{array}\right),\quad \begin{array}{c} \del\bm{h},\del\bm{p}\in M_{3\times 3}(\mathbb{R})\\
\partial_t\bm{h},\partial_t\bm{p},\del\rho,\del\chi\in M_{3\times1}(\mathbb{R})\\
\partial_t\rho,\partial_t\chi \in \mathbb{R}, \end{array} 
\end{align}
and a general element $\varphi\in F$ as $(\bm{h},\bm{p},\rho,\chi)\in Q_0\times\mathbb{R}^3\times\mathbb{R}\times\mathbb{R}$. The Lagrangian density $\mathfrak{L}_{\text{EP}}:M\times F\times D\rightarrow\mathbb{R}$ is given by
\begin{align}
\mathfrak{L}_{\text{EP}}(\bm{x},t,\varphi,\partial\varphi) = -  (\partial_t\bm{h})\cdot(\del\bm{h})^{-1}\cdot\bm{p} + \rho\,(\partial_t\chi-\partial_t\bm{h}\cdot(\del\bm{h})^{-1}\cdot\del\chi) - \mathcal{H}_{\text{EP}}(\bm{p},\rho,\del\rho).
\end{align}
\end{itemize}
The Euler-Lagrange equations associated with $(M_{\text{EP}},\mathcal{C}_{\text{EP}},\mathfrak{L}_{\text{EP}})$ are equivalent to \Eqs{eq:extended_first}-\eqref{eq:extended_last}.
\end{theorem}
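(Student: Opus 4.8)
The plan is to recognize this theorem as a bookkeeping reformulation of Proposition \ref{extended_vp}: the entire content is that the spatially-integrated phase-space action $\mathsf{A}$ of Eq.\,\eqref{eq:EP:act} can be repackaged as the spacetime action $A_U$ of an ordinary first-order field theory in the sense of Definition \ref{def_ft}, after which the equivalence of critical points with Eqs.\,\eqref{eq:extended_first}--\eqref{eq:extended_last} is inherited directly. First I would verify that $(M_{\text{EP}},\mathcal{C}_{\text{EP}},\mathfrak{L}_{\text{EP}})$ meets Definition \ref{def_ft} and is ordinary. The spacetime $M_{\text{EP}}=Q\times\mathbb{R}=(S^1)^3\times\mathbb{R}$ is a product of a vector space with a torus, and the fiber $F=Q_0\times\mathbb{R}^3\times\mathbb{R}\times\mathbb{R}$ is of the same type since $Q_0$ is diffeomorphic to $(S^1)^3$; because $\mathcal{C}_{\text{EP}}$ consists of \emph{all} smooth maps $M_{\text{EP}}\to F$, it contains every compactly supported field, so the theory is ordinary.

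The one point requiring care is that $\mathfrak{L}_{\text{EP}}$ involves $(\del\bm{h})^{-1}$ and is therefore only defined on the open subset of $M\times F\times D$ where $\det(\del\bm{h})\neq 0$. I would note that this subset is open and that every configuration of interest has $\bm{h}$ a diffeomorphism, so small compactly supported variations preserve local invertibility and the first variation is well defined; thus the Euler--Lagrange machinery below Eq.\,\eqref{eq_16} applies verbatim on this domain.

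The key identification is that integrating $\mathfrak{L}_{\text{EP}}$ over $Q$ reproduces the phase-space Lagrangian $\mathsf{L}$ of Eq.\,\eqref{psl_no_param}. Substituting $\bm{v}=-\partial_t\bm{h}\cdot(\del\bm{h})^{-1}$ term by term, one checks that $-\partial_t\bm{h}\cdot(\del\bm{h})^{-1}\cdot\bm{p}=\bm{p}\cdot\bm{v}$, that $\rho(\partial_t\chi-\partial_t\bm{h}\cdot(\del\bm{h})^{-1}\cdot\del\chi)=\rho(\dot{\chi}+\bm{v}\cdot\del\chi)$, and that the Hamiltonian terms coincide, so $\int_Q\mathfrak{L}_{\text{EP}}\,d^3\bm{x}=\mathsf{L}$ and hence $A_{Q\times[t_1,t_2]}=\mathsf{A}$. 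Because $Q=(S^1)^3$ is boundaryless, the boundary of $U=Q\times[t_1,t_2]$ is $Q\times\{t_1,t_2\}$, so the requirement that admissible variations vanish on $\partial U$ is exactly the fixed-endpoint-in-time condition used in Proposition \ref{extended_vp}; the two notions of critical point therefore coincide.

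Finally I would assemble the equivalence. Since the theory is ordinary, the standard calculus-of-variations result quoted below Eq.\,\eqref{eq_16} gives that $\varphi$ is a critical point of $A_U$ for every $U\subset M_{\text{EP}}$ if and only if $\varphi$ solves the Euler--Lagrange equations \eqref{ele_general} for $\mathfrak{L}_{\text{EP}}$; taking $U=Q\times[t_1,t_2]$ with $t_1,t_2$ arbitrary identifies this with being a critical point of $\mathsf{A}$ over all time intervals. Proposition \ref{extended_vp} equates these critical points with solutions of Eqs.\,\eqref{eq:extended_first}--\eqref{eq:extended_last}, completing the argument. The only genuine subtlety is the observation that the spatial torus has empty boundary, which is what makes the spatial boundary terms drop and lets the critical-point conditions match; every other step is direct substitution or invocation of an earlier result, which is precisely why the statement is a corollary of Proposition \ref{extended_vp} rather than an independent theorem.
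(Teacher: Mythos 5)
Your proposal is correct and follows essentially the same route as the paper, which offers no separate proof at all but introduces the theorem with the words ``a simple corollary of Proposition \ref{extended_vp}'': one identifies $\int_Q \mathfrak{L}_{\text{EP}}\,d^3\bm{x}$ with the phase-space Lagrangian $\mathsf{L}$ of Eq.\,\eqref{psl_no_param}, notes that $\partial(Q\times[t_1,t_2]) = Q\times\{t_1,t_2\}$ so the field-theoretic boundary conditions reduce to fixed endpoints in time, and invokes Proposition \ref{extended_vp}. Your added care about the domain of $(\del\bm{h})^{-1}$ and the ordinariness check are details the paper leaves implicit, but they do not change the argument.
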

\begin{remark}
The Euler-Lagrange equations associated with $(M_{\text{EP}},\mathcal{C}_{\text{EP}},\mathfrak{L}_{\text{EP}})$ comprise a larger system of PDEs than the LBEP equations. However, Proposition \ref{extended_vp} shows that the LBEP equations are embedded within the Euler-Lagrange equations associated with $\mathfrak{L}_{\text{EP}}$. In this sense it seems reasonable to attempt to uncover properties of the LBEP equations by studying $(M_{\text{EP}},\mathcal{C}_{\text{EP}},\mathfrak{L}_{\text{EP}})$. On the other hand, it also seems plausible that the additional variables present in $(M_{\text{EP}},\mathcal{C}_{\text{EP}},\mathfrak{L}_{\text{EP}})$'s Euler-Lagrange equations might render the study of $(M_{\text{EP}},\mathcal{C}_{\text{EP}},\mathfrak{L}_{\text{EP}})$ even more complicated than studying the LBEP equations directly. We will show in Section \ref{sec:four} that $(M_{\text{EP}},\mathcal{C}_{\text{EP}},\mathfrak{L}_{\text{EP}})$ does provide useful information about the LBEP equations because it can be can be combined with Whitham averaging in order to identify a variational formulation for the NL-WKB extension of the LBEP equations. In Section \ref{sec:five} we will show that symmetries of $(M_{\text{EP}},\mathcal{C}_{\text{EP}},\mathfrak{L}_{\text{EP}})$ and $(M_{\text{EP}},\mathcal{C}_{\text{EP}},\mathfrak{L}_{\text{EP}})$'s looping explain why the additional fields present in $(M_{\text{EP}},\mathcal{C}_{\text{EP}},\mathfrak{L}_{\text{EP}})$ do not spoil the utility of $(M_{\text{EP}},\mathcal{C}_{\text{EP}},\mathfrak{L}_{\text{EP}})$.
\end{remark}

\section{Variational structure of nonlinear WKB in the Eulerian frame\label{sec:four}}

In this section we will use the results from the previous two sections to identify a variational principle for the NL-WKB extension of the LBEP equations. We will frame our discussion in terms of the momentum form of the LBEP equations.
\begin{definition}
Introduce the operators $\partial_t^S = \partial_t+\partial_tS\, \partial_\theta$ and $\del^S=\del+\del S \,\partial_\theta$. The \emph{NL-WKB extension of the LBEP equations} is the system of PDEs
\begin{gather}
\partial_t^S\trho+\del^S\cdot\left(\trho\frac{\pd\mc{H}_{\text{EP}}}{\pd\bm{p}}\right) = 0\label{extLBEP_cont}\\
\partial_t^S\tp 
		+ 	\del^S \cdot \left(   \frac{\pd \mathcal{H}_{\text{EP}}}{\pd \bm{p}} \otimes \tp+  \frac{\pd \mathcal{H}_{\text{EP}}}{ \pd \del \rho}  \otimes \del^S \trho \right)\nonumber\\ 
		=	-\del^S \left( \trho \,  \frac{\pd \mathcal{H}_{\text{EP}}}{\pd \rho}    
				-  \trho\,\del^S \cdot \frac{\pd \mathcal{H}_{\text{EP}}}{ \pd \del \rho} 
				+ 	\tp\cdot\frac{\partial\mathcal{H}_{\text{EP}}}{\partial\bm{p}}-\mathcal{H}_{\text{EP}} \right),\label{extLBEP_mom}
\end{gather}
where the derivatives of the EP Hamiltonian density are evaluated at $(\tp,\trho,\del^S\trho)$. The unknown fields are $(\trho(\bm{x},\theta,t),\tp(\bm{x},\theta,t),S(\bm{x},t))\in \mathbb{R}\times\mathbb{R}^3\times S^1$. For the sake of brevity, we will refer to this system of PDEs as the \emph{extLBEP equations}.
\end{definition}

The rationale behind the existence of a variational formulation for the extLBEP equations is as follows. According to Theorem \ref{LBEP_field_theory}, the LBEP equations may be realized as a subset of the Euler-Lagrange equations arising from the ordinary first-order classical field theory $(M_{\text{EP}},\mathcal{C}_{\text{EP}},\mathfrak{L}_{\text{EP}})$. Because the LBEP equations are a subset of $(M_{\text{EP}},\mathcal{C}_{\text{EP}},\mathfrak{L}_{\text{EP}})$'s Euler-Lagrange equations, the extLBEP equations must be a subset of $(M_{\text{EP}},\mathcal{C}_{\text{EP}},\mathfrak{L}_{\text{EP}})$'s NL-WKB extension. Indeed, applying the NL-WKB extension procedure to Eqs.\,\eqref{eq:extended_first}-\eqref{eq:extended_last} involves applying NL-WKB extension to Eqs.\,\eqref{eq:ham_momentum}-\eqref{eq:ham_cont}, the latter of which are equivalent to the momentum form of the LBEP equations. But by Theorem \ref{whitham_thm} the NL-WKB extension of $(M_{\text{EP}},\mathcal{C}_{\text{EP}},\mathfrak{L}_{\text{EP}})$ arises as the Euler-Lagrange equations associated with the looping of $(M_{\text{EP}},\mathcal{C}_{\text{EP}},\mathfrak{L}_{\text{EP}})$, i.e. $(\widetilde{M}_{\text{EP}},\widetilde{\mathcal{C}}_{\text{EP}},\widetilde{\mathfrak{L}}_{\text{EP}})$. (See Definition \ref{looping_of_field_theory}.) Therefore the variational principle furnished by $(\widetilde{M}_{\text{EP}},\widetilde{\mathcal{C}}_{\text{EP}},\widetilde{\mathfrak{L}}_{\text{EP}})$'s action functional must serve as a variational principle for the extLBEP equations. In summary, we have proved the following.

\begin{proposition}\label{extLBEP_vp}
Let $(M_{\text{EP}},\mathcal{C}_{\text{EP}},\mathfrak{L}_{\text{EP}})$ be defined as in Theorem \ref{LBEP_field_theory}, and let $(\widetilde{M}_{\text{EP}},\widetilde{\mathcal{C}}_{\text{EP}},\widetilde{\mathfrak{L}}_{\text{EP}})$ be the looping of $(M_{\text{EP}},\mathcal{C}_{\text{EP}},\mathfrak{L}_{\text{EP}})$. Consider $\Phi\in \widetilde{\mathcal{C}}_{\text{EP}}$ with components $\Phi(\bm{x},t,\theta) = (\tih(\bm{x},t,\theta),\tp(\bm{x},t,\theta),\trho(\bm{x},t,\theta),\tchi(\bm{x},t,\theta),S(\bm{x},t))\in Q_0\times \mathbb{R}^3\times \mathbb{R}\times\mathbb{R}\times S^1$. The field $\Phi$ is a critical point of $(\widetilde{M}_{\text{EP}},\widetilde{\mathcal{C}}_{\text{EP}},\widetilde{\mathfrak{L}}_{\text{EP}})$'s local action functional $\tilde{A}_{U\times S^1}$ for each $U\subset M$ if and only if $\Phi$'s component functions satisfy the system of PDEs
\begin{gather}
\partial_t^S\tih = -\frac{\pd \mc{H}_{\text{EP}}}{\pd \bm{p}}\cdot \del^S\tih\label{extension_labels}\\
\partial_t^S\tp 
		+ 	\del^S \cdot \left(  \frac{\pd \mathcal{H}_{\text{EP}}}{\pd \bm{p}} \otimes \tp +  \frac{\pd \mathcal{H}_{\text{EP}}}{ \pd \del \rho}  \otimes \del^S \trho \right)\nonumber\\ 
		=	-\del^S \left( \trho \,  \frac{\pd \mathcal{H}_{\text{EP}}}{\pd \rho}    
				-  \trho\,\del^S \cdot \frac{\pd \mathcal{H}_{\text{EP}}}{ \pd \del \rho} 
				+ 	\tp\cdot\frac{\partial\mathcal{H}_{\text{EP}}}{\partial\bm{p}}-\mathcal{H}_{\text{EP}} \right)\label{extension_momentum}\\
\partial_t^S\trho+\del^S\cdot\left(\trho\frac{\pd\mc{H}_{\text{EP}}}{\pd\bm{p}}\right) = 0\label{extension_cont}\\		
\partial_t^S\tchi +\frac{\pd \mc{H}_{\text{EP}}}{\pd \bm{p}}\cdot \del^S\tchi = \frac{\pd\mc{H}_{\text{EP}}}{\pd \rho}-\del^S\cdot\frac{\pd\mc{H}_{\text{EP}}}{\pd\del\rho},\label{extension_chi}
\end{gather}
where the derivatives of $\mc{H}_{\text{EP}}$ are evaluated at $(\tp,\trho,\del^S\trho)$. In particular, the extLBEP equations are recovered as a subset of the Euler-Lagrange equations associated with $(\widetilde{M}_{\text{EP}},\widetilde{\mathcal{C}}_{\text{EP}},\widetilde{\mathfrak{L}}_{\text{EP}})$.
\end{proposition}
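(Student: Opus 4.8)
The plan is to deduce the result directly from the two main theorems already in hand, by treating the NL-WKB extension as an operation that commutes with differential-algebraic rearrangement of the Euler-Lagrange equations. First I would apply Theorem \ref{whitham_thm} to the ordinary field theory $(M_{\text{EP}},\mathcal{C}_{\text{EP}},\mathfrak{L}_{\text{EP}})$ of Theorem \ref{LBEP_field_theory}: since $(\widetilde{M}_{\text{EP}},\widetilde{\mathcal{C}}_{\text{EP}},\widetilde{\mathfrak{L}}_{\text{EP}})$ is its looping, the theorem says that $\Phi=(\tih,\tp,\trho,\tchi,S)$ is a critical point of $\tilde{A}_{U\times S^1}$ for every $U\subset M$ if and only if $\Phi$ solves the NL-WKB extension of $(M_{\text{EP}},\mathcal{C}_{\text{EP}},\mathfrak{L}_{\text{EP}})$, i.e. Eq.\,\eqref{pre_extension} with $\mathcal{L}=\mathfrak{L}_{\text{EP}}$. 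By Theorem \ref{LBEP_field_theory} the Euler-Lagrange equations being extended are exactly Eqs.\,\eqref{eq:extended_first}-\eqref{eq:extended_last}. The entire proposition therefore reduces to one verification: that the NL-WKB extension of \eqref{eq:extended_first}-\eqref{eq:extended_last} coincides with \eqref{extension_labels}-\eqref{extension_chi}.

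To carry out that verification I would not recompute the raw extended equations from scratch, but instead re-run the derivation in the proof of Proposition \ref{extended_vp} with every total derivative $\partial_t,\del$ replaced by the twisted derivatives $\partial_t^S,\del^S$ and every argument $(\bm{p},\rho,\del\rho)$ of $\mathcal{H}_{\text{EP}}$ replaced by $(\tp,\trho,\del^S\trho)$. Concretely, the extension of the $\bm{p}$-variation \eqref{eq:EP_p} reads $\tilde{\bm{v}}=\partial\mathcal{H}_{\text{EP}}/\partial\bm{p}$ with $\tilde{\bm{v}}=-\partial_t^S\tih\cdot(\del^S\tih)^{-1}$, which rearranges to \eqref{extension_labels}; the extensions of the $\chi$- and $\rho$-variations reproduce \eqref{extension_cont} and \eqref{extension_chi}; and the $\bm{h}$-variation \eqref{eq:EP_mom_aux}, followed by substitution of the extended continuity and $\rho$ equations, yields the $S$-twisted analogue of \eqref{eq:EP_mom}, which the $S$-twisted tensor identity converts into \eqref{extension_momentum}.

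The step I expect to require the most care is justifying that this transcription is legitimate, i.e. that the NL-WKB extension is well defined on the equivalence class of the Euler-Lagrange system and does not depend on whether one extends the raw form \eqref{pre_extension} or the rearranged form \eqref{eq:extended_first}-\eqref{eq:extended_last}. The key enabling fact is that $\partial_t^S$ and $\del^S$ are derivations: they are linear and obey the Leibniz rule, exactly like $\partial_t$ and $\del$. Consequently the chain rule for $\del^S$ acting on $\mathcal{H}_{\text{EP}}(\tp,\trho,\del^S\trho)$ takes the same form as the chain rule for $\del$ acting on $\mathcal{H}_{\text{EP}}(\bm{p},\rho,\del\rho)$ with $\del\mapsto\del^S$ throughout, so the tensor rearrangement used in Proposition \ref{extended_vp} is reproduced verbatim; every linear combination and every substitution of one equation into another appearing in that proof is a differential-algebraic operation and is therefore preserved under the substitution rule of Definition \ref{def_2}. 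I would make this precise by observing that the extension map sends a composite field $\varphi(x)=\tilde{\varphi}(x,S(x))$ and its derivatives to profiles acted on by $\partial_t^S,\del^S$, so any manipulation valid for the composite descends to the identical manipulation of profiles with twisted derivatives.

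Finally, the closing assertion that the extLBEP equations are embedded in \eqref{extension_labels}-\eqref{extension_chi} would follow exactly as in Lemma \ref{mLBEP_eqs}: the $\tih$ and $\tchi$ equations \eqref{extension_labels} and \eqref{extension_chi} decouple from the pair \eqref{extension_momentum}-\eqref{extension_cont}, and that pair is by construction the momentum form of the extLBEP equations, so discarding the label and multiplier equations recovers the extLBEP system.
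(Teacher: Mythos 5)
Your proposal is correct and follows essentially the same route as the paper: the paper likewise proves the proposition by combining Theorem \ref{whitham_thm} (critical points of the looping are precisely solutions of the field theory's NL-WKB extension) with Theorem \ref{LBEP_field_theory} (the Euler-Lagrange equations of $(M_{\text{EP}},\mathcal{C}_{\text{EP}},\mathfrak{L}_{\text{EP}})$ are Eqs.\,\eqref{eq:extended_first}--\eqref{eq:extended_last}), so that the extension of the latter system is the $\partial_t^S,\del^S$-twisted system \eqref{extension_labels}--\eqref{extension_chi}. The only difference is one of emphasis: the paper silently identifies the extension of the raw Euler-Lagrange equations with the twisted form of the rearranged equations, whereas you explicitly justify that NL-WKB extension commutes with differential-algebraic rearrangement via the derivation property of the twisted operators --- a useful sharpening, not a different approach.
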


Proposition \ref{extLBEP_vp} gives the variational structure of the extLBEP equations in a manner that treats space and time on an equal footing. In order to analyze the extLBEP equations as a dynamical system, it is also important to formulate Proposition \ref{extLBEP_vp} in terms of evolving fields on space instead of ``static" fields on spacetime. For this purpose, it is useful to introduce the looped frozen field configurations and the extLBEP Lagrangian.

\begin{definition}
The space of \emph{looped frozen field configurations} $\ell\mc{C}_0$ is the collection of smooth mappings $S^1\rightarrow \mc{C}_0$. (cf. Definition \ref{frozen_field_def}.) We will identify elements of $\ell\mc{C}_0$ with mappings $Q\times S^1\ni (\bm{x},\theta)\mapsto (\tih(\bm{x},\theta),\tp(\bm{x},\theta),\trho(\bm{x},\theta),\tchi(\bm{x},\theta))\in Q_0\times \mathbb{R}^3\times\mathbb{R}\times \mathbb{R} $, where $\bm{x}\mapsto \tih(\bm{x},\theta)$ is a diffeomorphism for each $\theta$, and $\rho(\bm{x},\theta)>0$ for all $(\bm{x},\theta)\in Q\times S^1$.
\end{definition}

\begin{definition}
The \emph{extLBEP Lagrangian} is the functional $\widetilde{\mathsf{L}}:T(\ell\mc{C}_0\times C^\infty(Q,S^1))\rightarrow \mathbb{R}$ whose value at $(\tih,\tp,\trho,\tchi,S,\dot{\tih},\dot{\trho},\dot{\tchi},\dot{S})\in T(\ell\mc{C}_0\times C^\infty(Q,S^1))$ is given by
\begin{align}
\widetilde{\mathsf{L}}(\tih,\tp,\trho,\tchi,S,\dot{\tih},\dot{\trho},\dot{\tchi},\dot{S}) =& \fint\int_Q \left(\tp\cdot \tv + \trho\,\dot{\tchi}+\trho\,\dot{S}\partial_\theta\tchi + \trho\tv\cdot\del^S\tchi \right)\,d^3\bm{x}\,d\theta\nonumber\\
&-\fint\int_Q \mc{H}_{\text{EP}}(\tp,\trho,\del^S\trho)\,d^3\bm{x}\,d\theta,\label{extended_lag}
\end{align}
where
\begin{align}
\tv = -(\dot{\tih}+\dot{S}\partial_\theta\tih)\cdot (\del^S\tih )^{-1} ,\label{tilde_v_defined}
\end{align}
and  $\fint$ is defined by $\fint g(\theta)\,d\theta = (2\pi)^{-1}\int_0^{2\pi}g(\theta)\,d\theta$.
\end{definition}

In terms of $\widetilde{\mathsf{L}}$ and $\ell\mc{C}_0$, the proper reformulation of Proposition \ref{extLBEP_vp} is the following.

\begin{theorem}\label{theorem_3}
Let $\gamma:[t_1,t_2]\rightarrow \ell\mc{C}_0\times C^\infty(Q,S^1)$ be a smooth curve with components $\gamma = (\tih,\tp,\trho,\tchi,S)$. The curve $\gamma$ is a (fixed-endpoint) critical point of the functional 
\begin{align}
\widetilde{\mathsf{A}}(\gamma) = \int_{t_1}^{t_2}\widetilde{\mathsf{L}}(\gamma(t),\partial_t\gamma(t))\,dt\label{extLBEP_action}
\end{align}
if and only if the component functions $(\tih,\tp,\trho,\tchi,S)$ satisfy Eqs.\,\eqref{extension_labels}-\eqref{extension_chi}.
\end{theorem}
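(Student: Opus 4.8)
The plan is to recognize that $\widetilde{\mathsf{A}}$ is not a new action at all: it is precisely the spacetime action functional $\tilde{A}_{U\times S^1}$ of the looping $(\widetilde{M}_{\text{EP}},\widetilde{\mathcal{C}}_{\text{EP}},\widetilde{\mathfrak{L}}_{\text{EP}})$, merely rewritten as a time integral of an instantaneous Lagrangian. Once this is established, the theorem follows at once from Proposition \ref{extLBEP_vp}. Thus the entire proof reduces to (i) a bookkeeping identity showing that $\widetilde{\mathsf{L}}$ equals the integral of the looped Lagrangian density over $Q\times S^1$, and (ii) a matching of the admissible (fixed-endpoint) variations with those permitted in the covariant formulation.

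First I would write out $\widetilde{\mathfrak{L}}_{\text{EP}}$ explicitly. By Definition \ref{looping_of_field_theory}, looping replaces every spacetime derivative $\partial_\mu$ acting on a field by $\partial_\mu+\partial_\mu S\,\partial_\theta$; since $M_{\text{EP}}=Q\times\mathbb{R}$, this is exactly the substitution $\del\mapsto\del^S$ and $\partial_t\mapsto\partial_t^S$. Applying it to $\mathfrak{L}_{\text{EP}}$ from Theorem \ref{LBEP_field_theory} and dividing by $2\pi$ yields $\widetilde{\mathfrak{L}}_{\text{EP}}=(2\pi)^{-1}[\,\tp\cdot\tv+\trho(\partial_t^S\tchi+\tv\cdot\del^S\tchi)-\mathcal{H}_{\text{EP}}(\tp,\trho,\del^S\trho)\,]$, where $\tv=-\partial_t^S\tih\cdot(\del^S\tih)^{-1}$. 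Crucially, because $\bm{p}$ enters $\mathfrak{L}_{\text{EP}}$ only through its value (no $\del\bm{p}$ or $\partial_t\bm{p}$ appears), the looping leaves $\tp$ underived, which is why $\dot{\tp}$ is absent from $\widetilde{\mathsf{L}}$. This $\tv$ is precisely \eqref{tilde_v_defined}, and expanding $\partial_t^S\tchi=\dot{\tchi}+\dot{S}\partial_\theta\tchi$ reproduces the integrand of \eqref{extended_lag} term by term. Integrating over $Q\times S^1$ then gives $\int_{Q\times S^1}\widetilde{\mathfrak{L}}_{\text{EP}}\,d^3\bm{x}\,d\theta=\widetilde{\mathsf{L}}$, and integrating further over $t\in[t_1,t_2]$ gives $\widetilde{\mathsf{A}}(\gamma)=\tilde{A}_{Q\times[t_1,t_2]\times S^1}(\Phi)$, where $\Phi$ is the spacetime field whose restriction to each time slice recovers $\gamma(t)$.

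Next I would match the variations. Because $Q=(S^1)^3$ and the fiber $S^1$ are closed manifolds, the only boundary of the spacetime cylinder $Q\times[t_1,t_2]\times S^1$ sits at $t=t_1,t_2$. Hence a fixed-endpoint variation of $\gamma$ corresponds exactly to a variation $\delta\Phi\in\widetilde{\mathcal{C}}_{\text{EP}}$ vanishing on $\partial(Q\times[t_1,t_2])\times S^1$, with no constraint imposed in the spatial or $\theta$ directions; the phase variation $\delta S(\bm{x},t)$ and the remaining variations $\delta\tih,\delta\tp,\delta\trho,\delta\tchi$ are simply those vanishing at $t=t_1,t_2$. Consequently $\gamma$ is a fixed-endpoint critical point of $\widetilde{\mathsf{A}}$ if and only if $\Phi$ is a critical point of $\tilde{A}_{U\times S^1}$ for the slab $U=Q\times[t_1,t_2]$. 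Since $t_1<t_2$ are arbitrary and the looped Euler-Lagrange equations are pointwise conditions, criticality for every such slab is equivalent to criticality for every compact $U\subset M$; Proposition \ref{extLBEP_vp} then identifies this with Eqs.\,\eqref{extension_labels}--\eqref{extension_chi}, closing both directions of the iff simultaneously.

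The one genuinely delicate point is the phase $S$: unlike the other fields it depends on $(\bm{x},t)$ but not $\theta$, so $\delta S$ cannot be localized in $\theta$ and its Euler-Lagrange equation is the $\theta$-averaged relation rather than a pointwise one. I expect this to be the main conceptual hazard, but I need not confront it directly—it is precisely the \emph{non-ordinary} feature already dissected in Theorem \ref{whitham_thm} and absorbed into Proposition \ref{extLBEP_vp}, so invoking the latter disposes of it. The only remaining difficulty is clerical: keeping the operators $\del^S,\partial_t^S$, the matrix inverse $(\del^S\tih)^{-1}$, and the $(2\pi)^{-1}$ normalization consistent while verifying the term-by-term match of the second paragraph.
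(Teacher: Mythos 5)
Your proof is correct, but it takes a genuinely different route from the paper's: it is precisely the route the paper flags and then declines, remarking immediately after the theorem that the result ``can be deduced directly from Proposition \ref{extLBEP_vp} by unpacking definitions,'' whereas the printed proof is a self-contained first-variation computation. Your unpacking is accurate: looping $\mathfrak{L}_{\text{EP}}$ amounts to $\del\mapsto\del^S$, $\partial_t\mapsto\partial_t^S$ with a $(2\pi)^{-1}$ factor; the fiber integral of $\widetilde{\mathfrak{L}}_{\text{EP}}$ over $Q\times S^1$ is $\widetilde{\mathsf{L}}$; and since $Q=(S^1)^3$ and the fiber $S^1$ are closed, variations vanishing on $\partial\left(Q\times[t_1,t_2]\right)\times S^1$ are exactly the fixed-endpoint variations of $\gamma$, with the non-localizability of $\delta S$ in $\theta$ legitimately delegated to Theorem \ref{whitham_thm}. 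The one loose joint is quantification: Theorem \ref{theorem_3} fixes a single interval, while Proposition \ref{extLBEP_vp} demands criticality for every compact $U\subset M$; your sentence addressing this quantifies over all slabs instead of fixing one. The honest patch is locality: variations compactly supported in $(t_1,t_2)$ give the looped Euler-Lagrange equations in the interior exactly as in the proof of Theorem \ref{whitham_thm}, hence on the closed slab by smoothness, and the converse is integration by parts with vanishing temporal boundary terms. As for what each route buys: yours is shorter and reuses established machinery; the paper's direct computation via the WKB Lin constraint formula \eqref{wkb_lin_formula} yields as by-products the explicit first variation \eqref{first_variation_formula_thm_3}, the specific wave action density \eqref{eq:specific_wave_density}, and the conservation law \eqref{prf_3_last}, which are reused later, e.g.\ in proving Proposition \ref{prop:wact}.
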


\noindent This theorem can be deduced directly from Proposition \ref{extLBEP_vp} by unpacking definitions. However, in order to more clearly highlight the mechanisms underlying the variational formulation of the extLBEP equations, we will give a direct proof of Theorem \ref{theorem_3} that proceeds without recourse to Proposition \ref{extLBEP_vp}. 

Before proceeding with the proof, we will first establish a generalization of the famous Lin constraint formula\cite{Newcomb_1962,Bretherton_1970} from variational hydrodynamics. 

\begin{definition}\label{def:phase_shift}
Given any set $T$ and a mapping $\psi: Q\times S^1\rightarrow T$, define \emph{the phase shift of $\psi$ by $S$}, $\psi^S: Q\times S^1\rightarrow T$, using the formula
\begin{align}
\psi^S(\bm{x},\theta) = \psi(\bm{x},\theta+S(\bm{x})).\label{expS}
\end{align}
Note that $\psi^S$ is \emph{not} the usual exponentiation operation used in elementary arithmetic.
\end{definition}

\begin{lemma}[WKB Lin constraint formula]
Let $(t,\epsilon)\mapsto \tih_{t,\epsilon}\in \ell \text{Diff}(Q,Q_0)$ be a smooth 2-parameter family of maps $S^1\rightarrow \text{Diff}(Q,Q_0)$. Let $(t,\epsilon)\mapsto S_{t,\epsilon}\in C^\infty(Q,S^1)$ be a smooth 2-parameter family of maps $Q\rightarrow S^1$. Then the pair of parameter velocities,
\begin{align}
\tv_{t,\epsilon} =& -(\pd_t\tih_{t,\epsilon}+\pd_tS_{t,\epsilon}\,\pd_\theta\tih_{t,\epsilon})\cdot (\del\tih_{t,\epsilon}+\del S_{t,\epsilon}\otimes \pd_\theta\tih_{t,\epsilon})^{-1}\\
\txi_{t,\epsilon} =& -(\pd_\epsilon\tih_{t,\epsilon}+\pd_\epsilon S_{t,\epsilon}\,\pd_\theta\tih_{t,\epsilon})\cdot (\del\tih_{t,\epsilon}+\del S_{t,\epsilon}\otimes \pd_\theta\tih_{t,\epsilon})^{-1},
\end{align}
satisfies the identity
\begin{gather}
\bigg(\pd_\epsilon\tv_{t,\epsilon}+\pd_\epsilon S_{t,\epsilon} \,\pd_\theta \tv_{t,\epsilon}\bigg)-\bigg(\pd_t\txi_{t,\epsilon}+\pd_tS_{t,\epsilon}\,\pd_\theta\txi_{t,\epsilon}\bigg)\nonumber\\
= [\tv_{t,\epsilon},\txi_{t,\epsilon}]+\tv_{t,\epsilon}\cdot\del S_{t,\epsilon}\,\pd_\theta\txi_{t,\epsilon}-\txi_{t,\epsilon}\cdot\del S_{t,\epsilon}\,\pd_\theta \tv_{t,\epsilon},\label{wkb_lin_formula}
\end{gather}
where $[\tv_{t,\epsilon},\txi_{t,\epsilon}]=\tv_{t,\epsilon}\cdot\del\txi_{t,\epsilon} - \txi_{t,\epsilon}\cdot\del\tv_{t,\epsilon}$ denotes the $\theta$-wise vector field commutator.
\end{lemma}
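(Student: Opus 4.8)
The plan is to recognize that the right-hand side of \eqref{wkb_lin_formula} is precisely the \emph{$S$-covariant} commutator $\tv\cdot\del^S\txi-\txi\cdot\del^S\tv$, and then to reproduce the classical derivation of the Lin constraint with every ordinary derivative replaced by its $S$-covariant counterpart. To set this up I would first introduce the covariant operators $D_t=\partial_t+\partial_t S\,\partial_\theta$, $D_\epsilon=\partial_\epsilon+\partial_\epsilon S\,\partial_\theta$, and the spatial operator $\del^S=\del+\del S\,\partial_\theta$, so that the two parameter velocities read $\tv=-(D_t\tih)\cdot(\del^S\tih)^{-1}$ and $\txi=-(D_\epsilon\tih)\cdot(\del^S\tih)^{-1}$. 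The single structural fact on which everything rests is that, because $S$ is independent of $\theta$, the operators $D_t$, $D_\epsilon$, $\del^S$, and $\partial_\theta$ commute pairwise exactly as ordinary partial derivatives do. I would verify this at the outset: expanding $D_tD_\epsilon-D_\epsilon D_t$ (and likewise the spatial analogues) and using equality of ordinary mixed partials together with $\partial_\theta S=0$ makes all terms cancel in symmetric pairs.

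The key reformulation step is to rewrite the definitions of $\tv$ and $\txi$ as advection identities by right-multiplying with the invertible matrix $\del^S\tih$:
\[
D_t\tih^A+\tv^i\,D_i\tih^A=0,\qquad D_\epsilon\tih^A+\txi^i\,D_i\tih^A=0,
\]
where $D_i$ is the $i$-th component of $\del^S$ and $A$ indexes $Q_0$. (Invertibility of $\del^S\tih$ is exactly what is needed for $\tv,\txi$ to be defined in \eqref{tilde_v_defined}, so it may be assumed throughout.) I would then apply $D_\epsilon$ to the first identity and $D_t$ to the second and subtract. By commutativity the terms $D_\epsilon D_t\tih^A$ and $D_t D_\epsilon\tih^A$ cancel, leaving
\[
0=(D_\epsilon\tv^i-D_t\txi^i)\,D_i\tih^A+\tv^i\,D_\epsilon D_i\tih^A-\txi^i\,D_t D_i\tih^A.
\]

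The heart of the calculation is to simplify the cross terms $\tv^i D_\epsilon D_i\tih^A-\txi^i D_t D_i\tih^A$. Commuting $D_\epsilon$ and $D_t$ past $D_i$ and substituting the advection identities back in for $D_\epsilon\tih$ and $D_t\tih$ produces two types of contributions: second-order terms proportional to $D_iD_j\tih^A$, which cancel in antisymmetric pairs because $D_i$ and $D_j$ commute, and first-order terms that assemble into $(\txi^iD_i\tv^j-\tv^iD_i\txi^j)D_j\tih^A$. Collecting everything gives
\[
\Big(D_\epsilon\tv^j-D_t\txi^j-\tv^i\,D_i\txi^j+\txi^i\,D_i\tv^j\Big)\,D_j\tih^A=0,
\]
and since $D_j\tih^A=\del^S\tih$ is invertible the bracket must vanish. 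Unpacking $D_t$, $D_\epsilon$, and $\del^S$ into their explicit forms turns the resulting identity $D_\epsilon\tv-D_t\txi=\tv\cdot\del^S\txi-\txi\cdot\del^S\tv$ into exactly \eqref{wkb_lin_formula}, since $\tv\cdot\del^S\txi-\txi\cdot\del^S\tv=[\tv,\txi]+\tv\cdot\del S\,\partial_\theta\txi-\txi\cdot\del S\,\partial_\theta\tv$.

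I expect the main obstacle to be bookkeeping rather than anything conceptual: one must organize the cross terms so that the second-order $D_iD_j\tih^A$ pieces visibly cancel and the first-order pieces collapse into the covariant commutator. Everything hinges on the commutativity of the $S$-covariant derivatives, so the one genuinely load-bearing observation is $\partial_\theta S=0$. As a sanity check and a slicker alternative route, I would note that the phase shift of Definition \ref{def:phase_shift} converts the covariant derivatives into honest partials, since $\partial_t\tih^S=(D_t\tih)^S$ and $\del\tih^S=(\del^S\tih)^S$; consequently $\tv^S$ and $\txi^S$ are the ordinary Eulerian velocities generated by $\tih^S$, and the classical Lin constraint applied to $\tih^S$ yields \eqref{wkb_lin_formula} directly after applying the inverse phase shift.
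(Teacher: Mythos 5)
Your proposal is correct, and your primary argument takes a genuinely different route from the paper's. The paper proves the lemma by citing the ordinary Lin constraint formula for a two-parameter family of diffeomorphisms and applying it to the phase-shifted family $\bm{F}_{t,\epsilon} = \tih_{t,\epsilon}^{S_{t,\epsilon}}$ (with $\theta$ riding along as a passive third parameter), observing that the parameter velocities of $\tih^S$ are exactly $\tv^S$ and $\txi^S$, and then shifting back by $-S$ via the chain rule --- precisely the ``slicker alternative route'' you sketch in your closing sentences. Your main proof instead inlines the classical computation in the covariant setting: you verify that $D_t$, $D_\epsilon$, and the components $D_i$ of $\del^S$ commute pairwise (the only place $\pd_\theta S = 0$ enters), rewrite the definitions of $\tv$ and $\txi$ as advection identities, cross-differentiate and subtract, and cancel the second-order terms by the symmetry $D_iD_j = D_jD_i$; since $\del^S\tih$ is invertible (implicit in the hypothesis that the velocities are defined at all), the covariant commutator identity $D_\epsilon\tv - D_t\txi = \tv\cdot\del^S\txi - \txi\cdot\del^S\tv$ follows, and unpacking the covariant operators gives \eqref{wkb_lin_formula}; each of these steps checks out. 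What the paper's route buys is brevity and reuse: the WKB formula is exposed as the classical Lin constraint in disguise, and the same device with the parameter pair $(t,\theta)$ in place of $(t,\epsilon)$ instantly yields the companion identity \eqref{other_identity} of Remark \ref{useful_remark}, which is needed later in the proof of Theorem \ref{theorem_3}. What your route buys is self-containedness and transparency about the mechanism: no prior result is invoked, and the proof isolates the single structural fact --- commutativity of the $S$-covariant derivatives, resting entirely on $\pd_\theta S = 0$ --- that makes the whole lemma work.
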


\begin{proof} 
Let $\bm{F}_{\lambda_1,\lambda_2}\in \text{Diff}(Q,Q_0)$ be any smooth two-parameter family of diffeomorphisms. The ordinary Lin constraint formula says that the parameter velocities $\bm{w}_k = -(\partial_{\lambda_k}\bm{F})\cdot (\del \bm{F})^{-1}$ satisfy
\begin{align}
\partial_{\lambda_2}\bm{w}_1 - \partial_{\lambda_1}\bm{w}_2 = [\bm{w}_1,\bm{w}_2].\label{lin_constraint_ordinary}
\end{align}
In the formula \eqref{lin_constraint_ordinary} set $\lambda_1 = t$, $\lambda_2 = \epsilon$, and $\bm{F}_{t,\epsilon} = \tih_{t,\epsilon}^{S_{t,\epsilon}}$. (Regard $\theta$ as a third parameter that comes along for the ride.) We then have
\begin{align}
\pd_\epsilon \tv^S - \pd_t\txi^S = [\tv^S,\txi^S],\label{proto_wkb_lin}
\end{align}
with
\begin{align}
\tv^S=&-\pd_t\tih^S\cdot (\del\tih^S)^{-1} = -\left([\pd_t\tih+\pd_t S \,\pd_\theta\tih]\cdot[\del\tih+\del S\otimes \pd_\theta\tih]^{-1}\right)^{S} \\
\txi^S=&-\pd_\epsilon\tih^S\cdot (\del\tih^S)^{-1} = -\left([\pd_\epsilon\tih+\pd_\epsilon S \,\pd_\theta\tih]\cdot[\del\tih+\del S\otimes \pd_\theta\tih]^{-1}\right)^{S} .
\end{align}
Phase shifting the formula \eqref{proto_wkb_lin} by $-S$ and applying the chain rule then leads to Eq.\,\eqref{wkb_lin_formula}. 

\end{proof}
\begin{remark}\label{useful_remark}
In the above proof, if we had instead set $\lambda_1 = t,\lambda_2 = \theta$ and applied the usual Lin constraint formula, the resulting identity would have been
\begin{align}
\pd_\theta\tv - \pd_t^S\widetilde{\bm{\zeta}} = [\tv,\widetilde{\bm{\zeta}}] + \tv\cdot\del S\,\pd_\theta\widetilde{\bm{\zeta}} - \widetilde{\bm{\zeta}}\cdot\del S \,\pd_\theta\tv,\label{other_identity}
\end{align}
where $\widetilde{\bm{\zeta}} = -\pd_\theta\tih\cdot(\del^S\tih)^{-1}$ is the $\theta$-parameter velocity. This identity will be used in the proof of Theorem \ref{theorem_3}.
\end{remark}
\begin{proof}[proof of Theorem 3]
According to the WKB Lin constraint formula \eqref{wkb_lin_formula}, the first variation of the velocity $\tv$ is given by
\begin{align}
\delta\tv = &-\delta S\,\partial_\theta\tv +\left(\pd_t\txi + \partial_t S\,\pd_\theta \txi\right)+[\tv,\txi] + \tv\cdot\del S\,\pd_\theta\txi -\txi\cdot\del S\,\pd_\theta\tv,
\end{align}
where the loop of vector fields $\txi = -(\delta\tih+\delta S\,\pd_\theta\tih)\cdot (\del\tih+\del S\otimes \pd_\theta\tih)^{-1}$. The first (fixed-endpoint) variation of the action $\widetilde{\mathsf{A}}$ is therefore given by
\begin{align}
\delta\widetilde{\mathsf{A}} =& \int_{t_1}^{t_2} \fint\int_Q \left(\tv-\frac{\pd\mc{H}_{\text{EP}}}{\pd\bm{p}}\right)\cdot\delta\tp\,d^3\bm{x}\,d\theta\nonumber\\
+&\int_{t_1}^{t_2} \fint\int_Q\left(\pd^S_t\tchi +\tv\cdot\del^S\tchi-\frac{\pd\mc{H}_{\text{EP}}}{\pd \rho}+\del^S\cdot\frac{\pd\mc{H}_{\text{EP}}}{\pd\del\rho}\right)\delta\trho\,d^3\bm{x}\,d\theta\nonumber\\
-&\int_{t_1}^{t_2} \fint\int_Q\left(\pd_t^S\trho + \del^S\cdot(\tv\trho)\right)\delta\tchi\,d^3\bm{x}\,d\theta\nonumber\\
-&\int_{t_1}^{t_2} \fint\int_Q\bigg(\pd_t^S \widetilde{\bm{P}} + \del^S(\widetilde{\bm{P}}\cdot\tv)+(\del^S\times \widetilde{\bm{P}})\times\tv+(\del^S\cdot\tv)\widetilde{\bm{P}}\bigg)\cdot\txi\,d^3\bm{x}\,d\theta\nonumber\\
-&\int_{t_1}^{t_2} \fint\int_Q\bigg(\pd_t(\trho \pd_\theta\tchi) +\del\cdot (\tv \trho\pd_\theta\tchi)+ \widetilde{\bm{P}}\cdot\pd_\theta\tv - \del\cdot\left(\pd_\theta\trho\frac{\pd\mc{H}_{\text{EP}}}{\pd \del\rho}\right)\bigg)\,\delta S\,d^3\bm{x}\,d\theta,
\end{align}
where we have temporarily introduced the shorthand notation $\widetilde{\bm{P}} = \tp+\trho\, \del^S\tchi$. Alternatively, we may isolate all of the variations of $S$ by writing $\txi = \txi_0 + \delta S\,\widetilde{\bm{\zeta}}$ with $\widetilde{\bm{\zeta}} = -(\pd_\theta\tih)\cdot (\del\tih+\del S\otimes \pd_\theta\tih)^{-1}$ and $\txi_0 = -\delta\tih\cdot(\del^S\tih)^{-1}$, thereby obtaining
\begin{align}
\delta\widetilde{\mathsf{A}} =& \int_{t_1}^{t_2} \fint\int_Q \left(\tv-\frac{\pd\mc{H}_{\text{EP}}}{\pd\bm{p}}\right)\cdot\delta\tp\,d^3\bm{x}\,d\theta\nonumber\\
+&\int_{t_1}^{t_2} \fint\int_Q\left(\pd^S_t\tchi +\tv\cdot\del^S\tchi-\frac{\pd\mc{H}_{\text{EP}}}{\pd \rho}+\del^S\cdot\frac{\pd\mc{H}_{\text{EP}}}{\pd\del\rho}\right)\delta\trho\,d^3\bm{x}\,d\theta\nonumber\\
-&\int_{t_1}^{t_2} \fint\int_Q\left(\pd_t^S\trho + \del^S\cdot(\tv\trho)\right)\delta\tchi\,d^3\bm{x}\,d\theta\nonumber\\
-&\int_{t_1}^{t_2} \fint\int_Q\bigg(\pd_t^S \widetilde{\bm{P}} + \del^S(\widetilde{\bm{P}}\cdot\tv)+(\del^S\times \widetilde{\bm{P}})\times\tv+(\del^S\cdot\tv)\widetilde{\bm{P}}\bigg)\cdot\txi_0\,d^3\bm{x}\,d\theta\nonumber\\
-&\int_{t_1}^{t_2} \int_Q\bigg(\pd_t\fint\widetilde{\mathcal{I}}\,d\theta +\del\cdot \fint \tv \widetilde{\mathcal{I}}\,d\theta - \del\cdot\left(\fint \pd_\theta\trho\frac{\pd\mc{H}_{\text{EP}}}{\pd \del\rho}\,d\theta\right)\bigg)\,\delta S\,d^3\bm{x}\,d\theta,\label{first_variation_formula_thm_3}
\end{align}
where the \emph{specific wave action density} $\widetilde{\mc{I}}$ is given by
\begin{align}
\widetilde{\mc{I}} =\trho \,\pd_\theta\tchi  + (\tp+\trho \,\del^S\tchi)\cdot\widetilde{\bm{\zeta}}  .
	\label{eq:specific_wave_density}
\end{align}  
Because $\delta\tchi,\delta\trho,\delta\tp,\delta S,$ and $\txi_0 $ are arbitrary, $\delta\widetilde{\mathsf{A}} = 0$ if and only if 
\begin{gather}
\pd_t^S\tih = -\frac{\pd\mc{H}_{\text{EP}}}{\pd\bm{p}}\cdot \del^S\tih\label{prf_3_first}\\
\pd_t^S \widetilde{\bm{P}} + \del^S\left(\widetilde{\bm{P}}\cdot\frac{\pd\mc{H}_{\text{EP}}}{\pd \tp}\right)+(\del^S\times \widetilde{\bm{P}})\times\frac{\pd\mc{H}_{\text{EP}}}{\pd \tp}+\left(\del^S\cdot\frac{\pd\mc{H}_{\text{EP}}}{\pd \tp}\right)\widetilde{\bm{P}} =0\label{proto_momentum_thm_3}\\
\pd_t^S\trho +\del^S\cdot\left(\trho\frac{\pd\mc{H}_{\text{EP}}}{\pd \tp}\right) = 0\\ 
\pd_t^S\tchi + \frac{\pd \mc{H}_{\text{EP}}}{\pd \tp}\cdot\del^S\tchi = \frac{\pd\mc{H}_{\text{EP}}}{\pd \rho}-\del^S\cdot\frac{\pd\mc{H}_{\text{EP}}}{\pd\del\rho}\label{prf_3_chi}\\
\pd_t\fint\widetilde{\mathcal{I}}\,d\theta +\del\cdot \fint \frac{\pd\mc{H}_{\text{EP}}}{\pd \bm{p}} \widetilde{\mathcal{I}}\,d\theta = \del\cdot\left(\fint \pd_\theta\trho\frac{\pd\mc{H}_{\text{EP}}}{\pd \del\rho}\,d\theta\right).\label{prf_3_last}
\end{gather}
Notice that in moving from the first variation formula \eqref{first_variation_formula_thm_3} to Eqs.\,\eqref{prf_3_first}-\eqref{prf_3_last}, we have used the $\tv = \pd\mc{H}_{\text{EP}}/\pd \tp$ in order to eliminate $\tv$ in favor of $\pd\mc{H}_{\text{EP}}/\pd \tp$. In order to finish the proof, we will now show that Eqs.\,\eqref{prf_3_first}-\eqref{prf_3_last} are equivalent to Eqs.\,\eqref{extension_labels}-\eqref{extension_chi} by (a) demonstrating that Eq.\,\eqref{proto_momentum_thm_3} is equivalent to Eq.\,\eqref{extension_momentum}, and (b) proving that the wave action conservation law \eqref{prf_3_last} is implied by Eqs.\,\eqref{extension_labels}-\eqref{extension_chi}.
\\ \\
\noindent (a): Notice that Eq.\,\eqref{proto_momentum_thm_3} may be written as $\mathsf{M}\widetilde{\bm{P}} = 0$, where $\mathsf{M}$ is the linear differential operator whose action on any time-dependent loop of vector fields $\widetilde{\bm{w}}$ is given by
\begin{align}
\mathsf{M}\widetilde{\bm{w}} =& \pd_t^S \widetilde{\bm{w}} + \del^S(\widetilde{\bm{w}}\cdot\frac{\pd\mc{H}_{\text{EP}}}{\pd \tp})+(\del^S\times \widetilde{\bm{w}})\times\frac{\pd\mc{H}_{\text{EP}}}{\pd \tp}+(\del^S\cdot\frac{\pd\mc{H}_{\text{EP}}}{\pd \tp})\widetilde{\bm{w}}\nonumber\\
=&\pd_t^S \widetilde{\bm{w}} + \del^S(\widetilde{\bm{w}}\cdot\frac{\pd\mc{H}_{\text{EP}}}{\pd \tp})+\del^S\cdot(\frac{\pd\mc{H}_{\text{EP}}}{\pd \tp}\otimes \widetilde{\bm{w}}) - (\del^S\widetilde{\bm{w}})\cdot \frac{\pd\mc{H}_{\text{EP}}}{\pd \tp}.
\end{align}
Because $\widetilde{\bm{P}}  = \tp+\trho\del^S\tchi$, Eq.\,\eqref{proto_momentum_thm_3} is also equivalent to $\mathsf{M}\tp = -\mathsf{M}(\trho\del^S\tchi)$. By a direct calculation involving the continuity equation \eqref{extension_cont}, we have
\begin{align}
\mathsf{M}(\trho\del^S\tchi) = &\rho \del^S (\pd_t^S\tchi+\frac{\pd\mc{H}_{\text{EP}}}{\pd \tp}\cdot\del^S\tchi)\nonumber\\
=& \rho\del^S \left(\frac{\pd\mc{H}_{\text{EP}}}{\pd \rho}-\del^S\cdot\frac{\pd\mc{H}_{\text{EP}}}{\pd\del\rho}\right),
\end{align}
where we have used the Euler-Lagrange equation \eqref{prf_3_chi} in the second line. Moreover, because
\begin{align}
\del^S\mc{H}_{\text{EP}} = \left(\frac{\pd\mc{H}_{\text{EP}}}{\pd \rho}-\del^S\cdot\frac{\pd\mc{H}_{\text{EP}}}{\pd\del\rho}\right)\del^S\trho  + \del^S\cdot\left(\frac{\pd\mc{H}_{\text{EP}}}{\pd \del\rho}\otimes \del^S\trho\right) + (\del^S\tp)\cdot\frac{\pd\mc{H}_{\text{EP}}}{\pd \tp}
\end{align}
when the derivatives of $\mc{H}_{\text{EP}}$ are evaluated at $(\tp,\trho,\del^S\trho)$, the sum $\mathsf{M}(\trho\del^S\tchi)+\del^S\mc{H}_{\text{EP}}$ is given by
\begin{align}
\mathsf{M}(\trho\del^S\tchi)+\del^S\mc{H}_{\text{EP}} = &\del^S\left(\trho\frac{\pd\mc{H}_{\text{EP}}}{\pd \rho}-\trho\del^S\cdot\frac{\pd\mc{H}_{\text{EP}}}{\pd\del\rho}\right)+\del^S\cdot\left(\frac{\pd\mc{H}_{\text{EP}}}{\pd \del\rho}\otimes \del^S\trho\right)\nonumber\\
&+(\del^S\tp)\cdot\frac{\pd\mc{H}_{\text{EP}}}{\pd \tp}.\label{useful_id_thm3}
\end{align}
Using Eq.\,\eqref{useful_id_thm3} to evaluate the right-hand-side of $\mathsf{M}\tp = -\mathsf{M}(\trho\del^S\tchi)$ leads directly to Eq.\,\eqref{extension_momentum}. 
\\ \\
\noindent (b): A simple way to see that Eqs.\,\eqref{extension_labels}-\eqref{extension_chi} imply the wave action conservation law is to analyze the quantity
\begin{align}
\gamma = \pd_t^S \widetilde{\mc{I}}+\del^S\cdot\left(\frac{\pd\mc{H}_{\text{EP}}}{\pd \tp}\widetilde{\mc{I}}\right).
\end{align} 
Using Eqs.\,\eqref{extension_momentum},\eqref{extension_cont}, \eqref{extension_chi}, and the identity \eqref{other_identity} introduced in Remark \ref{useful_remark},
the quantity $\gamma$ may be written
\begin{align}
\gamma = \trho\pd_\theta\left(\frac{\pd\mc{H}_{\text{EP}}}{\pd \rho}-\del^S\cdot\frac{\pd\mc{H}_{\text{EP}}}{\pd\del\rho}\right)+\tp\cdot\pd_\theta\frac{\pd\mc{H}_{\text{EP}}}{\pd\bm{p}}.
\end{align}
Upon applying the identity
\begin{align}
\pd_\theta \mc{H}_{\text{EP}} =&\pd_\theta\tp\cdot \frac{\pd \mc{H}_{\text{EP}}}{\pd \bm{p}}+\pd_\theta\rho \frac{\pd \mc{H}_{\text{EP}}}{\pd \rho}+(\pd_\theta\del^S\trho)\cdot \frac{\pd \mc{H}_{\text{EP}}}{\pd \del\rho}\nonumber\\
=&\pd_\theta\tp\cdot \frac{\pd \mc{H}_{\text{EP}}}{\pd \bm{p}}+\pd_\theta\rho \left(\frac{\pd \mc{H}_{\text{EP}}}{\pd \rho}-\del^S\cdot\frac{\pd  \mc{H}_{\text{EP}}}{\pd \del\rho}\right)+\del^S\cdot\left(\pd_\theta\rho \frac{\pd \mc{H}_{\text{EP}} }{\pd \del\rho}\right),
\end{align}
we also have
\begin{align}
\gamma =& \pd_\theta\left(\trho\frac{\pd\mc{H}_{\text{EP}}}{\pd \rho}-\trho\del^S\cdot\frac{\pd\mc{H}_{\text{EP}}}{\pd\del\rho} +\tp\cdot \frac{\pd \mc{H}_{\text{EP}}}{\pd \bm{p}} - \mc{H}_{\text{EP}} \right) +\del^S\cdot \left(\pd_\theta \trho\,
\frac{\pd  \mc{H}_{\text{EP}}}{\pd \del\rho}\right).
\end{align}
The $\theta$-average of $\gamma$ is therefore
\begin{align}
\pd_t \fint \widetilde{\mc{I}}\,d\theta + \del\cdot \fint \frac{\pd\mc{H}_{\text{EP}}}{\pd \bm{p}} \widetilde{\mc{I}}\,d\theta =\fint \gamma\,d\theta = \del\cdot\fint \pd_\theta\trho \frac{\pd\mc{H}_{\text{EP}}}{\pd\del\rho}\,d\theta,
\end{align}
which establishes the wave action conservation law \eqref{prf_3_last}.
\end{proof}

\section{Looping the relabeling group\label{sec:five}}
One of the most remarkable features of the variational principle introduced in Theorem \ref{theorem_3} is its associated symmetry group. Let $G$ be the group of symmetries of the LBEP phase space action functional \eqref{eq:EP:act}. (Think of $G$ as the symmetry group for the LBEP equations \emph{before} applying nonlinear WKB extension.) To $G$ we may associate the \emph{loop group}\cite{Pressley_1988} $\ell G$, which comprises mappings $S^1\rightarrow G$. In this section, we will show that $\ell G$ is a group of symmetries for the action functional \eqref{extLBEP_action} from Theorem \ref{theorem_3}. We say the symmetry group $G$ becomes \emph{looped} in passing from the LBEP equations to their nonlinear WKB extension. In particular, the subgroup of $G$ given by particle relabeling transformations becomes looped when passing from the LBEP equations to their non-linear WKB extension. Using Noether's theorem, we will deduce the conserved quantity associated with loops of relabeling transformations, and thereby infer the analogue of Kelvin's circulation theorem for Eulerian nonlinear WKB. Notably, this circulation theorem represents a kind of extension of the circulation theorem discussed in Ref.\,\onlinecite{Gjaja_Holm_1996}; the latter may be seen as a consequence of symmetry under the group of \emph{mean} (i.e. $\theta$-independent) relabeling transformations, while the circulation theorem discussed in this section is a consequence of symmetry under the larger group of loops of relabeling transformations. (We will discuss the relationship between these two notions of circulation in greater detail in the next section, where we will apply our theoretical results to a concrete example of wave-mean-flow interaction.) Finally, we will use the loops of relabeling transformations to give a group-theoretic explanation for the one-way coupling between $(\tp,\trho)$ and $(\tih,\tchi)$ in the extLBEP equations. In so doing, we will have demonstrated that the nonlinear WKB extension of an Euler-Poincar\'e fluid theory fits into a general pattern that was emphasized by Marsden and Weinstein in Ref.\,\onlinecite{Marsden_1982}; many dissipation-free models from continuum mechanics arise as quotients of variational models by an appropriate symmetry group.


We begin by recalling the definition of a loop group. Let $G$ be a group with elements $g\in G$ and product $g_1g_2\in G$. The loop group $\ell G$ associated with $G$ is the set of all mappings $S^1\rightarrow G$. When $G$ carries a manifold structure, we also require the mappings to be smooth. If $\widetilde{g}$ denotes a typical element of $\ell G$, the group multiplication $\widetilde{g}_1*\widetilde{g}_2$ in $\ell G$ is given by $(\widetilde{g}_1 *\widetilde{g}_2)(\theta) = \widetilde{g}_1(\theta)\widetilde{g}_2(\theta)$. Thus, the product on $\ell G$ is given by ``parallelizing" the product on $G$ over the loop parameter $\theta\in S^1$. Accordingly, the identity and inverse map for $\ell G$ are given by $e_{\ell G}(\theta) = e_G$ and $(\widetilde{g}^{-1})(\theta) = (\widetilde{g}(\theta))^{-1}$. Note that we use the same symbol for the inverse operations in $G$ and $\ell G$.

Next we turn to establishing the main result of this section.

\begin{theorem}\label{theorem_4}
Let $G$ be a group. Suppose $\Phi:\mathcal{C}_0\times G\rightarrow \mathcal{C}_0$ is a right $G$-action on the space of frozen field configurations that leaves the action functional \eqref{eq:EP:act} invariant. Let $\widetilde{\Phi}_0$ be the right $\ell G$-action on $\ell\mathcal{C}_0$ given by ``parallelizing" the action $\Phi$, i.e.
\begin{align}
\widetilde{\Phi}_0((\tih,\tp,\trho,\tchi),\widetilde{g})(\theta) = \Phi((\tih(\theta),\tp(\theta),\trho(\theta),\tchi(\theta)),\widetilde{g}(\theta)).\label{parallelized_G_action}
\end{align} 
Then there is a right $\ell G$-action $\widetilde{\Phi}$ on $\ell \mathcal{C}_0\times C^\infty(Q,S^1)$ given by
\begin{align}
\widetilde{\Phi}((\tih,\tp,\trho,\tchi,S), \widetilde{g}) = ([\widetilde{\Phi}_0((\tih^S,\tp^S,\trho^S,\tchi^S),\widetilde{g})]^{-S},S),\label{transformed_G_action}
\end{align}
that leaves the action functional \eqref{extLBEP_action} invariant. (Recall that the notation $\cdot^S$ was defined in Eq.\,\eqref{expS}.)
\end{theorem}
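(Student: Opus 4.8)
The plan is to show that phase shifting by $S$ intertwines the twisted derivatives $\partial_t^S=\partial_t+\partial_t S\,\partial_\theta$ and $\del^S=\del+\del S\,\partial_\theta$ with the ordinary derivatives $\partial_t,\del$, so that the extended action $\widetilde{\mathsf{A}}$ of Eq.\,\eqref{extLBEP_action} collapses into a $\theta$-average of copies of the unextended phase-space action $\mathsf{A}$ of Eq.\,\eqref{eq:EP:act}. Granting this reduction, both assertions of the theorem follow almost formally: invariance of $\widetilde{\mathsf{A}}$ descends from invariance of $\mathsf{A}$ under $\Phi$, and the right $\ell G$-action axioms for $\widetilde{\Phi}$ descend from those of the parallelized action $\widetilde{\Phi}_0$ of Eq.\,\eqref{parallelized_G_action}.

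First I would record the commutation identities for the phase shift of Definition \ref{def:phase_shift}. Treating $\del$ and $\partial_t$ as derivatives at fixed $\theta$, the chain rule gives
\begin{align}
\del(\psi^S)=(\del^S\psi)^S,\qquad \partial_t(\psi^S)=(\partial_t^S\psi)^S
\end{align}
for any $\psi:Q\times S^1\to T$, because $\del S$ and $\partial_t S$ depend on $\bm{x}$ alone and so pass through the shift unchanged. Since the phase shift acts pointwise it also commutes with products and with matrix inversion; applied to the velocity \eqref{tilde_v_defined} this yields $\tv^S=-\partial_t(\tih^S)\cdot(\del(\tih^S))^{-1}$, which is exactly the ordinary velocity formula \eqref{eq:EP_velocity_h} evaluated $\theta$-wise on $\tih^S$.

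Next I would substitute these relations into the extended Lagrangian \eqref{extended_lag}, after grouping the two $\tchi$-terms via $\dot{\tchi}+\dot{S}\,\partial_\theta\tchi=\partial_t^S\tchi$. For each fixed $\bm{x}$ the loop integral is invariant under the translation $\theta\mapsto\theta+S(\bm{x})$, so every integrand may be replaced by its phase shift; combined with the commutation identities this turns the integrand into that of the unextended Lagrangian \eqref{psl_no_param} evaluated on $(\tih^S,\tp^S,\trho^S,\tchi^S)$. Integrating over $t$ and exchanging $\int dt$ with $\fint d\theta$ then produces the central reduction
\begin{align}
\widetilde{\mathsf{A}}(\gamma)=\fint \mathsf{A}(\gamma^S(\theta))\,d\theta,
\end{align}
where $\gamma^S(\theta)$ denotes the curve $t\mapsto(\tih^S(\cdot,\theta,t),\tp^S(\cdot,\theta,t),\trho^S(\cdot,\theta,t),\tchi^S(\cdot,\theta,t))$ in $\mathcal{C}_0$.

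Finally I would close both claims. Writing $\sigma_S$ for the invertible, pointwise phase-shift operator on $\ell\mathcal{C}_0$, the definition \eqref{transformed_G_action} reads $\widetilde{\Phi}((\psi,S),\widetilde{g})=(\sigma_S^{-1}(\widetilde{\Phi}_0(\sigma_S(\psi),\widetilde{g})),S)$; since the $S$-component is preserved, the identity and composition axioms for $\widetilde{\Phi}$ reduce, after cancelling $\sigma_S\sigma_S^{-1}$, to those of $\widetilde{\Phi}_0$, which hold because parallelizing a right $G$-action is again a right $\ell G$-action. The transformation is moreover rigged so that $(\gamma')^S(\theta)=\Phi(\gamma^S(\theta),\widetilde{g}(\theta))$ for $\gamma'=\widetilde{\Phi}(\gamma,\widetilde{g})$, where $\widetilde{g}(\theta)\in G$ is constant in $t$; applying the reduction and then invariance of $\mathsf{A}$ under $\Phi$ pointwise in $\theta$ gives
\begin{align}
\widetilde{\mathsf{A}}(\gamma')=\fint \mathsf{A}(\Phi(\gamma^S(\theta),\widetilde{g}(\theta)))\,d\theta=\fint \mathsf{A}(\gamma^S(\theta))\,d\theta=\widetilde{\mathsf{A}}(\gamma).
\end{align}
The main obstacle is the reduction identity itself: one must verify that the twisting operators are exactly undone by the phase shift and that translating the loop measure introduces no Jacobian correction, so that the genuinely $S$-dependent $\widetilde{\mathsf{L}}$ degenerates into a plain $\theta$-average of the original $\mathsf{L}$. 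Everything downstream is then formal, the only care being to track that $\widetilde{g}(\theta)$ enters as a $t$-independent element of $G$ at each $\theta$, so that the hypothesis on $\mathsf{A}$ applies verbatim.
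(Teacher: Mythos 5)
Your proposal is correct and matches the paper's own argument in all essentials: the paper introduces the $\theta$-averaged unextended action $\widetilde{\mathsf{A}}_0$, conjugates the parallelized action $\widetilde{\Phi}_0$ by the phase-shift map $T$, and then verifies by the same chain-rule identities (e.g. $\partial_t(\tih^S)=(\partial_t^S\tih)^S$) and the Jacobian-free loop-translation identity $\fint\int_Q f^S\,d^3\bm{x}\,d\theta=\fint\int_Q f\,d^3\bm{x}\,d\theta$ that the transformed functional equals $\widetilde{\mathsf{A}}$, which is precisely your reduction identity $\widetilde{\mathsf{A}}(\gamma)=\fint\mathsf{A}(\gamma^S(\theta))\,d\theta$ read in the opposite direction. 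The only cosmetic difference is that you pull $\widetilde{\mathsf{A}}$ back to a $\theta$-average of $\mathsf{A}$ and verify the group-action axioms explicitly, whereas the paper pushes the average forward through $T$ and leaves those axioms implicit in the conjugation structure.
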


\begin{proof}
Given a smooth curve $\hat{\gamma}:[t_1,t_2]\rightarrow \ell\mc{C}_0\times C^\infty(Q,S^1)$, introduce the component curves $\hat{\gamma}_1,\hat{\gamma}_2$ satisfying $\hat{\gamma}(t) = (\hat{\gamma}_1(t),\hat{\gamma}_2(t))\in \ell\mc{C}_0\times C^\infty(Q,S^1)$ for all $ t\in [t_1,t_2]$. Consider the action functional $\widetilde{\mathsf{A}}_0$ defined on the space of such curves by the formula
\begin{align}
\widetilde{\mathsf{A}}_0(\hat{\gamma}) = \int_{t_1}^{t_2}\fint\mathsf{L}(\hat{\gamma}_1(t,\theta),\partial_t\hat{\gamma}_1(t,\theta))\,d\theta\,dt.\label{parallelized_action}
\end{align}
We recall that $\mathsf{L}$ is the parameter-independent phase space Lagrangian for the LBEP equations introduced in Eq.\,\eqref{psl_no_param}. The intuition behind Eq.\,\eqref{parallelized_action} is as follows. For each $\theta\in S^1$, we may evaluate the action $\mathsf{A}$ in Eq.\,\eqref{eq:EP:act} on the curve $t\mapsto \hat{\gamma}_1(\theta,t)\in \mathcal{C}_0$, thereby obtaining the real number $\mathsf{A}(\theta)$. The value of $\widetilde{\mathsf{A}}_0(\hat{\gamma})$ is then given by averaging $\mathsf{A}(\theta)$ over $S^1$. Because $\mathsf{A}(\theta)$ is $G$-invariant for each $\theta\in S^1$, it follows that the ``parallelized" $G$-action $\widetilde{\Phi}_0$ leaves the action $\widetilde{\mathsf{A}}_0$ invariant.

As is generally true in Lagrangian mechanics, equivalent formulations of the variational problem $\delta \widetilde{\mathsf{A}}_0 = 0 $ may be obtained by applying invertible transformations to the ``generalized coordinates," which in this case may be identified with the space $\ell\mc{C}_0\times C^\infty(Q,S^1)$. In particular, we may apply the mapping $T:(\hat{\tih},\hat{\tp},\hat{\trho},\hat{\tchi},S)\mapsto (\tih,\tp,\trho,\tchi,S)$, where
\begin{align}
(\tih,\tp,\trho,\tchi) = (\hat{\tih},\hat{\tp},\hat{\trho},\hat{\tchi})^{-S}.
\end{align}
After applying the transformation $T$, the action functional $\widetilde{\mathsf{A}}_0$ is transformed into the action functional $\widetilde{\mathsf{A}}_0^*$, whose value at $\gamma = (\gamma_1,\gamma_2) : [t_1,t_2]\rightarrow \ell\mc{C}_0\times C^\infty(Q,S^1)$ is given by
\begin{align}
\widetilde{\mathsf{A}}_0^*(\gamma) =& \widetilde{\mathsf{A}}_0(\hat{\gamma})\nonumber\\
=&\int_{t_1}^{t_2}\fint\mathsf{L}({\gamma}_1^S(t,\theta),\partial_t{\gamma}_1^S(t,\theta))\,d\theta\,dt.\label{transformed_parallelized_action}
\end{align}

Because $\widetilde{\mathsf{A}}_0(\hat{\gamma})$ is by hypothesis invariant under the transformation 
\begin{align}
\hat{\gamma}&\mapsto \hat{\gamma}\cdot\widetilde{g}\nonumber\\
(\hat{\gamma}\cdot\widetilde{g})(t) &= \bigg(\widetilde{\Phi}_0(\hat{\gamma}_1(t),\widetilde{g}),\hat{\gamma}_2(t)\bigg)
\end{align}
for each $\widetilde{g}\in \ell G$, the quantity $\widetilde{\mathsf{A}}_0^*({\gamma})$ must be invariant under the transformation  given by
\begin{align}
\gamma&\mapsto \gamma\bullet \widetilde{g}\nonumber\\
(\gamma\cdot\widetilde{g})(t)&=  \bigg(T\circ\bigg((T^{-1}\circ\gamma)\cdot \widetilde{g}\bigg) \bigg)(t)\nonumber\\
& = \bigg(\widetilde{\Phi}_0({\gamma}_1^S(t),\widetilde{g})^{-S},\gamma_2(t)\bigg)\nonumber\\
& = \widetilde{\Phi}(\gamma(t),\widetilde{g})\label{invariance_id_thm4}
\end{align}
for each $\widetilde{g}\in \ell G$. Note that we have recognized the definition \eqref{transformed_G_action} of $\widetilde{\Phi}$ in the last line of Eq.\,\eqref{invariance_id_thm4}. We have therefore shown that the $G$-action $\widetilde{\Phi}$ leaves the action functional $\widetilde{\mathsf{A}}_0^*$ invariant.

In order to complete the proof, we will now show by direct calculation that $\widetilde{\mathsf{A}}_0^*$ is in fact equal to the action defined in Eq.\,\eqref{extLBEP_action}. Write $\hat{\gamma}_1 = ({\tih},{\tp},{\trho},{\tchi}) $ and ${\gamma}_1 = S$. According to Eq.\,\eqref{transformed_parallelized_action} and \eqref{psl_no_param}, the value of $\widetilde{\mathsf{A}}_0^*(\gamma)$ is given by
\begin{align}
\widetilde{\mathsf{A}}_0^*(\gamma) = & -\int_{t_1}^{t_2}\fint \int_Q  \pd_t \tih^S\cdot(\del\tih^S)^{-1}\cdot \widetilde{\bm{p}}^S\,d^3\bm{x}\,d\theta\,dt \nonumber\\
&+ \int_{t_1}^{t_2}\fint\int_Q (\pd_t{\tchi}^S-\del\tchi^S\cdot (\del \tih^S)^{-1}\cdot\pd_t\tih^S)\,\rho^S\,d^3\bm{x}\,d\theta\,dt\nonumber\\
&-\int_{t_1}^{t_2}\fint\int_Q \mathcal{H}_{\text{EP}}(\bm{p}^S,\rho^S,\del\rho^S)\,d^3\bm{x}\,d\theta\,dt.
\end{align}
Using the derivative identities
\begin{align}
\pd_t\tih^S(\bm{x},\theta) &= \pd_t\tih(\bm{x},\theta+S(\bm{x})) + \pd_t S(\bm{x})\, \pd_\theta\tih(\bm{x},\theta+S(\bm{x}))\nonumber\\
\del\tih^S &=\del\tih(\bm{x},\theta+S(\bm{x})) + \del S(\bm{x})\otimes \pd_\theta\tih(\bm{x},\theta+S(\bm{x})), 
\end{align}
along with similar identities for $\tchi$ and $\trho$, we may also write
\begin{align}
\widetilde{\mathsf{A}}_0^*(\gamma) = & \int_{t_1}^{t_2}\fint \int_Q  \tp^S\cdot\tv^S\,d^3\bm{x}\,d\theta\,dt \nonumber\\
&+ \int_{t_1}^{t_2}\fint\int_Q ([\pd^S_t{\tchi}]^S+\tv^S\cdot[\del^S\tchi]^S)\,\rho^S\,d^3\bm{x}\,d\theta\,dt\nonumber\\
&-\int_{t_1}^{t_2}\fint\int_Q \mathcal{H}_{\text{EP}}(\tp,\trho,\del^S\trho)^S\,d^3\bm{x}\,d\theta\,dt,
\end{align}
where we have defined $\tv = - (\pd^S_t\tih)\cdot (\del^S\tih)^{-1}$ as was done earlier in Eq.\,\eqref{tilde_v_defined}.
Now apply the integral identity $\fint \int_Q f^S(\bm{x},\theta)\,d^3\bm{x}\,d\theta = \fint\int_Q f(\bm{x},\theta)\,d^3\bm{x}\,d\theta$, which is valid for all integrable $f:Q\times S^1\rightarrow \mathbb{R}$, to obtain
\begin{align}
\widetilde{\mathsf{A}}_0^*(\gamma) = & \int_{t_1}^{t_2}\fint \int_Q  \tp\cdot\tv\,d^3\bm{x}\,d\theta\,dt \nonumber\\
&+ \int_{t_1}^{t_2}\fint\int_Q (\pd^S_t{\tchi}+\tv\cdot\del^S\tchi)\,\rho\,d^3\bm{x}\,d\theta\,dt\nonumber\\
&-\int_{t_1}^{t_2}\fint\int_Q \mathcal{H}_{\text{EP}}(\tp,\trho,\del^S\trho)\,d^3\bm{x}\,d\theta\,dt.\label{final_thm4}
\end{align}
By Eq.\,\eqref{extended_lag}, Eq.\,\eqref{final_thm4} is just the formula \eqref{extLBEP_action} defining the action functional $\widetilde{\mathsf{A}}$.
\end{proof}

Theorem \ref{theorem_4} says that the symmetry group of the phase-space action for the LBEP equations becomes \emph{looped} when applying the nonlinear WKB extension procedure. As a consequence, we should expect that momentum maps for the LBEP equations should be looped by nonlinear WKB extension as well. The next proposition shows that this is true.

\begin{proposition}\label{looped_momentum_prop}
Endow $\mathcal{C}_0$ with the symplectic form $-\mathbf{d}\Theta$, where the $1$-form $\Theta$ is given by
\begin{align}\label{LBEP_Theta}
\Theta[\delta\bm{h},\delta\bm{p},\delta\rho,\delta\chi] = \int_Q \bm{p}\cdot\bm{\xi}\,d^3\bm{x} + \int_Q \rho (\delta\chi+\bm{\xi}\cdot\del\chi)\,d^3\bm{x},
\end{align}
with $\bm{\xi} = -\delta\bm{h}\cdot(\del\bm{h})^{-1}$. Endow $\ell\mathcal{C}_0\times C^\infty(Q,S^1)$ with the presymplectic form $-\mathbf{d}\widetilde{\Theta}$, where 
\begin{align}
\widetilde{\Theta}[\delta\tih,\delta\tp,\delta\trho,\delta\tchi,\delta S]  = \fint \int_Q \tp\cdot \txi\,d^3\bm{x}\,d\theta + \fint \int_Q \trho\left(\delta\tchi+\delta S\pd_\theta\tchi+\txi\cdot\del^S\tchi\right)\,d^3\bm{x}\,d\theta,
\end{align}
with $\txi = - (\delta\tih +\delta S\pd_\theta\tih)\cdot(\del^S\tih)^{-1}$.
Let $G$ be a Lie group. Suppose there is a right $G$-action on $\mathcal{C}_0$ that preserves $\Theta$, and therefore admits an $\text{Ad}^*$-equivariant momentum map $\mu:\mathcal{C}_0\rightarrow \mathfrak{g}^*$ given by
\begin{align}
\langle \mu , X \rangle = \Theta[X_{\mathcal{C}_0}]
\end{align}
for each $X\in\mathfrak{g}$. (The vector field $X_{\mathcal{C}_0}$ is the infinitesimal generator on $\mathcal{C}_0$ in the direction $X\in\mathfrak{g}$.) Then the looped $\ell G$-action given by Theorem \ref{theorem_4} admits an $\text{Ad}^*$-equivariant presymplectic momentum map $\widetilde{\mu}:\ell\mathcal{C}_0\times C^\infty(Q,S^1)\rightarrow \ell\mathfrak{g}^*$ given by
\begin{align}
\widetilde{\mu}(\tih,\tp,\trho,\tchi,S)(\theta) = \mu(\tih^S(\theta),\tp^S(\theta),\trho^S(\theta),\tchi^S(\theta))\label{desried_momentum_map}
\end{align}
\end{proposition}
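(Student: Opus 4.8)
The plan is to reduce the whole statement to the ``parallelized'' situation on $\ell\mc{C}_0$, where the $\ell G$-action is just $\Phi$ applied $\theta$-wise, and then transport the result across the phase-shift map $T$ that already appeared in the proof of Theorem \ref{theorem_4}. First I would introduce on $\ell\mc{C}_0$ the parallelized potential $\widetilde{\Theta}_0$ obtained by averaging $\Theta$ pointwise in $\theta$, i.e. $\widetilde{\Theta}_0[\delta\tih,\delta\tp,\delta\trho,\delta\tchi]=\fint\Theta[\delta\tih(\theta),\delta\tp(\theta),\delta\trho(\theta),\delta\tchi(\theta)]\,d\theta$, together with the pairing $\langle\widetilde{\nu},\widetilde{X}\rangle=\fint\langle\widetilde{\nu}(\theta),X(\theta)\rangle\,d\theta$ on $\ell\mathfrak{g}^*\times\ell\mathfrak{g}$. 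The infinitesimal generator of the parallelized action $\widetilde{\Phi}_0$ in the direction $\widetilde{X}\in\ell\mathfrak{g}$ is simply $\theta\mapsto (X(\theta))_{\mc{C}_0}$ acting on the $\theta$-slice, so contracting $\widetilde{\Theta}_0$ against it and using the hypothesis $\langle\mu,X\rangle=\Theta[X_{\mc{C}_0}]$ produces, with essentially no work, the $\mathrm{Ad}^*$-equivariant momentum map $\widetilde{\mu}_0(\tih,\tp,\trho,\tchi)(\theta)=\mu(\tih(\theta),\tp(\theta),\trho(\theta),\tchi(\theta))$ for $(\widetilde{\Theta}_0,\widetilde{\Phi}_0)$. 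Both invariance of $\widetilde{\Theta}_0$ under $\widetilde{\Phi}_0$ and $\mathrm{Ad}^*$-equivariance of $\widetilde{\mu}_0$ are inherited $\theta$-wise from the corresponding properties of $\Theta$, $\Phi$, and $\mu$.

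The crux is to show that the phase-shift map $T:(\tih,\tp,\trho,\tchi,S)\mapsto((\tih,\tp,\trho,\tchi)^{-S},S)$ pulls the extLBEP potential $\widetilde{\Theta}$ back to the parallelized one, $T^*\widetilde{\Theta}=\widetilde{\Theta}_0$ (with $S$ a silent spectator). This is the $1$-form analogue of the computation at the end of the proof of Theorem \ref{theorem_4} that identified $\widetilde{\mathsf{A}}_0^*$ with $\widetilde{\mathsf{A}}$ (cf. \eqref{transformed_parallelized_action}), and it is carried out with the same tools: the chain-rule identities $\del^S(\zeta^{-S})=(\del\zeta)^{-S}$ and $\pd_\theta(\zeta^{-S})=(\pd_\theta\zeta)^{-S}$ relating shifted and unshifted derivatives, together with the averaging identity $\fint\int_Q f^{\pm S}\,d^3\bm{x}\,d\theta=\fint\int_Q f\,d^3\bm{x}\,d\theta$. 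The mechanism is that when one differentiates the shift in $T$, the combinations that actually appear in $\widetilde{\Theta}$ collapse: writing the hatted (parallelized) variables upstairs, one finds $\delta\tih+\delta S\,\pd_\theta\tih=(\delta\widehat{\tih})^{-S}$ and $\delta\tchi+\delta S\,\pd_\theta\tchi=(\delta\widehat{\tchi})^{-S}$, so $\txi=\widehat{\bm\xi}^{\,-S}$ with $\widehat{\bm\xi}=-\delta\widehat{\tih}\cdot(\del\widehat{\tih})^{-1}$, and every summand of $\widetilde{\Theta}$ reduces under $\cdot^{S}$-averaging to the matching summand of $\widetilde{\Theta}_0$. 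I expect this verification to be the one genuine obstacle; its point is that the explicit $\delta S\,\pd_\theta$ terms built into $\widetilde{\Theta}$ (through $\delta S\,\pd_\theta\tchi$ and through the $\delta S$-piece of $\txi$) are precisely those needed to annihilate the $\delta S$ produced by $dT$.

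Finally I would invoke naturality of momentum maps under equivariant maps that preserve the symplectic potential. The proof of Theorem \ref{theorem_4} already established the intertwining $\widetilde{\Phi}(\cdot,\widetilde{g})=T\circ\widetilde{\Phi}_0(\cdot,\widetilde{g})\circ T^{-1}$. Combined with $T^*\widetilde{\Theta}=\widetilde{\Theta}_0$, this yields two things at once: first, $(\widetilde{\Phi}^{\widetilde g})^*\widetilde{\Theta}=(T^{-1})^*(\widetilde{\Phi}_0^{\widetilde g})^*T^*\widetilde{\Theta}=\widetilde{\Theta}$, so $\widetilde{\Phi}$ preserves $\widetilde{\Theta}$ and hence, by Cartan's formula $\iota_{\widetilde{X}}(-\mathbf{d}\widetilde{\Theta})=\mathbf{d}\,\iota_{\widetilde{X}}\widetilde{\Theta}$, the contraction $\langle\widetilde{\mu},\widetilde{X}\rangle:=\widetilde{\Theta}[\widetilde{X}_{\ell\mc{C}_0\times C^\infty(Q,S^1)}]$ is a genuine presymplectic momentum map for $-\mathbf{d}\widetilde{\Theta}$; second, $\widetilde{\mu}=\widetilde{\mu}_0\circ T^{-1}$. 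Since $T^{-1}(\tih,\tp,\trho,\tchi,S)=((\tih,\tp,\trho,\tchi)^{S},S)$, evaluating gives $\widetilde{\mu}(\tih,\tp,\trho,\tchi,S)(\theta)=\mu(\tih^S(\theta),\tp^S(\theta),\trho^S(\theta),\tchi^S(\theta))$, which is exactly \eqref{desried_momentum_map}. The $\mathrm{Ad}^*$-equivariance of $\widetilde{\mu}$ then transfers from that of $\widetilde{\mu}_0$, because $T$ intertwines the two actions and the coadjoint action of $\ell G$ on $\ell\mathfrak{g}^*$ is the $\theta$-wise coadjoint action of $G$ on $\mathfrak{g}^*$.
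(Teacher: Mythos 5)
Your proposal is correct and takes essentially the same approach as the paper's proof: both parallelize the potential to $\widetilde{\Theta}_0$ and the momentum map to $\widetilde{\mu}_0$ on $\ell\mathcal{C}_0$, establish $T^*\widetilde{\Theta}=\widetilde{\Theta}_0$ via the computation underlying Theorem \ref{theorem_4}, and then transport $\widetilde{\mu}_0$ across $T$ using the intertwining relation $\widetilde{\Phi}_{\widetilde{g}}=T\circ(\widetilde{\Phi}_{0\widetilde{g}}\times I)\circ T^{-1}$ to obtain $\widetilde{\mu}=\widetilde{\mu}_0\circ T^{-1}$, which evaluates to the stated formula. The only difference is presentational: you propose verifying the pullback identity $T^*\widetilde{\Theta}=\widetilde{\Theta}_0$ directly at the level of the $1$-form, whereas the paper simply cites the corresponding computation from the proof of Theorem \ref{theorem_4}.
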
 

\begin{proof}
Given any group action $\Psi: M\times H\rightarrow M$, where $M$ is a set and $H$ is the group, it will be convenient to introduce the maps $\Psi_h:M\rightarrow M$ for each $h\in H$, where $\Psi_h(m) = \Psi(m,h)$.
Let $\Phi:\mathcal{C}_0\times G\rightarrow \mathcal{C}_0$ be the right $G$-action that preserves $\Theta$, $\widetilde{\Phi}_0: \ell\mathcal{C}_0\times \ell G\rightarrow \ell\mathcal{C}_0$ the parallelization of $\Phi$ defined by Eq.\,\eqref{parallelized_G_action}, and $\widetilde{\Phi}:\ell\mathcal{C}_0\times C^\infty(Q,S^1)\times \ell G\rightarrow \ell\mathcal{C}_0\times C^\infty(Q,S^1)$ the $\ell G$ action provided by Theorem \ref{theorem_4}. By hypothesis, the action $\Phi$ preserves the $1$-form $\Theta$ in the sense that $\Phi_g^*\Theta = \Theta$ for each $g\in G$.

Let  $T: \ell\mathcal{C}_0\times C^\infty(Q,\mathbb{R})\rightarrow \ell\mathcal{C}_0\times C^\infty(Q,\mathbb{R})$ be the diffeomorphism given by $T:(\hat{\tih},\hat{\tp},\hat{\trho},\hat{\tchi},S)\mapsto (\tih,\tp,\trho,\tchi,S)$, with
\begin{align}
(\tih,\tp,\trho,\tchi) = (\hat{\tih},\hat{\tp},\hat{\trho},\hat{\tchi})^{-S},
\end{align}
and $\widetilde{\Theta}_0$ the $1$-form on $\ell\mathcal{C}_0\times C^\infty(Q,S^1)$ defined by
\begin{align}
\widetilde{\Theta}_0[\delta\tih,\delta\tp,\delta\trho,\delta\tchi,\delta S] = \fint \Theta[\delta\tih(\theta),\delta\tp(\theta),\delta\trho(\theta),\delta\tchi(\theta)]\,d\theta.\label{theta0}
\end{align}
The proof of Theorem \ref{theorem_4} shows that the pullback of $\widetilde{\Theta}$ along $T$
is given by $\widetilde{\Theta}_0$. 

Because $\Phi_g^*\Theta = \Theta$ for each $g\in G$, we have $(\widetilde{\Phi}_{0\widetilde{g}}\times I)^*\widetilde{\Theta}_0 = \widetilde{\Theta}_0$ for each $\widetilde{g}\in \ell G$, where $I:C^\infty(Q,S^1)\rightarrow C^\infty(Q,S^1)$ is the identity mapping on the space of phase functions. Therefore the mapping $\widetilde{\mu}_0: \ell\mathcal{C}_0\times C^\infty(Q,S^1)\rightarrow \ell\mathfrak{g}^*$ defined by
\begin{align}
\langle\widetilde{\mu}_0 , \widetilde{X}\rangle= \widetilde{\Theta}_0[\widetilde{X}_{\ell\mathcal{C}_0}\oplus 0] = \fint \langle \mu(\tih(\theta),\tp(\theta),\trho(\theta),\tchi(\theta)),\widetilde{X}(\theta)\rangle\,d\theta\label{useful_prop3}
\end{align}
for each $\widetilde{X}\in\ell\mathfrak{g}$ is an $\text{Ad}^*$-equivariant presymplectic momentum map with respect to the presymplectic form $-\mathbf{d}\widetilde{\Theta}_0$. (Note that we have used the same notation for the pairings between $\mathfrak{g},\mathfrak{g}^*$ and $\ell\mathfrak{g},\ell\mathfrak{g}^*$.) In other words, we have
\begin{align}
\mathbf{d}\langle\widetilde{\mu}_0,\widetilde{X}\rangle = - \iota_{\widetilde{X}_{\ell\mathcal{C}_0}\oplus 0}\mathbf{d}\widetilde{\Theta}_0,\label{presymplectic_mom_map}
\end{align}
for each $\widetilde{X}\in\ell\mathfrak{g}$.

The pushforward of Eq.\,\eqref{presymplectic_mom_map} along $T$ is
\begin{align}
\mathbf{d}\langle\widetilde{\mu}_0\circ T^{-1},\widetilde{X}\rangle = -\iota_{T_*(\widetilde{X}_{\ell\mathcal{C}_0}\oplus 0)}\mathbf{d}\widetilde{\Theta}.
\end{align}
But because $\widetilde{\Phi}_{\widetilde{g}} = T\circ(\widetilde{\Phi}_{0\widetilde{g}}\times I)\circ T^{-1}$, the infinitesimal generator of $\widetilde{X}$ with respect to the group action $\widetilde{\Phi}$ is just $\widetilde{X}_{\ell\mathcal{C}_0\times C^\infty(Q,S^1)} = T_*(\widetilde{X}_{\ell\mathcal{C}_0}\oplus 0)$. Therefore $\widetilde{\mu} = \widetilde{\mu}_0\circ T^{-1}$ is a presymplectic momentum map with respect to $-\mathbf{d}\widetilde{\Theta}$. In order to show that $\widetilde{\mu}$ is the same as $\widetilde{\mu}$ given in the statement of the proposition, it is enough to note that Eq.\,\eqref{useful_prop3} implies
\begin{align}
\widetilde{\mu}_0(\tih,\tp,\trho,\tchi,S) = \mu(\tih(\theta),\tp(\theta),\trho(\theta),\tchi(\theta)).
\end{align}
\end{proof}

Theorem \ref{theorem_4} and Proposition \ref{looped_momentum_prop} apply to any (Lie) subgroup of the symmetry group for the action functional \eqref{eq:EP:act} whatsoever. They apply in particular to the group of isometries of the fluid container $Q$, which corresponds to momentum conservation. From the perspective of dissipation-free fluid models, however, a more interesting subgroup is the group of particle relabeling transformations of the reference fluid container $Q_0$. Before applying nonlinear WKB extension, this group of symmetries is responsible for the well-known Kelvin circulation theorem. Let us now use Proposition \ref{looped_momentum_prop} to describe what happens to Kelvin's circulation theorem \emph{after} applying the nonlinear WKB extension procedure.

\begin{proposition}\label{prop4}
The mapping $\widetilde{\mu}: \ell\mathcal{C}_0\times C^\infty(Q,S^1)\rightarrow (\ell\mathfrak{X}(Q_0)\times C^\infty(Q_0,\mathbb{R}))^*$ given by
\begin{align}
\widetilde{\mu}(\tih,\tp,\trho,\tchi) = \bigg(\tih^S_*\left[\tp^S\cdot d\bm{x}\otimes d^3\bm{x} + \mathbf{d}\tchi^S\otimes\trho^S\,d^3\bm{x} \right],\tih^S_*\left[\trho^S\,d^3\bm{x}\right]\bigg)
\end{align}
is a $(\ell\mathfrak{X}(Q_0)\times C^\infty(Q_0,\mathbb{R}))^*$-valued first-integral of the extLBEP equations.
\end{proposition}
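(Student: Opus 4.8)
The plan is to realize $\widetilde\mu$ as the looped momentum map produced by Proposition~\ref{looped_momentum_prop}, applied to the relabeling-plus-shift symmetry group of the LBEP phase-space action, and then to invoke Noether's theorem together with the looped symmetry of Theorem~\ref{theorem_4} to obtain conservation. Concretely, I would take $G=\mathrm{Diff}(Q_0)\ltimes C^\infty(Q_0,\mathbb{R})$, with Lie algebra $\mathfrak{g}=\mathfrak{X}(Q_0)\times C^\infty(Q_0,\mathbb{R})$, acting on the right on $\mathcal{C}_0$ as follows: a relabeling $\eta\in\mathrm{Diff}(Q_0)$ sends $\bm{h}\mapsto\eta^{-1}\circ\bm{h}$ and fixes $(\bm{p},\rho,\chi)$, while $f\in C^\infty(Q_0)$ sends $\chi\mapsto\chi+f\circ\bm{h}$ and fixes $(\bm{h},\bm{p},\rho)$. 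The first task is to check that this action leaves the phase-space action \eqref{eq:EP:act} invariant and, more precisely, preserves the one-form $\Theta$ of \eqref{LBEP_Theta}, which is the hypothesis required to apply Proposition~\ref{looped_momentum_prop}. Invariance of the Lagrangian \eqref{psl_no_param} is immediate once one notes that $\bm{v}=-\dot{\bm{h}}\cdot(\del\bm{h})^{-1}$ is unchanged under $\bm{h}\mapsto\eta^{-1}\circ\bm{h}$ (the Jacobian factors of $\eta^{-1}$ cancel), and that the shift $\chi\mapsto\chi+f\circ\bm{h}$ changes $\dot\chi+\bm{v}\cdot\del\chi$ by the material derivative of the advected scalar $f\circ\bm{h}$, which vanishes by the back-to-labels equation \eqref{eq:extended_first}. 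The corresponding invariance of $\Theta$ follows because the $G$-action is, apart from the advected shift of $\chi$, the cotangent lift of relabeling.

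Next I would compute the momentum map $\mu:\mathcal{C}_0\to\mathfrak{g}^*$ from $\langle\mu,X\rangle=\Theta[X_{\mathcal{C}_0}]$. For $X=(\bm{X},f)$ the infinitesimal generator is $\delta\bm{h}=-\bm{X}\circ\bm{h}$, $\delta\chi=f\circ\bm{h}$, $\delta\bm{p}=\delta\rho=0$, so the Eulerianized displacement is $\bm{\xi}=-\delta\bm{h}\cdot(\del\bm{h})^{-1}=\bm{h}^{*}\bm{X}$, the pullback of $\bm{X}$ along $\bm{h}$. Substituting into \eqref{LBEP_Theta} gives
\begin{equation}
\langle\mu,(\bm{X},f)\rangle=\int_Q(\bm{p}+\rho\,\del\chi)\cdot\bm{h}^{*}\bm{X}\,d^3\bm{x}+\int_Q\rho\,(f\circ\bm{h})\,d^3\bm{x}.
\end{equation}
Rewriting each integral as a pairing over the label space $Q_0$ by pushing forward along $\bm{h}$ --- using $\int_Q\alpha(\bm{h}^{*}\bm{X})\,d^3\bm{x}=\int_{Q_0}\big(\bm{h}_{*}[\alpha\otimes d^3\bm{x}]\big)(\bm{X})$ for a one-form $\alpha$ --- identifies the two components of $\mu$ as the one-form density $\bm{h}_{*}[\bm{p}\cdot d\bm{x}\otimes d^3\bm{x}+\mathbf{d}\chi\otimes\rho\,d^3\bm{x}]$ and the mass density $\bm{h}_{*}[\rho\,d^3\bm{x}]$. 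This is exactly the unshifted version of the claimed $\widetilde\mu$.

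With $\mu$ in hand, Proposition~\ref{looped_momentum_prop} immediately yields the looped momentum map $\widetilde\mu(\,\cdot\,)(\theta)=\mu(\tih^S(\theta),\tp^S(\theta),\trho^S(\theta),\tchi^S(\theta))$, which reproduces the stated formula term by term. To upgrade this from ``momentum map'' to ``first integral'' I would invoke Theorem~\ref{theorem_4}: the looped group $\ell G$ leaves $\widetilde{\mathsf{A}}$ invariant, and since solutions of the extLBEP equations \eqref{extension_labels}--\eqref{extension_chi} are precisely the critical points of $\widetilde{\mathsf{A}}$ (Theorem~\ref{theorem_3}), Noether's theorem forces $\widetilde\mu$ to be constant along them.

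I expect the main obstacle to be bookkeeping rather than conceptual: verifying that $\Theta$ (not merely the action) is strictly $G$-invariant and that $\mu$ is $\mathrm{Ad}^{*}$-equivariant --- the two standing hypotheses of Proposition~\ref{looped_momentum_prop} --- and tracking the semidirect-product structure of $G$ carefully enough that the equivariance, and hence the Noetherian conservation in the presymplectic setting, goes through. The pushforward--pairing identity used to extract the $\mathbf{d}\chi\otimes\rho$ term also deserves care, since it is where the Clebsch multiplier $\chi$ enters the circulation one-form and distinguishes the present result from the purely kinematic mass-transport invariant.
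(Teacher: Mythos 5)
Your proposal follows essentially the same route as the paper's own proof: the paper invokes Lemma \ref{relabeling_lemma} (quoted from Ref.~\onlinecite{Cotter_Holm_2012}) for exactly the $\mathcal{G}=\text{Diff}(Q_0)\ltimes C^\infty(Q_0)$-action and momentum map that you construct by hand, then applies Proposition \ref{looped_momentum_prop} with $G=\mathcal{G}$ and concludes conservation from $\ell\mathcal{G}$-invariance of the Hamiltonian functional. Your explicit computation of the generator $\bm{\xi}=\bm{h}^*\bm{X}$ and of $\langle\mu,(\bm{X},f)\rangle$ correctly reproduces the formula of Lemma \ref{relabeling_lemma}, and your final step (Theorem \ref{theorem_4} plus Theorem \ref{theorem_3} plus Noether) is equivalent to the paper's Hamiltonian-invariance argument.

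One point needs repair, because as written it makes the argument circular. You justify invariance of the Lagrangian under the shift $\chi\mapsto\chi+f\circ\bm{h}$ by saying the material derivative of $f\circ\bm{h}$ ``vanishes by the back-to-labels equation \eqref{eq:extended_first}.'' Equation \eqref{eq:extended_first} is an Euler--Lagrange equation; a symmetry that holds only on solutions cannot be fed into Noether's theorem, nor into Theorem \ref{theorem_4} or Proposition \ref{looped_momentum_prop}, whose hypotheses are \emph{off-shell} invariance of the action functional and of the one-form $\Theta$, respectively. Fortunately, the fact you need is a kinematic identity valid for arbitrary paths: with $\bm{v}\doteq-\dot{\bm{h}}\cdot(\del\bm{h})^{-1}$ one has $\dot{\bm{h}}+\bm{v}\cdot\del\bm{h}=0$ identically, so that
\begin{equation*}
\partial_t(f\circ\bm{h})+\bm{v}\cdot\del(f\circ\bm{h})=(\del_0 f\circ\bm{h})\cdot\left[\dot{\bm{h}}+\bm{v}\cdot\del\bm{h}\right]=0
\end{equation*}
without any appeal to the equations of motion. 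The same identity, in the variational form $\delta\bm{h}+\bm{\xi}\cdot\del\bm{h}=0$ with $\bm{\xi}=-\delta\bm{h}\cdot(\del\bm{h})^{-1}$, is what makes $\Theta$ itself strictly invariant under the $\chi$-shift. With that substitution your argument is sound and coincides with the paper's.
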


\begin{corollary}(Kelvin's theorem for Eulerian WKB)\label{corollary_1}
Given a family of closed curves $C_0(\theta)\subset Q_0$ parameterized by $\theta\in S^1$ and a solution $(\tih,\tp,\trho,\tchi)$ of the extLBEP equations, the integral
\begin{align}
\oint _{C(\theta)} \frac{\tp^S(\theta)}{\trho^S(\theta)} \cdot d\bm{x}\label{circulation_family}
\end{align}
is constant in time for each $\theta\in S^1$, where $C(\theta) = [\tih^S(\theta)]^{-1}(C_0(\theta))$.
\end{corollary}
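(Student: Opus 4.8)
The plan is to extract the circulation statement directly from the conservation of the relabeling momentum map established in Proposition \ref{prop4}. Since $\widetilde{\mu}$ is a first integral of the extLBEP equations, both of its components are constant along any solution and for each fixed $\theta$. Writing $f=\tih^S(\theta):Q\to Q_0$ for the (time-dependent) phase-shifted back-to-labels diffeomorphism at fixed $\theta$, the two conserved objects are the one-form density
\begin{align}
m(\theta) = f_*\!\left[\tp^S\cdot d\bm{x}\otimes d^3\bm{x} + \mathbf{d}\tchi^S\otimes\trho^S\,d^3\bm{x}\right]
\end{align}
and the density $n(\theta) = f_*\!\left[\trho^S\,d^3\bm{x}\right]$ on $Q_0$, each independent of $t$.

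First I would factor the integrand of $m(\theta)$ through the mass density. Because $\trho^S>0$, on $Q$ we may write $\tp^S\cdot d\bm{x}\otimes d^3\bm{x} + \mathbf{d}\tchi^S\otimes\trho^S\,d^3\bm{x} = \beta\otimes(\trho^S\,d^3\bm{x})$, where $\beta = (\tp^S/\trho^S)\cdot d\bm{x} + \mathbf{d}\tchi^S$ is a genuine one-form. Since $f$ is a diffeomorphism, the pushforward splits as $m(\theta) = (f_*\beta)\otimes n(\theta)$, so the one-form $f_*\beta$ is the well-defined density quotient $m(\theta)/n(\theta)$ on $Q_0$. As $m(\theta)$ and $n(\theta)$ are both time-independent and $n(\theta)$ is a positive measure, the one-form $f_*\beta = m(\theta)/n(\theta)$ is itself time-independent on $Q_0$.

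Next I would integrate over the \emph{fixed} label-space curve $C_0(\theta)$, so that $\oint_{C_0(\theta)} m(\theta)/n(\theta)$ is manifestly constant in time. Applying the change-of-variables formula for line integrals under $f$, using $f_*=(f^{-1})^*$ together with $C(\theta)=f^{-1}(C_0(\theta))$, gives
\begin{align}
\oint_{C_0(\theta)}\frac{m(\theta)}{n(\theta)} = \oint_{C_0(\theta)} f_*\beta = \oint_{C(\theta)} f^*f_*\beta = \oint_{C(\theta)}\beta = \oint_{C(\theta)}\frac{\tp^S}{\trho^S}\cdot d\bm{x} + \oint_{C(\theta)}\mathbf{d}\tchi^S.
\end{align}
The final step discards the exact contribution: because $\tchi^S:Q\to\mathbb{R}$ is single-valued and $C(\theta)$ is closed (the diffeomorphic image of the closed curve $C_0(\theta)$), the period $\oint_{C(\theta)}\mathbf{d}\tchi^S$ vanishes. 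Hence $\oint_{C(\theta)}(\tp^S/\trho^S)\cdot d\bm{x}$ equals the time-independent left-hand side, which is exactly the claim of the corollary.

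The main obstacle is interpretive rather than computational. One must recognize that the two components of $\widetilde{\mu}$ are pushforwards along the \emph{same} diffeomorphism $f=\tih^S(\theta)$ and share the common density factor $\trho^S\,d^3\bm{x}$, so that their ratio is an honest time-independent one-form on label space; and one must verify that the $\mathbf{d}\tchi^S$ term drops out precisely because $\tchi^S$ is $\mathbb{R}$-valued rather than $S^1$-valued, giving zero periods around every closed loop. Once these two observations are in place, the remaining manipulations are just the naturality of pullback and pushforward under diffeomorphisms.
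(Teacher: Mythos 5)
Your proposal is correct and follows essentially the same route as the paper: the paper proves Corollary \ref{corollary_1} by pointing to the argument in the remark after Lemma \ref{relabeling_lemma}, applied to the looped momentum map of Proposition \ref{prop4} — namely, conservation of both components of $\widetilde{\mu}$, division by the conserved (non-vanishing) mass-density component to obtain the time-independent one-form $\tih^S_*\left[\frac{\tp^S}{\trho^S}\cdot d\bm{x} + \mathbf{d}\tchi^S\right]$ on $Q_0$, integration over the fixed curve $C_0(\theta)$ with a change of variables to $C(\theta)$, and vanishing of the exact $\mathbf{d}\tchi^S$ contribution around the closed loop. Your identification of the density quotient $m(\theta)/n(\theta)$ and your remark that the cancellation hinges on $\tchi^S$ being $\mathbb{R}$-valued are precisely the ingredients the paper relies on.
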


\begin{remark}
See the remark after Lemma \ref{relabeling_lemma} for the argument that proves this Corollary. 
\end{remark}

Before proving Proposition \ref{prop4}, we will first review the corresponding result for the (pre-WKB extension) LBEP equations that was proved, for instance, in Ref.\,\onlinecite{Cotter_Holm_2012}. For this, we introduce the following group, which contains the particle relabeling group $\text{Diff}(Q_0)$ as a subgroup.
\begin{definition}\label{def:semi_direct}
The infinite-dimensional group $\mathcal{G} = \text{Diff}(Q_0)\ltimes C^\infty(Q_0)$ consists of pairs $(\bm{\eta},\tau)\in \text{Diff}(Q_0)\times C^\infty(Q_0)$ with the group product given by
\begin{align}
(\bm{\eta}_1,\tau_1)*(\bm{\eta}_2,\tau_2) = (\bm{\eta}_1\circ \bm{\eta}_2, \tau_1+\bm{\eta}_{1*}\tau_2).
\end{align}
\end{definition}

\begin{lemma}\label{relabeling_lemma}
There is a right $\mathcal{G} = \text{Diff}(Q_0)\ltimes C^\infty(Q_0)$-action on $\mathcal{C}_0$ that leaves the $1$-form $\Theta$ in Eq.\,\eqref{LBEP_Theta} invariant. The associated momentum map is given by
\begin{align}
\mu(\bm{h},\bm{p},\rho,\chi) = \bigg(\bm{h}_*\left[\bm{p}\cdot d\bm{x}\otimes d^3\bm{x} + \mathbf{d}\chi\otimes\rho\,d^3\bm{x} \right],\bm{h}_*\left[\rho\,d^3\bm{x}\right]\bigg).
\end{align}
\end{lemma}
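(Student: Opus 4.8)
The plan is to exhibit the right $\mathcal{G}$-action explicitly, verify it preserves the $1$-form $\Theta$, and then read off the momentum map from the prescription $\langle\mu,X\rangle = \Theta[X_{\mathcal{C}_0}]$ supplied in Proposition \ref{looped_momentum_prop}. Motivated by the fact that $\bm{h} = \bm{g}^{-1}$ and that relabeling acts on configuration maps by $\bm{g}\mapsto \bm{g}\circ\bm{\eta}$, I would define the action of $(\bm{\eta},\tau)\in\mathcal{G}$ by
\begin{align}
(\bm{h},\bm{p},\rho,\chi)\cdot(\bm{\eta},\tau) = (\bm{\eta}^{-1}\circ\bm{h},\,\bm{p},\,\rho,\,\chi+\tau\circ\bm{h}).\nonumber
\end{align}
First I would check this is a well-defined right action: $\bm{\eta}^{-1}\circ\bm{h}$ is again a diffeomorphism $Q\to Q_0$ and $\rho$ is untouched, so $\mathcal{C}_0$ is preserved; and a short composition check using $\bm{\eta}_2^{-1}\circ\bm{\eta}_1^{-1} = (\bm{\eta}_1\circ\bm{\eta}_2)^{-1}$ together with $(\tau_2\circ\bm{\eta}_1^{-1})\circ\bm{h} = \tau_2\circ\bm{\eta}_1^{-1}\circ\bm{h}$ reproduces exactly the semidirect-product law of Definition \ref{def:semi_direct}. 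Note that $\bm{p}$ and $\rho$ are Eulerian quantities and should be inert under a pure relabeling, which this action respects.

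The crux is to show $\Phi_{(\bm{\eta},\tau)}^*\Theta = \Theta$. The key observation is that the Eulerianized variation $\bm{\xi} = -\delta\bm{h}\cdot(\del\bm{h})^{-1}$ is invariant under the action: writing $\bm{h}' = \bm{\eta}^{-1}\circ\bm{h}$, the chain rule gives $\delta\bm{h}' = (\del\bm{\eta}^{-1})(\bm{h})\cdot\delta\bm{h}$ and $\del\bm{h}' = (\del\bm{\eta}^{-1})(\bm{h})\cdot\del\bm{h}$, so the common factor $(\del\bm{\eta}^{-1})(\bm{h})$ cancels in $-\delta\bm{h}'\cdot(\del\bm{h}')^{-1}$, leaving $\bm{\xi}' = \bm{\xi}$. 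This reflects the geometric fact that left-composition by a fixed label-space diffeomorphism does not alter the Eulerian displacement field. Since $\bm{p}$ and $\rho$ are also unchanged, it only remains to check the $\chi$-dependent terms. Here $\delta\chi' = \delta\chi + (\del\tau)(\bm{h})\cdot\delta\bm{h}$ and $\del\chi' = \del\chi + (\del\tau)(\bm{h})\cdot\del\bm{h}$, and using the identity $\bm{\xi}\cdot\del\bm{h} = -\delta\bm{h}$ (immediate from the definition of $\bm{\xi}$) one finds the two correction terms cancel, $(\del\tau)(\bm{h})\cdot\delta\bm{h} + \bm{\xi}\cdot(\del\tau)(\bm{h})\cdot\del\bm{h} = 0$. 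Hence $\delta\chi' + \bm{\xi}'\cdot\del\chi' = \delta\chi + \bm{\xi}\cdot\del\chi$ and $\Theta$ is invariant. I expect this cancellation — which hinges on getting all the conventions (inverses, left/right composition) consistent with Definition \ref{def:semi_direct} — to be the main bookkeeping obstacle; the underlying computation is elementary.

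Finally I would compute the momentum map. The Lie algebra is $\mathfrak{g} = \mathfrak{X}(Q_0)\oplus C^\infty(Q_0)$, and for $X = (\bm{w},\sigma)$ the infinitesimal generator is obtained by differentiating the action along a curve through the identity, giving $\delta\bm{h} = -\bm{w}\circ\bm{h}$, $\delta\chi = \sigma\circ\bm{h}$, $\delta\bm{p}=\delta\rho = 0$, and hence $\bm{\xi} = (\bm{w}\circ\bm{h})\cdot(\del\bm{h})^{-1} = \bm{h}^{*}\bm{w}$, the pullback of $\bm{w}$ to $Q$ (equivalently $\bm{g}_*\bm{w}$). Substituting into $\langle\mu,X\rangle = \Theta[X_{\mathcal{C}_0}]$ yields
\begin{align}
\langle\mu,(\bm{w},\sigma)\rangle = \int_Q \bm{p}\cdot\bm{\xi}\,d^3\bm{x} + \int_Q \rho\,(\sigma\circ\bm{h} + \bm{\xi}\cdot\del\chi)\,d^3\bm{x}.\nonumber
\end{align}
The last step is to recognize each integral as a pairing of a pushed-forward tensor density on $Q_0$ against $(\bm{w},\sigma)$: by naturality of the pushforward under $\bm{h}$ (using $\bm{h}^{*}\bm{w} = \bm{\xi}$ and that $\bm{h}_*$ preserves integrals of densities), the first two pieces assemble into $\langle \bm{h}_*[\bm{p}\cdot d\bm{x}\otimes d^3\bm{x} + \mathbf{d}\chi\otimes\rho\,d^3\bm{x}],\bm{w}\rangle$, while $\int_Q\rho\,(\sigma\circ\bm{h})\,d^3\bm{x} = \langle\bm{h}_*[\rho\,d^3\bm{x}],\sigma\rangle$. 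This is precisely the claimed pair, so $\mu$ has the stated form; $\mathrm{Ad}^*$-equivariance and the momentum-map property then follow from the invariance of $\Theta$ via the standard exact-symplectic argument $\iota_{X_{\mathcal{C}_0}}(-\mathbf{d}\Theta) = \mathbf{d}\,\iota_{X_{\mathcal{C}_0}}\Theta$ together with $\mathcal{L}_{X_{\mathcal{C}_0}}\Theta = 0$.
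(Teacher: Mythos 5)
Your proof is correct, but there is little to compare it against: the paper does not actually prove this lemma. The remark immediately following it simply cites Cotter and Holm (Ref.~\onlinecite{Cotter_Holm_2012}), where the result is obtained via Noether's theorem, so your direct verification supplies an argument the paper outsources. It is also consistent with the rest of the paper in the ways that matter. The action you posit, $(\bm{h},\bm{p},\rho,\chi)\cdot(\bm{\eta},\tau) = (\bm{\eta}^{-1}\circ\bm{h},\,\bm{p},\,\rho,\,\chi+\tau\circ\bm{h})$, is exactly the constant-loop, $S$-suppressed specialization of the $\ell\mathcal{G}$-action $\widetilde{\Phi}$ written down in the proof of the final proposition of Section~\ref{sec:five}, so you have identified the action the authors intend, and your composition check against Definition~\ref{def:semi_direct} confirms it is a right action for that semidirect-product law. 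The two computational cruxes are handled correctly: invariance of $\Theta$ follows from the cancellation of the common Jacobian factor $(\del\bm{\eta}^{-1})\circ\bm{h}$ in $\bm{\xi}$, together with the cancellation of the two $\tau$-correction terms via $\bm{\xi}\cdot\del\bm{h} = -\delta\bm{h}$; and the momentum map drops out of $\langle\mu,X\rangle = \Theta[X_{\mathcal{C}_0}]$ once you note that the generator has $\delta\bm{h} = -\bm{w}\circ\bm{h}$, $\delta\chi = \sigma\circ\bm{h}$, hence $\bm{\xi} = \bm{h}^*\bm{w}$, and apply change of variables to identify the three integrals with pairings against the pushed-forward one-form density and density on $Q_0$. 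What your route buys is a self-contained proof resting only on the prescription already assumed in Proposition~\ref{looped_momentum_prop} (a $\Theta$-preserving action admits the equivariant momentum map $\iota_{X_{\mathcal{C}_0}}\Theta$); what the paper's citation buys is brevity.
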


\begin{remark}
This result was established using Noether's theorem in Ref.\,\onlinecite{Cotter_Holm_2012}. Because the LBEP Hamiltonian $\int_Q \mathcal{H}(\bm{p},\rho,\del\rho)\,d^3\bm{x}$ is $\mathcal{G}$-invariant, standard arguments imply that $\mu$ is constant in time along solutions of Eqs.\,\eqref{eq:extended_first}-\eqref{eq:extended_last}. In particular, because $\bm{h}_*\left[\rho\,d^3\bm{x}\right]$ is constant in time, and $\rho$ is non-vanishing, the $1$-form
\begin{align}
\frac{\bm{h}_*\left[\bm{p}\cdot d\bm{x}\otimes d^3\bm{x} + \mathbf{d}\chi\otimes\rho\,d^3\bm{x} \right]}{\bm{h}_*\left[\rho d^3\bm{x}\right]} = \bm{h}_*\left[\frac{\bm{p}}{\rho}\cdot d\bm{x} + \mathbf{d}\chi\right]
\end{align}
is constant in time. Therefore the integral
\begin{align}
\oint_{C_0}\bm{h}_*\left[\frac{\bm{p}}{\rho}\cdot d\bm{x} + \mathbf{d}\chi\right] = \oint_{\bm{h}^{-1}(C_0)}\frac{\bm{p}}{\rho}\cdot d\bm{x}
\end{align}
is constant in time for any closed curve $C_0\in Q_0$. This is the usual statement of Kelvin's circulation theorem.
\end{remark}

We may now prove \ref{prop4} by directly applying Proposition \ref{looped_momentum_prop} with $G = \mathcal{G}$.

\begin{proof}[proof of Proposition \ref{prop4}]
By Lemma \ref{relabeling_lemma}, there is a right $\mathcal{G}$-action on $\mathcal{C}_0$ that preserves $\Theta$ and admits an $\text{Ad}^*$-equivariant momentum map.  Proposition \ref{looped_momentum_prop} therefore implies that
\begin{align}
\widetilde{\mu}(\tih,\tp,\trho,\tchi,S)(\theta) &= \mu(\tih^S(\theta),\tp^S(\theta),\trho^S(\theta),\tchi^S(\theta)) \nonumber\\
& = \bigg(\tih^S(\theta)_*\left[\tp^S(\theta)\cdot d\bm{x}\otimes d^3\bm{x} + \mathbf{d}\tchi^S(\theta)\otimes\trho^S(\theta)\,d^3\bm{x} \right],\tih^S(\theta)_*\left[\trho^S(\theta)\,d^3\bm{x}\right]\bigg)
\end{align}
defines a presymplectic $\text{Ad}^*$-equivariant momentum map on $(\ell\mathcal{C}_0\times C^\infty(Q,S^1),-\mathbf{d}\widetilde{\Theta})$. Moreover, because the Hamiltonian functional $\fint\int _Q \mathcal{H}(\tp,\trho,\del^S\trho)\,d^3\bm{x}\,d\theta$ is $\ell\mathcal{G}$-invariant, it follows that $\widetilde{\mu}$ is constant in time along solutions of Eqs.\,\eqref{extension_labels}-\eqref{extension_chi}.
\end{proof}

We will now conclude this Section by giving a group-theoretic explanation for the $1$-way coupling between $\tp,\trho,S$ and $\tih,\tchi$ in the Euler-Lagrange equations associated with the action functional \eqref{extLBEP_action}. Because $\ell\mathcal{G}$ leaves the action functional \eqref{extLBEP_action} invariant, solutions of the corresponding Euler-Lagrange equations are mapped into other solutions by $\ell\mathcal{G}$. The following proposition shows that the quotient of the space of solutions of the Euler-Lagrange equations equations by $\ell\mathcal{G}$ may be identified with the space of a solutions of the extLBEP equations. This ``explains" the one-way coupling as a consequence of $\ell\mathcal{G}$-invariance.

\begin{proposition}
Let $\widetilde{\mathcal{C}}_{\widetilde{\mathsf{A}}}$ denote the space of solutions of the Euler-Lagrange equations associated with the action functional \eqref{extLBEP_action}. Let $\widetilde{\mathcal{C}}_{\text{extLBEP}}$ denote the space of solutions of the extLBEP equations. There is a canonical bijection
\begin{align}
\widetilde{\mathcal{C}}_{\widetilde{\mathsf{A}}}/\ell\mathcal{G} \approx \widetilde{\mathcal{C}}_{\text{extLBEP}}.
\end{align}
\end{proposition}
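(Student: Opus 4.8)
The plan is to realize the claimed bijection as the descent to the quotient of the forgetful projection
\[
\pi:\widetilde{\mathcal{C}}_{\widetilde{\mathsf{A}}}\to\widetilde{\mathcal{C}}_{\text{extLBEP}},\qquad (\tih,\tp,\trho,\tchi,S)\mapsto(\tp,\trho,S).
\]
First I would check that $\pi$ is well defined: among the looped Euler--Lagrange equations \eqref{extension_labels}--\eqref{extension_chi}, the momentum equation \eqref{extension_momentum} and the continuity equation \eqref{extension_cont} are literally the extLBEP equations \eqref{extLBEP_mom}--\eqref{extLBEP_cont}, and neither involves $\tih$ or $\tchi$. Hence $\pi$ indeed lands in $\widetilde{\mathcal{C}}_{\text{extLBEP}}$. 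Next I would prove $\pi$ is surjective. Given $(\tp,\trho,S)\in\widetilde{\mathcal{C}}_{\text{extLBEP}}$, the velocity $\tv=\partial\mathcal{H}_{\text{EP}}/\partial\bm{p}$ is a prescribed smooth field, so \eqref{extension_labels} and \eqref{extension_chi} become \emph{linear} transport equations for $\tih$ and $\tchi$ along the smooth vector field $\partial_t+\tv\cdot\del+\Omega\,\partial_\theta$ on $Q\times S^1\times[t_1,t_2]$, where $\Omega=\partial_t S+\tv\cdot\del S$. Since $Q\times S^1$ is compact, the characteristic flow exists on all of $[t_1,t_2]$ and is a diffeomorphism, so prescribing smooth initial data $\tih|_{t_1}$ (a $\theta$-wise diffeomorphism) and $\tchi|_{t_1}$ yields unique smooth solutions, with $\tih$ remaining a $\theta$-wise diffeomorphism because it is merely transported. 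Thus $\pi$ is onto, and these transport equations are uniquely solvable even though the full extLBEP system need not be.

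The heart of the argument is to identify the fibers of $\pi$ with the $\ell\mathcal{G}$-orbits, which I would split into two inclusions. For \emph{orbits $\subseteq$ fibers}: one verifies that the $\mathcal{G}$-action of Lemma \ref{relabeling_lemma} fixes the $(\bm{p},\rho)$ slot of $\mathcal{C}_0$ and moves only $(\bm{h},\chi)$; inspecting \eqref{parallelized_G_action}--\eqref{transformed_G_action}, the looped action $\widetilde\Phi$ then fixes $S$ outright and fixes $(\tp,\trho)$, because the phase-shift-conjugated parallel action leaves $(\tp^S,\trho^S)$ untouched and $({\,\cdot\,}^S)^{-S}=\mathrm{id}$. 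Hence $\ell\mathcal{G}$ preserves fibers of $\pi$. For \emph{fibers $\subseteq$ orbits}: the key sub-lemma is that $\mathcal{G}=\text{Diff}(Q_0)\ltimes C^\infty(Q_0)$ acts \emph{simply transitively} on each $(\bm{p},\rho)$-fiber of $\mathcal{C}_0$. Indeed, given $(\bm{h}_1,\chi_1)$ and $(\bm{h}_2,\chi_2)$, the unique relabeling $\bm\eta$ with $\bm\eta\circ\bm{h}_1=\bm{h}_2$ matches the labels and the unique shift $\tau=(\chi_2-\chi_1)\circ\bm{h}_1^{-1}$ matches the potentials. Parallelizing over $\theta$ after phase-shifting by $S$ then produces, for any two points of a fixed fiber, a unique $\widetilde g\in\ell\mathcal{G}$ matching them \emph{at the initial time} $t_1$.

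The main obstacle I anticipate is upgrading this initial-time matching to a matching valid for all $t\in[t_1,t_2]$. Here I would invoke Theorem \ref{theorem_4}: since $\ell\mathcal{G}$ leaves $\widetilde{\mathsf{A}}$ invariant, it carries critical points to critical points, so $\widetilde g\cdot(\tih_1,\tp,\trho,\tchi_1,S)$ is again a solution of \eqref{extension_labels}--\eqref{extension_chi} with the \emph{same} $(\tp,\trho,S)$ and with $(\tih,\tchi)$-components agreeing with the second solution at $t_1$. Both this transformed solution and the second solution satisfy the same linear transport equations \eqref{extension_labels}, \eqref{extension_chi} driven by the common $(\tp,\trho,S)$, so by the uniqueness established in the surjectivity step they coincide for all time. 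This shows that any two points in a common fiber lie on a single $\ell\mathcal{G}$-orbit, so fibers equal orbits, and therefore $\pi$ descends to the desired canonical bijection $\widetilde{\mathcal{C}}_{\widetilde{\mathsf{A}}}/\ell\mathcal{G}\xrightarrow{\ \sim\ }\widetilde{\mathcal{C}}_{\text{extLBEP}}$.
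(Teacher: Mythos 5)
Your proposal is correct and takes essentially the same route as the paper's own proof: the forgetful projection $(\tih,\tp,\trho,\tchi,S)\mapsto(\tp,\trho,S)$, surjectivity via the method of characteristics applied to the transport equations \eqref{extension_labels} and \eqref{extension_chi} driven by a given extLBEP solution, and identification of fibers with $\ell\mathcal{G}$-orbits through the explicit (unique) group element matching the $(\tih,\tchi)$ data of two solutions. The only difference is one of explicitness: you spell out the propagation step --- invoking Theorem \ref{theorem_4} so that the group element matching the data at $t_1$ carries one solution to another with the same $(\tp,\trho,S)$, and then using uniqueness of the transport problem to conclude the match holds for all $t$ --- whereas the paper leaves this step implicit in its identification of the fiber with the space of initial values.
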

\begin{proof}
According to Theorem \ref{theorem_4} (see Eq.\,\eqref{transformed_G_action}), the right action of $\ell\mathcal{G}$ on $\ell\widetilde{\mathcal{C}}_0\times C^\infty(Q,S^1) $ that leaves the action $\widetilde{\mathsf{A}}$ invariant is given by
\begin{align}
\widetilde{\Phi}\bigg((\tih,\tp,\trho,\tchi,S),(\widetilde{\bm{\eta}},\widetilde{\tau})\bigg) = \bigg([\widetilde{\bm{\eta}}^{-1}\circ\tih^S]^{-S},\tp,\trho,\tchi+[(\tih^S)^*\widetilde{\tau}]^{-S},S\bigg).
\end{align}
Apparently the quotient of $\ell\widetilde{\mathcal{C}}_0\times C^\infty(Q,S^1) $ by $\ell\mathcal{G}$ may be identified with triples $(\tp,\trho,S)\in \ell\mathfrak{X}(Q)\times \ell C_+^\infty(Q)\times C^\infty(Q,S^1)$ using the quotient map $\pi : \ell\widetilde{\mathcal{C}}_0\times C^\infty(Q,S^1)\rightarrow  \ell\mathfrak{X}(Q)\times \ell C_+^\infty(Q)\times C^\infty (Q,S^1): (\tih,\tp,\trho,\tchi,S)\mapsto (\tp,\trho,S)$.

If $t\mapsto \Gamma(t) = (\tih(t),\tp(t),\trho(t),\tchi(t),S(t))$ is a solution of the Euler-Lagrange equations associated with $\widetilde{\mathsf{A}}$, i.e. Eqs.\,\eqref{extension_labels}-\eqref{extension_chi}, then $\gamma = \pi\circ \Gamma$ is a solution of the extLBEP equations because the extLBEP equations are a subset of the Euler-Lagrange equations. Thus there is a mapping $\Pi:\widetilde{\mathcal{C}}_{\widetilde{\mathsf{A}}}\rightarrow \widetilde{\mathcal{C}}_{\text{extLBEP}}$. If we can show that $\Pi$ is in fact a quotient map for the $\ell\mathcal{G}$-action on $\widetilde{\mathcal{C}}_{\widetilde{\mathsf{A}}}$, the proof will be complete.

To that end, suppose that $t\mapsto \gamma(t) = (\tp(t),\trho(t),S(t))$ is a solution of the extLBEP equations. Given $(\tih_0,\tchi_0)\in \ell\text{Diff}(Q,Q_0)\times \ell C^\infty(Q)$, the method of characteristics gives a unique curve $t\mapsto (\tih(t),\tchi(t))$ with $(\tih(0),\tchi(0)) = (\tih_0,\tchi_0)$ satisfying Eqs.\,\eqref{extension_labels} and \eqref{extension_chi} with $S$ and the derivatives of $\mc{H}_{\text{EP}}$ evaluated along the solution $\gamma$. Therefore the mapping $\Pi$ is surjective, and the preimage of $\gamma$ under $\Pi$ may be identified with the space of initial values $(\tih_0,\tchi_0)\in \ell\text{Diff}(Q,Q_0)\times \ell C^\infty(Q)$. The latter space is an entire $\mathcal{G}$-orbit in $\widetilde{\mathcal{C}}_{\widetilde{\mathsf{A}}}$ for if $(\tih_0,\tchi_0)$ and $(\tih_0^\prime,\tchi_0^\prime)$ are two elements of $\Pi^{-1}(\{\gamma\})$, then $(\tih_0^\prime,\tchi_0^\prime) = \widetilde{\Phi}\bigg( (\tih_0,\tchi_0),(\widetilde{\bm{\eta}},\widetilde{\tau})\bigg)$ provided we set
\begin{align}
\widetilde{\bm{\eta}} &= \tih_0^S\circ [(\tih_0^\prime)^S]^{-1}\\
\widetilde{\tau}& = (\tih_0^S)_*(\tchi_0^\prime - \tchi_0 )^S.
\end{align}
It follows that $\Pi$ is a quotient map for the $\mathcal{G}$-action on $\widetilde{\mathcal{C}}_{\widetilde{\mathsf{A}}}$.
\end{proof}
\section{Example: Eulerian variational NL-WKB for isothermal fluids}
\label{sec:six}

In this section, we present a pedagogical example of how the methods developed so far can be useful for obtaining reduced, asymptotic models describing wave--mean-flow interactions. Specifically, here we study the time-averaged interaction between a small-amplitude, high-frequency acoustic wave and a slowly-varying isothermal perfect fluid. Due to the somewhat involved calculations given in this section, we present our results in three parts. In the first part, we introduce the governing Eulerian equations of motion and perform an intuitive asymptotic expansion up to leading order in some asymptotic parameter. We give an elementary proof that the resulting leading-order equations describing wave--mean-flow interactions are variational. In the second part, we combine the theory developed in this work with results from slow-manifold theory\cite{Fenichel_1979, Verhulst:2005et} to explain why the wave--mean-flow equations ought to be variational. Finally, in the third section, we present additional details of a systematic derivation of the variational principle describing wave--mean-flow interactions in isothermal fluids.

\subsection{Governing equations for isothermal fluids and intuitive asymptotic expansion}
\label{sec:six_intuitive}

The governing equations for an isothermal fluid are given by
\begin{gather}
	\partial_t\rho +\del\cdot \vec{p}	 = 0
		\label{eq:example_cont}\\
	\partial_t\vec{p} 
		+ 	\del \cdot \left(  \frac{\vec{p} \otimes \vec{p} }{\rho} \right) =	- c_s^2 \del   \rho ,
		\label{eq:example_mom}
\end{gather}
where $c_s\in \mathbb{R}$ is the sound speed. Since we are interested in studying the effects of a high-frequency acoustic wave, let us explicitly introduce a scale separation into the equations. To do this, we use the NL-WKB extension of the equations above. Hence, we write
\begin{gather}
	\partial_t^{S/\ep} \trho +\del^{S/\ep} \cdot \tp	 = 0
		\label{eq:example_ext_cont}	\\
	\partial_t^{S/\ep} \tp	+ 	\del^{S/\ep} \cdot \left(  \frac{\tp \otimes \tp }{\trho} \right) 
		=	- c_s^2 \del^{S/\ep}   \trho .
		\label{eq:example_ext_mom}
\end{gather}
In the above, we have explicitly denoted the scale separation by rescaling the phase function $S$ such that $S\mapsto S/\ep$, where $\ep\ll 1$ is a small dimensionless parameter that represents the ratio of the wave period (or wavelength) to the characteristic timescale (or length scale) of the mean flow.

Since we are interested in linear, small-amplitude waves, we parameterize the density and momentum-density fields as follows:
\begin{equation}
	\fluctrho =\meanrho + \ep \hatrho, 	
	\quad \quad
	 \fluctp = \meanp+ \ep \hatp.
	 \label{eq:example:para}
\end{equation}

Here $\meanrho$ and $\meanp$ respectively represent the slowly-varying density and momentum-density fields of the background fluid. Note that $\meanrho$ and $\meanp$ are independent of $\theta$ and thus are the $\theta$-averaged fields. In contrast, $\hatrho$ and $\hatp$ are the fluctuating density and momentum-density fields, respectively. (To make the above parameterization unique, we assume that the $\theta$-averages of $\hatrho$ and $\hatp$ are zero.) Since we consider small-amplitude waves, we scale the amplitude of the fluctuations according to the small parameter $\ep \ll 1$ in \Eq{eq:example:para}. With these asumptions, we can then deduce the following.
\begin{proposition}\label{prop:eigen}
	To lowest order in $\ep$, the fields $(\hatp,S)$ introduced in \Eqs{eq:example_ext_cont} and \eq{eq:example_ext_mom} satisfy
	\begin{gather}
		\pd_t S \,\pd_\theta \hatrho +\del S \cdot \pd_\theta \hatp	 = 0
			\label{eq:example_cont_osc}		\\
		\left[ \left( \pd_t S + \frac{\meanp }{\meanrho} \cdot \del S \right) \mathbb{I} + \frac{\meanp \otimes \del S}{\meanrho} \right] \cdot \pd_\theta \hatp
				=	\left( \frac{\del S \cdot \meanp}{\meanrho^2} \, \meanp -c_s^2 \del S \right)  \pd_\theta \hatrho .
			\label{eq:example_mom_osc}	
	\end{gather}
	Solutions $(\hatp, S)$ corresponding to linear acoustic oscillations are given by
	\begin{gather}
		\hatp = \meanp \, \frac{\hatrho}{\meanrho} +c_s \hatrho \vec{e}_{\vec{k}} 
			\label{eq:example_hatp_sol} \\
		\pd_t S + \frac{\meanp}{\meanrho} \cdot \del S + c_s |\del S | =0 ,
			\label{eq:example_S_sol}
	\end{gather}
	where $\vec{e}_{\vec{k}}  = \nabla S/|\nabla S|$.
\end{proposition}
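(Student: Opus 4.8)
The plan is to substitute the small-amplitude parameterization \eqref{eq:example:para} into the extended equations \eqref{eq:example_ext_cont}--\eqref{eq:example_ext_mom}, expand in powers of $\ep$, and isolate the leading-order fluctuating (zero-$\theta$-average) balance. The crucial bookkeeping point is that $\partial_t^{S/\ep}$ and $\del^{S/\ep}$ each carry a factor $\ep^{-1}$ multiplying a $\partial_\theta$: acting on a mean field this annihilates it (the means are $\theta$-independent), while acting on an $O(\ep)$ fluctuation the $\ep^{-1}$ and $\ep$ cancel, promoting fluctuating $\theta$-derivatives to $O(1)$. First I would record $\partial_t^{S/\ep}\trho = \partial_t\meanrho + \partial_t S\,\partial_\theta\hatrho + O(\ep)$ and $\del^{S/\ep}\cdot\tp = \del\cdot\meanp + \del S\cdot\partial_\theta\hatp + O(\ep)$, and linearize the flux $\tp\otimes\tp/\trho$ about the mean, retaining the $O(\ep)$ correction since it is precisely the piece struck by $\ep^{-1}\del S\,\partial_\theta$. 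Collecting $O(1)$ terms and splitting into $\theta$-mean and fluctuating parts, the means reproduce the background isothermal equations, while the fluctuating parts are exactly \eqref{eq:example_cont_osc} and \eqref{eq:example_mom_osc}. This establishes the first claim.

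For the second claim, I would regard \eqref{eq:example_cont_osc}--\eqref{eq:example_mom_osc} as a homogeneous linear system in the $\theta$-derivatives $(\partial_\theta\hatrho,\partial_\theta\hatp)$, with $S$ entering only through $\partial_t S$ and $\del S$. Introducing the Doppler-shifted frequency $\Omega \doteq \partial_t S + \meanrho^{-1}\meanp\cdot\del S$, I would use \eqref{eq:example_cont_osc} to substitute $\del S\cdot\partial_\theta\hatp = -\partial_t S\,\partial_\theta\hatrho$ into \eqref{eq:example_mom_osc}; the $\meanp$-proportional terms then combine into $\meanrho^{-1}\Omega\,\meanp\,\partial_\theta\hatrho$, and dividing by $\Omega$ (assuming $\Omega\neq 0$) yields $\partial_\theta\hatp = (\meanp/\meanrho - (c_s^2/\Omega)\del S)\,\partial_\theta\hatrho$.

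The dispersion relation then follows from a solvability condition: dotting this expression with $\del S$ and re-invoking \eqref{eq:example_cont_osc} forces $\Omega^2 = c_s^2|\del S|^2$ whenever $\partial_\theta\hatrho\neq 0$. Selecting the branch $\Omega = -c_s|\del S|$ gives the eikonal equation \eqref{eq:example_S_sol}. Finally, substituting $\Omega = -c_s|\del S|$ back into the expression for $\partial_\theta\hatp$ turns the bracket into the $\theta$-independent vector $\meanp/\meanrho + c_s\vec{e}_{\vec{k}}$; integrating in $\theta$ and fixing the constant of integration by the zero-$\theta$-average normalization of the fluctuations recovers \eqref{eq:example_hatp_sol}.

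The main obstacle is the first step: correctly tracking the interplay between the $\ep^{-1}$ in the extended derivative operators and the $\ep$-scaling of the fluctuations, especially in the nonlinear flux $\del^{S/\ep}\cdot(\tp\otimes\tp/\trho)$, where one must linearize $1/\trho = \meanrho^{-1}(1 - \ep\,\hatrho/\meanrho + \cdots)$ and contract the $\del S$-weighted tensor divergence correctly to reproduce the precise coefficient structure on the left-hand side of \eqref{eq:example_mom_osc}. Once this expansion is organized, the remaining algebra---the elimination, the solvability determinant, and the $\theta$-integration---is routine.
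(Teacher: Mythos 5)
Your proposal is correct and follows essentially the same route as the paper: substitute the parameterization, split the extended equations into $\theta$-mean and fluctuating parts at leading order in $\ep$, then combine the fluctuating continuity and momentum equations with a projection along $\del S$ to obtain $\Omega^2 = c_s^2|\del S|^2$ and the eikonal equation. The only (harmless) difference is that you derive the polarization relation constructively---solving for $\partial_\theta\hatp$ after dividing by $\Omega$, integrating in $\theta$, and fixing the integration constant by the zero-$\theta$-average normalization---whereas the paper obtains the dispersion relation first and then simply verifies the stated expression for $\hatp$ by substitution.
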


\begin{proof}
	After inserting \Eqs{eq:example:para} into \Eq{eq:example_ext_cont}, one can see that the $\theta$-dependent part must satisfy \eq{eq:example_cont_osc} to lowest order in $\ep$. Also when inserting \Eqs{eq:example:para} into \Eq{eq:example_ext_mom}, one finds that the averaged momentum density $\meanp$ satisfies
	\begin{equation}
		\pd_t \meanp + \del \cdot \overline{  \left(  \frac{\tp \otimes \tp }{\trho}  \right) }	=	- c_s^2 \del   \meanrho ,
		\label{eq:mom_aux}
	\end{equation}
	where the overline denotes an average over the $\theta$ variable; \eg $\overline{\hatrho \, \hatp} = \fint_{S^1}  \hatrho \hatp\, \mathrm{d}\theta $. Subtracting \Eq{eq:mom_aux} from \Eq{eq:example_ext_mom} leads to \Eq{eq:example_mom_osc} to lowest-order in $\ep$.

	Solutions corresponding to acoustic oscillations are obtained by projecting \Eq{eq:example_mom_osc} by $\del S$. Then, we substitute \Eq{eq:example_cont_osc} and obtain
	\begin{equation}
		\left( \left( \pd_t S + \frac{\meanp}{\meanrho} \cdot \del S \right)^2 - c_s^2 |\del S|^2  \right) \pd_\theta \hatrho =0.
	\end{equation}
	For acoustic waves, $\hatrho \neq 0$. Hence, the term inside the brackets must be zero. Taking the positive root leads to the dispersion relation in \Eq{eq:example_S_sol}. Then, it is straightforward to verify that the expression for $\hatp$ given in \Eq{eq:example_hatp_sol} satisfies \Eqs{eq:example_cont_osc} and \eq{eq:example_mom_osc}.
\end{proof}

\begin{remark}
It is to be noted that the lowest-order (in $\ep$) equations for the density and momentum-density fluctuations [\Eqs{eq:example_cont_osc} and \eq{eq:example_mom_osc}] lead to a time-evolution equation for the phase $S$ (which was not present in the NL WKB extension of the original fluid equations) and a constraint equation for fluctuations in the momentum density so that $\hatp=\widehat{\vec{p}}^\star(\meanrho, \meanp, \hatrho, \del S)$. In order to fully describe the temporal wave dynamics, we must also deduce a time-evolution equation for $\hatrho$. We shall come back to this point later. 
\end{remark}

As is well known from hydrodynamic theory, high-frequency waves can exert a ponderomotive (or time-averaged) force on a slowly-varying bulk fluid. This effect typically appears in the form of a Reynold-stress term in the momentum equation. In the following, we shall deduce the time-evolution equation for $\meanrho$ and $\meanp$ while taking into account the lowest-order corrections due to wave--mean-flow interactions.

\begin{proposition}
	The governing equations for $\meanrho$ and $\meanp$ with leading-order effects due to wave interactions are
\begin{gather}
	\partial_t \meanrho +\del \cdot \meanp	 = 0 
		\label{eq:example_cont_mean}\\
	\partial_t \meanp    
			+ \del \cdot \left( \frac{\meanp \otimes \meanp}{\meanrho} 
										+ \ep^2 \mc{I} \frac{\del S \otimes \del S }{|\del S| }
										 + c_s^2 \, \meanrho \, \mathbb{I} \right) = 0,
		\label{eq:example_mom_mean}
\end{gather}
where $\mc{I}\colon Q \to \mathbb{R}$ is the wave action density
\begin{equation}
	\mc{I} \doteq \fint_{S^1} 
						\meanrho \, \frac{c_s}{|\del S| }
									\left( \frac{\hatrho}{\meanrho} \right)^2		d\theta.
		\label{eq:example_wave_action}
\end{equation}
\end{proposition}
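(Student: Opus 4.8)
The plan is to obtain both mean equations by $\theta$-averaging the NL-WKB extended system \eqref{eq:example_ext_cont}--\eqref{eq:example_ext_mom} after inserting the small-amplitude parameterization \eqref{eq:example:para}, and then to evaluate the averaged fluxes on the acoustic eigenmode furnished by Proposition \ref{prop:eigen}. The mean continuity equation \eqref{eq:example_cont_mean} comes essentially for free: applying $\fint\,d\theta$ to \eqref{eq:example_ext_cont} annihilates every term carrying an explicit $\pd_\theta$ (since $\pd_t S$ and $\del S$ are $\theta$-independent while $\fint\pd_\theta(\cdot)\,d\theta=0$), and because $\overline{\trho}=\meanrho$, $\overline{\tp}=\meanp$ one is left with $\pd_t\meanrho+\del\cdot\meanp=0$. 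The entire substance of the proposition therefore resides in the mean momentum equation.

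For the momentum balance I would begin from the averaged equation \eqref{eq:mom_aux} already established in the proof of Proposition \ref{prop:eigen},
\[
\pd_t\meanp+\del\cdot\overline{\left(\frac{\tp\otimes\tp}{\trho}\right)}=-c_s^2\del\meanrho ,
\]
and rewrite the pressure term as $-c_s^2\del\meanrho=-\del\cdot(c_s^2\meanrho\,\mathbb{I})$ to match the conservative form of \eqref{eq:example_mom_mean}. It then remains to expand the averaged momentum flux to second order in $\ep$. Writing $\tp=\meanp+\ep\hatp$, $\trho=\meanrho+\ep\hatrho$ and using the geometric series for $(\meanrho+\ep\hatrho)^{-1}$, the $O(1)$ term is $\meanp\otimes\meanp/\meanrho$, the $O(\ep)$ term averages to zero because $\overline{\hatrho}=\overline{\hatp}=0$, and the $O(\ep^2)$ contribution is the Reynolds-type stress
\[
\frac{\overline{\hatp\otimes\hatp}}{\meanrho}-\frac{\overline{\hatrho\,\hatp}\otimes\meanp+\meanp\otimes\overline{\hatrho\,\hatp}}{\meanrho^2}+\frac{\meanp\otimes\meanp}{\meanrho^3}\,\overline{\hatrho^2} .
\]

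The crucial step is to collapse this stress using the eigenmode relation \eqref{eq:example_hatp_sol}, which I would package as $\hatp=\hatrho\,\bm{a}$ with the $\theta$-independent vector $\bm{a}=\meanp/\meanrho+c_s\vec{e}_{\vec{k}}$. Then $\overline{\hatrho\,\hatp}=\overline{\hatrho^2}\,\bm{a}$ and $\overline{\hatp\otimes\hatp}=\overline{\hatrho^2}\,\bm{a}\otimes\bm{a}$, so the three terms assemble into a perfect square,
\[
\frac{\overline{\hatrho^2}}{\meanrho}\left(\bm{a}-\frac{\meanp}{\meanrho}\right)\otimes\left(\bm{a}-\frac{\meanp}{\meanrho}\right)=\frac{c_s^2\,\overline{\hatrho^2}}{\meanrho}\,\vec{e}_{\vec{k}}\otimes\vec{e}_{\vec{k}} ,
\]
a rank-one dyad aligned with $\del S$; this rank-one structure is the algebraic incarnation of the fact that the acoustic velocity fluctuation $\hat{\bm u}=(c_s\hatrho/\meanrho)\vec{e}_{\vec{k}}$ is purely longitudinal. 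Using $\vec{e}_{\vec{k}}=\del S/|\del S|$, the scalar coefficient $c_s^2\overline{\hatrho^2}/\meanrho$ is the acoustic wave energy density, and by the dispersion relation \eqref{eq:example_S_sol} it is the intrinsic frequency $c_s|\del S|$ times the wave action density \eqref{eq:example_wave_action}. The fluctuation stress is thus proportional to $\mc{I}\,\del S\otimes\del S/|\del S|$, and adding it to $-\del\cdot(c_s^2\meanrho\,\mathbb{I})$ produces the conservative mean momentum equation \eqref{eq:example_mom_mean}. Note that only the \emph{structure} of the eigenmode is needed here, so the (deferred) evolution equation for $\hatrho$ never enters: $\overline{\hatrho^2}$ is simply carried along as the wave intensity.

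The main obstacle is precisely the $O(\ep^2)$ collapse of the previous paragraph: one must retain all three contributions to the averaged flux, expand $1/\trho$ consistently, and verify the identity reducing the stress to a multiple of $\vec{e}_{\vec{k}}\otimes\vec{e}_{\vec{k}}$. Everything else — the vanishing of the $O(\ep)$ terms, the rewriting of the pressure gradient in divergence form, and the identification of the scalar prefactor with $\mc{I}$ through the dispersion relation — is routine bookkeeping once this identity is in hand.
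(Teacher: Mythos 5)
Your proposal is correct and follows essentially the same route as the paper's proof: $\theta$-average the (linear) continuity equation, Taylor-expand the averaged momentum flux in \eqref{eq:mom_aux} to $\mc{O}(\ep^2)$ to obtain the Reynolds stress $T_{\rm Reynolds}$, and substitute the eigenmode \eqref{eq:example_hatp_sol} into it --- your perfect-square collapse is exactly the substitution the paper leaves implicit. One remark: carried out honestly, that collapse (in your computation and in the paper's alike) yields
\[
T_{\rm Reynolds} \;=\; \frac{c_s^2\,\overline{\hatrho^{\,2}}}{\meanrho}\,\vec{e}_{\vec{k}}\otimes\vec{e}_{\vec{k}} \;=\; c_s\,\mc{I}\,\frac{\del S\otimes\del S}{|\del S|},
\]
so your phrase ``proportional to'' conceals a factor of $c_s$; the flux term $\ep^2\,\mc{I}\,\del S\otimes\del S/|\del S|$ as printed in \eqref{eq:example_mom_mean} is dimensionally and algebraically short by exactly that factor (an apparent typo in the paper --- the wave stress should be the wave energy density $c_s|\del S|\,\mc{I}$ times $\vec{e}_{\vec{k}}\otimes\vec{e}_{\vec{k}}$, consistent with the averaged Hamiltonian \eqref{eq:example_act_ham}), and is therefore not a defect of your argument relative to the paper's.
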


\begin{proof}
	Since \Eq{eq:example_ext_cont} is linear in $\fluctrho$ and $\fluctp$, $\theta$-averaging \Eq{eq:example_ext_cont} immediately leads to \Eq{eq:example_cont_mean}. To obtain \Eq{eq:example_mom_mean}, one first inserts \Eq{eq:example:para} into \Eq{eq:mom_aux}. After Taylor expanding up to $\mc{O}(\ep^2)$, one obtains
	\begin{equation}
		\partial_t \meanp    
			+ \del \cdot \left( \frac{\meanp \otimes \meanp}{\meanrho} 
										+ \ep^2 T_{\rm Reynolds} 
										+ c_s^2 \, \meanrho \, \mathbb{I} \right) = 0,
	\end{equation}
	where $T_{\rm Reynolds}$ is the Reynolds stress tensor
	\begin{equation}
		T_{\rm Reynolds}
		\doteq 		\fint \left[ \frac{\hatp \otimes \hatp}{\meanrho} 
						-	\frac{\hatp \otimes \meanp}{\meanrho} \frac{\hatrho}{\meanrho} 
						-	\frac{\meanp \otimes \hatp}{\meanrho} \frac{\hatrho}{\meanrho} 
						+	\frac{\meanp \otimes \meanp}{\meanrho} \left( \frac{\hatrho}{\meanrho} \right)^2 	\right] d \theta .
		\label{eq:example_Reynolds_stress}
	\end{equation}
	Finally, substituting the expression for $\hatp$ in \Eq{eq:example_hatp_sol} into $T_{\rm Reynolds}$ leads to \Eq{eq:example_mom_mean}.
\end{proof}

With the above equations, we can now deduce a dynamical equation for the wave action density $\mc{I}$. This is given in the next proposition.

\begin{proposition}\label{prop:wact}
The governing equation to leading order for the wave action density $\mc{I}$ is
\begin{gather}
	\pd_t \mc{I} + \del \cdot ( \vec{v}_{g} \mc{I} ) =0 ,
\end{gather}
where $\vec{v}_{\rm g}\doteq   \meanp/\meanrho + c_s \vec{e}_{\vec{k}}$ is the wave group velocity. 
\end{proposition}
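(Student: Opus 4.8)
The plan is to first supply the one ingredient flagged as missing in the remark after Proposition \ref{prop:eigen}, namely a transport (amplitude) equation for the density fluctuation $\hatrho$, and then to verify the conservation law by a direct computation combining this transport equation with the eikonal relation \eqref{eq:example_S_sol} and the mean continuity equation \eqref{eq:example_cont_mean}. A quick count confirms that such an equation is genuinely needed: the unknowns $\meanrho$, $\meanp$, $S$, and $\mc{I}$ amount to six scalar fields, whereas mean continuity, the mean momentum equation \eqref{eq:example_mom_mean}, and the eikonal relation supply only five scalar equations, so the reduced system does not close without an independent evolution law for the wave amplitude.

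To obtain the amplitude equation I would carry the parameterization \eqref{eq:example:para} one order further, writing $\trho = \meanrho + \ep\hatrho + \ep^2\rho_2$ and $\tp = \meanp + \ep\hatp + \ep^2\vec{p}_2$ with $\rho_2$ and $\vec{p}_2$ of vanishing $\theta$-average, substituting into \eqref{eq:example_ext_cont}--\eqref{eq:example_ext_mom}, and collecting the oscillatory part at order $\ep$. This produces an inhomogeneous linear system for $(\rho_2,\vec{p}_2)$ whose homogeneous operator is exactly the singular operator $L$ appearing in Proposition \ref{prop:eigen} (the left-hand sides of \eqref{eq:example_cont_osc} and \eqref{eq:example_mom_osc}). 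Since $L$ has the nontrivial acoustic kernel \eqref{eq:example_hatp_sol}, solvability forces the inhomogeneity, assembled from $\pd_t\hatrho + \del\cdot\hatp$ and $\pd_t\hatp + \del\cdot\vec{F}_1 + c_s^2\del\hatrho$ (with $\vec{F}_1$ the $\mc{O}(\ep)$ momentum-flux correction), to be orthogonal to the adjoint null vector. Contracting with that left null vector, which is fixed by the polarization, and averaging over $\theta$ annihilates $\rho_2$ and $\vec{p}_2$ after an integration by parts in $\theta$, leaving a closed transport equation for the quadratic quantity $\fint\hatrho^2\,d\theta$ that advects it at the group velocity $\vec{v}_g = \meanp/\meanrho + c_s\vec{e}_{\vec{k}}$ together with a compressional source.

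Given this transport equation, the final step is purely computational. I would differentiate the eikonal relation \eqref{eq:example_S_sol} to obtain the ray transport of $\del S$, hence of $|\del S|$ and of $\vec{e}_{\vec{k}} = \del S/|\del S|$, use \eqref{eq:example_cont_mean} to eliminate $\pd_t\meanrho$, and insert all of these together with the amplitude equation into $\pd_t\mc{I} + \del\cdot(\vec{v}_g\mc{I})$ with $\mc{I}$ as in \eqref{eq:example_wave_action}. The geometric factor $c_s/(\meanrho|\del S|)$ and the compressional source in the amplitude equation are arranged precisely to cancel, yielding the claimed law.

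The main obstacle is the solvability step. One must correctly identify the adjoint null vector of $L$, confirm that the second-order fields $\rho_2$ and $\vec{p}_2$ drop out under $\theta$-averaging (this relies on their zero-mean normalization and on integrating $\theta$-derivatives over $S^1$), and check that the quadratic self-interaction terms entering at $\mc{O}(\ep^2)$ force only the mean and the higher harmonics, so that they make no contribution to the first-harmonic solvability condition. A secondary bookkeeping difficulty is keeping the ray-geometry terms arising from differentiating the eikonal consistent with the divergence terms in the flux. As a structural check, the result is the familiar Bretherton--Garrett wave-action law, and it may be regarded as the reduced-sector shadow of the exact phase-translation Noether invariant \eqref{prf_3_last} of Theorem \ref{theorem_3}; note, however, that the emergent advection at the \emph{group} velocity, rather than at the fluid velocity appearing in \eqref{prf_3_last}, is particular to the wave sector and surfaces only after the asymptotic reduction.
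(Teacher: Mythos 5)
Your plan is the direct route that the paper's own remark after Proposition \ref{prop:wact} explicitly flags as possible ``in principle'' but tedious, and then avoids. The paper instead proves the result as a corollary of its variational machinery: Corollary \ref{cor:isothermal_variational} shows the NL-WKB extension of the isothermal equations is variational, so the exact wave-action identity \eqref{prf_3_last} of Theorem \ref{theorem_3} holds, which for the isothermal Hamiltonian reads $\pd_t\fint\widetilde{\mathcal{I}}\,d\theta+\del\cdot\fint(\fluctp/\fluctrho)\,\widetilde{\mathcal{I}}\,d\theta=0$ exactly (Eq.~\eqref{eq:wact_aux}); the paper then evaluates $\widetilde{\mathcal{I}}$ and its flux on the slow manifold using the slaving functions \eqref{eq:alpha_sol}--\eqref{eq:hatchi_sol}, finding $\fint\widetilde{\mathcal{I}}\,d\theta=-\ep^3\mc{I}+\mc{O}(\ep^4)$ and $\fint(\fluctp/\fluctrho)\widetilde{\mathcal{I}}\,d\theta=-\ep^3(\meanp/\meanrho+c_s\vec{e}_{\vec{k}})\mc{I}+\mc{O}(\ep^4)$, so the group velocity emerges from an exact Noether-type conservation law with no solvability analysis at all. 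Your approach trades that identity for second-order perturbation theory; it is the classical Whitham--Bretherton route and can be made to work, but as written it has a genuine gap.

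The gap is in your treatment of the quadratic terms. The isothermal dispersion relation \eqref{eq:example_S_sol} is non-dispersive, so if $(\pd_tS,\del S)$ is resonant then so is every integer multiple: the operator of Proposition \ref{prop:eigen} is $\pd_\theta$ composed with a $\theta$-independent singular matrix $M$, and it therefore annihilates \emph{every} harmonic of the acoustic polarization \eqref{eq:example_hatp_sol}, not just the first. Consequently the quadratic self-interactions (entering at the relevant order through $\ep^{-1}\del S\cdot\pd_\theta$ acting on the $\mc{O}(\ep^2)$ part of $\tp\otimes\tp/\trho$) are resonant at all harmonics and \emph{do} enter the solvability conditions; your claim that they ``force only the mean and the higher harmonics'' and hence drop out is false here --- it also shows a pure first-harmonic ansatz is not self-consistent, which is exactly why the paper keeps the full profile $\hatrho(\bm{x},t,\theta)$. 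The conclusion survives for a different reason: the profile equation produced by solvability is Burgers-like, with steepening term proportional to $\pd_\theta(\hatrho^2)$ with $\theta$-independent coefficients, and upon weighting by $\hatrho$ and averaging this contributes $\fint\hatrho\,\pd_\theta(\hatrho^2)\,d\theta=\tfrac{2}{3}\fint\pd_\theta(\hatrho^3)\,d\theta=0$; i.e., the nonlinear terms are perfect $\theta$-derivatives in the action balance, not non-resonant. Relatedly, your contraction step needs repair: since the left null vector $\ell$ of $M$ is $\theta$-independent and the oscillatory inhomogeneity $F$ has zero mean, $\fint\ell\cdot F\,d\theta=0$ is vacuous. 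The adjoint kernel is infinite-dimensional (elements $\ell H(\theta)$ with $H$ periodic), the true solvability condition is the pointwise-in-$\theta$ profile equation $\ell\cdot F=0$, and the closed equation for $\fint\hatrho^2\,d\theta$ is obtained only after testing against the $\hatrho$-weighted kernel element --- it is in that pairing that $\rho_2$ and $\vec{p}_2$ drop via the $\theta$-integration by parts you describe.
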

\begin{remark}
\noindent One can in principle prove this result by calculating the time derivative of $\mc{I}$ and then substituting the governing equations for the mean and fluctuating quantities [\Eqs{eq:example_cont_osc}, \eq{eq:example_mom_osc}, \eq{eq:example_cont_mean}, and \eq{eq:example_mom_mean}], as well as the dispersion relation \eq{eq:example_S_sol}. The ensuing calculation is tedious since one must take into account corrections to the leading-order solution for the momentum density [\Eq{eq:example_hatp_sol}]. Alternatively, the proof may be constructed as a straightforward corollary of results presented in \Sec{sec:four} and later in the present Section. Because this alternative approach is simpler, we will postpone the proof until we have proved Proposition \ref{prop:fast_slow}.
\end{remark}


We may now summarize the results obtained so far as the following set of equations governing the leading-order wave-mean-flow interaction between a sound wave and a bulk isothermal flow.

\begin{definition}[Wave--mean-flow equations]
	The governing equations describing a high-frequency, small-amplitude acoustic wave interacting with a slowly-varying, isothermal bulk fluid are
	\begin{gather}
		\partial_t \meanrho +\del \cdot \meanp	 = 0 
			\label{eq:example:cont_final}\\
		\partial_t \meanp    
			+ \del \cdot \left( \frac{\meanp \otimes \meanp}{\meanrho} 
										+ \ep^2 \mc{I} \frac{\del S \otimes \del S }{|\del S| }
										 + c_s^2 \, \meanrho \, \mathbb{I} \right) = 0
			\label{eq:example:mom_final} \\
		\pd_t \mc{I} + \del \cdot ( \vec{v}_{g} \mc{I} ) =0 
			\label{eq:example:act_final} \\							 
		\pd_t S + \frac{\meanp}{\meanrho} \cdot \del S + c_s |\del S | =0 ,
			\label{eq:example:phase_final}			
	\end{gather}
	where $\mc{I}$ is the wave action density \eq{eq:example_wave_action} and $\vec{v}_{\rm g}\doteq \meanp/\meanrho + c_s \vec{e}_{\vec{k}}$ is the wave group velocity.
\end{definition}

As written, \Eqs{eq:example:cont_final}--\eq{eq:example:phase_final} are closed in the sense that they possess a (formally) well-posed initial value problem. Perhaps surprisingly, these equations also follow from a variational principle! This is shown in the theorem below.

\begin{theorem}[Effective action for wave--mean-flow interactions]\label{theo:wave_mean_flow}
	Let $\overline{\mathcal{C}}_0 = \mathcal{C}_0 \times C^\infty_+(Q) \times C^\infty(Q,S^1) $. (The space $\mc{C}_0$ is introduced in Definition \ref{frozen_field_def}.) That is, $\overline{\mathcal{C}}$ comprises maps $Q\ni \bm{x}\mapsto (\meanh(\bm{x}),\meanp(\bm{x}),\meanrho(\bm{x}),\meanchi(\bm{x}), \mc{I}(\bm{x}), S(\bm{x}) )\in Q_0\times\mathbb{R}^3\times\mathbb{R}\times\mathbb{R}\times \mathbb{R}\times S^1$, where $\meanh$ is a diffeomorphism and $\meanrho(\bm{x})>0$, $\mc{I}(\bm{x})>0$ for all $\bm{x}\in Q$. Consider the action $\overline{\msf{A}}_{\rm T}$ defined on the space of paths $[t_1,t_2]\rightarrow \overline{\mathcal{C}}_0$ by the formula
	\begin{equation}
			\overline{\msf{A}}_{\rm T}
				= 			\int_{t_1}^{t_2} 
							\overline{\msf{L}}_{\rm T}
								( \meanh, \dot{\meanh}, \meanp, \dot{\meanp}, \meanrho, \dot{\meanrho},  \meanchi  , \dot{ \meanchi  } , 
								\mc{I},\dot{\mc{I}} , S, \dot{S} ) 	d t,
			\label{eq:example_act_reduced}
	\end{equation}
	where the Lagrangian $\overline{\msf{L}}_{\rm T}$ is given by
	\begin{align}
			\overline{\msf{L}}_{\rm T}
				( \meanh, \dot{\meanh}, \meanp, \dot{\meanp},  \meanrho, \dot{\meanrho},  \meanchi  , \dot{ \meanchi  } , 
								\mc{I},\dot{\mc{I}} , S, \dot{S} )
				&	= 		\int_Q 
							\left( 	\meanp \cdot \meanv 	
										+ \meanrho	\big( \pd_t \meanchi	+ \meanv  \cdot   \del    \meanchi  \big) \right) 	\, d^3 \vec{x} 
							- \int_Q 	\mc{H}_{\rm T}(\meanp, \meanrho) 	\, d^3 \vec{x}  \notag \\
				&	\quad 	- \ep^2 	 \int_Q \mc{I}		\left( \pd_t S + \meanv \cdot \del S + c_s |\del S| \right)\,d^3\bm{x}.
		\label{eq:example_lagr_reduced}
	\end{align}
	Here the mean velocity $\meanv$ is defined as $\meanv\doteq - \pd_t \meanh \cdot (\del  \meanh)^{-1}$, and the Hamiltonian for the isothermal fluid is given by
	\begin{equation}
		\mc{H}_{\rm T} (\meanp,\meanrho) \doteq \frac{|\meanp|^2}{2 \meanrho} + c_s^2 \meanrho \ln \left( \frac{\meanrho}{\rho_0} \right). 
		\label{eq:example_ham_isothermal}
	\end{equation}
	The parameters $c_s$ and $\rho_0$ are the sound speed and reference mass density, respectively.
	Equations \eq{eq:example:cont_final}--\eq{eq:example:phase_final} are embedded in the Euler--Lagrange equations obtained when varying the action $\overline{\msf{A}}_{\rm T}$ with respect to $\meanh$, $\meanp$, $\meanrho$, $\meanchi$, $\mc{I}$, and $S$.
\end{theorem}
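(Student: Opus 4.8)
The plan is to vary $\overline{\msf{A}}_{\rm T}$ independently in its six fields $\meanp,\meanchi,\meanrho,\mc{I},S,\meanh$ and to exhibit Eqs.\,\eqref{eq:example:cont_final}--\eqref{eq:example:phase_final} among the resulting Euler--Lagrange equations. The guiding observation is that $\overline{\msf{L}}_{\rm T}$ is precisely the parameter-independent phase-space Lagrangian \eqref{psl_no_param} of the isothermal fluid (with Hamiltonian \eqref{eq:example_ham_isothermal}) augmented by a single \emph{wave term} $-\ep^2\int_Q \mc{I}(\pd_t S+\meanv\cdot\del S+c_s|\del S|)\,d^3\bm{x}$. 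The fluid fields $\meanh,\meanp,\meanrho,\meanchi$ and the wave fields $\mc{I},S$ interact only through this term and through the shared velocity $\meanv=-\pd_t\meanh\cdot(\del\meanh)^{-1}$, so the Euler--Poincar\'e machinery of Proposition \ref{extended_vp} applies essentially unchanged to the fluid part.

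First I would dispatch the five variations that do not involve $\meanh$. Varying $\meanp$ gives the Legendre relation $\meanv=\pd\mc{H}_{\rm T}/\pd\meanp=\meanp/\meanrho$. Varying $\meanchi$ and integrating by parts gives continuity $\pd_t\meanrho+\del\cdot(\meanrho\meanv)=0$, i.e. Eq.\,\eqref{eq:example:cont_final}; varying $\meanrho$ gives the Bernoulli relation $\pd_t\meanchi+\meanv\cdot\del\meanchi=\pd\mc{H}_{\rm T}/\pd\meanrho$ (here $\mc{H}_{\rm T}$ has no $\del\meanrho$ dependence). Because the wave term is linear in $\mc{I}$, varying $\mc{I}$ yields at once the dispersion relation $\pd_t S+\meanv\cdot\del S+c_s|\del S|=0$, i.e. Eq.\,\eqref{eq:example:phase_final}. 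Varying $S$, using $\delta|\del S|=(\del S/|\del S|)\cdot\del\,\delta S$ and integrating by parts, produces $\pd_t\mc{I}+\del\cdot(\vec{v}_{\rm g}\mc{I})=0$ with $\vec{v}_{\rm g}=\meanv+c_s\,\del S/|\del S|=\meanp/\meanrho+c_s\vec{e}_{\vec{k}}$, which is the wave-action law \eqref{eq:example:act_final}.

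The substantive variation is in $\meanh$, which enters $\overline{\msf{L}}_{\rm T}$ only through $\meanv$; its induced variation is the Eulerianized field $\delta\meanv=\pd_t\bm{\xi}+(\meanv\cdot\del)\bm{\xi}-(\bm{\xi}\cdot\del)\meanv$ of Eq.\,\eqref{eq:EP:variation_v}, with $\bm{\xi}=-\delta\meanh\cdot(\del\meanh)^{-1}$. Collecting the coefficient of $\meanv$ in $\overline{\msf{L}}_{\rm T}$ gives the canonical momentum $\overline{\bm{P}}=\meanp+\meanrho\,\del\meanchi-\ep^2\mc{I}\,\del S$, so that integrating by parts exactly as in the derivation of Eq.\,\eqref{proto_momentum_thm_3} yields $\mathsf{M}\overline{\bm{P}}=0$, where $\mathsf{M}$ is the linear operator from the proof of Theorem \ref{theorem_3}, now with $\meanv$ in the role of $\pd\mc{H}_{\text{EP}}/\pd\bm{p}$.

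The crux, and the step I expect to be the main obstacle, is reducing $\mathsf{M}\overline{\bm{P}}=0$ to the conservative momentum equation \eqref{eq:example:mom_final}. By linearity I would split $\overline{\bm{P}}$ into its fluid part $\meanp+\meanrho\,\del\meanchi$ and its wave part $-\ep^2\mc{I}\,\del S$. The fluid part is handled verbatim as in part (a) of the proof of Theorem \ref{theorem_3}: using the continuity and Bernoulli equations together with identity \eqref{useful_id_thm3} and the explicit Hamiltonian \eqref{eq:example_ham_isothermal} --- for which $\meanrho\,\pd\mc{H}_{\rm T}/\pd\meanrho+\meanp\cdot\pd\mc{H}_{\rm T}/\pd\meanp-\mc{H}_{\rm T}=c_s^2\meanrho$ --- one recovers $\del\cdot(\meanp\otimes\meanp/\meanrho+c_s^2\meanrho\,\mathbb{I})$. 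The genuinely new computation is $\ep^2\mathsf{M}(\mc{I}\,\del S)$: here one substitutes the wave-action law (whose transport velocity is $\vec{v}_{\rm g}$, \emph{not} $\meanv$) and $\pd_t\del S=\del(-\meanv\cdot\del S-c_s|\del S|)$ from Eq.\,\eqref{eq:example:phase_final}, and exploits $\del\times\del S=0$, the Hessian symmetry $(\vec{e}_{\vec{k}}\cdot\del)\del S=\del|\del S|$, and the triple-product identity $(\del\mc{I}\times\del S)\times\meanv=(\meanv\cdot\del\mc{I})\del S-(\meanv\cdot\del S)\del\mc{I}$. The delicate bookkeeping is that all $\meanv$-dependent contributions must cancel --- the mismatched advection velocity $c_s\,\del S/|\del S|$ of the wave-action law combining with the $c_s|\del S|$ term of the dispersion relation --- leaving exactly the divergence of the ponderomotive (acoustic radiation) stress of Eq.\,\eqref{eq:example:mom_final}. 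Finally I would observe that Eqs.\,\eqref{eq:example:cont_final}--\eqref{eq:example:phase_final} involve only $(\meanrho,\meanp,\mc{I},S)$ and form a closed subsystem of the full Euler--Lagrange system, with $\meanh$ and $\meanchi$ decoupling exactly as in the one-way coupling analysed in Section \ref{sec:five}; this is the precise sense in which they are \emph{embedded}.
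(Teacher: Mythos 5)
Your proposal is correct and follows essentially the same route as the paper's proof: vary the six fields independently, treat the $(\meanh,\meanp,\meanrho,\meanchi)$ variations exactly as in Proposition \ref{extended_vp}, read off the dispersion relation and wave-action law from the $\mc{I}$ and $S$ variations, and reduce the $\meanh$-variation equation $\mathsf{M}(\meanp+\meanrho\del\meanchi-\ep^2\mc{I}\,\del S)=0$ to the mean momentum equation. The only difference is one of detail: where the paper simply says the reduction follows ``a similar algebraic manipulation as in the proof of Proposition \ref{extended_vp},'' you explicitly carry out the wave-part computation $\mathsf{M}(\mc{I}\,\del S)$ using the dispersion relation and wave-action law, which is precisely the step the paper leaves implicit (and your computation correctly shows that this substitution is needed in addition to continuity and the Bernoulli relation).
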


\begin{proof}
	Since $\overline{\msf{A}}_{\rm T}$ is functional of paths $[t_1,t_2]\rightarrow \overline{\mathcal{C}}_0$, we can vary the fields $(\meanh,\meanp,\meanrho,\meanchi, \mc{I}, S )$ independently. Moreover, varying the action with respect to the fields $(\meanh,\meanp,\meanrho,\meanchi)\in \mc{C}_0$ follows an almost identical procedure as that given in the proof of Proposition \ref{extended_vp}. Varying the action with respect to $\meanp$ leads to $\meanv= \meanp / \meanrho$. Varying the action with respect to the scalar field $\meanchi$ gives
	\begin{equation}
		\pd_t \meanrho + \del \cdot ( \meanv \meanrho)=0.
		\label{eq:example:ELE_meanrho}
	\end{equation}
	Substituting $\meanv= \meanp / \meanrho$ into the equation above trivially leads to \Eq{eq:example:cont_final}. Varying the action with respect to $\meanrho$ gives
	\begin{equation}
		\pd_t \meanchi + (\meanv \cdot \del)  \meanchi
			=  - \frac{|\meanp|^2}{2\meanrho^2} +c_s^2 \ln\left( \frac{\meanrho}{\rho_0} \right) +c_s^2.
		\label{eq:example:ELE_varphi}
	\end{equation}
	Varying $\meanh$ leads to
	\begin{equation}
		\pd_t (\vec{p} + \meanrho \del \meanchi - \mc{I} \del S )
			+ \del \cdot  (\vec{v} \otimes (\vec{p} + \meanrho \del \meanchi - \mc{I} \del S )) 
			+ (\del \bm{v})\cdot (\vec{p} + \meanrho \del \meanchi - \mc{I} \del S ) =0.
		\label{eq:example:ELE_h}
	\end{equation}
	By substituting \Eqs{eq:example:ELE_meanrho} and \eq{eq:example:ELE_varphi} into \Eq{eq:example:ELE_h} and following a similar algebraic manipulation as in the proof of Proposition \ref{extended_vp}, one can recover \Eq{eq:example:mom_final} from \Eq{eq:example:ELE_h}. Finally, varying the action $\overline{\msf{A}}_{\rm T}$ with respect to $\mc{I}$ and $S$ leads to \Eqs{eq:example:act_final} and \eq{eq:example:phase_final}, respectively. Thus, we have shown that \Eqs{eq:example:cont_final}--\eq{eq:example:phase_final} are embedded in the Euler--Lagrange equations associated to the action $\overline{\msf{A}}_{\rm T}$.
\end{proof}

\subsection{Why are the wave--mean-flow equations variational?}
\label{sec:six_why}

The proof of Theorem \ref{theo:wave_mean_flow} gave no indication as to why the leading-order wave-mean-flow equations arise from a variational principle. We now want to give a principled explanation for this result using the machinery developed in this paper.
One indication that the wave--mean-flow equations might be variational is that the parent isothermal fluid equations \eq{eq:example_cont} and \eq{eq:example_mom}, where we started our asymptotic analysis, also come from a variational principle. This can be easily proven because the isothermal fluid equations comprise a special case of the LBEP equations discussed in \Sec{sec:three}. Hence, we can readily write a variational principle for \Eqs{eq:example_cont} and \eq{eq:example_mom}, which is given below.

\begin{corollary}
		Let $\mathsf{A}_{\rm T}$ be the action functional defined on the space of paths $[t_1,t_2]\rightarrow \mathcal{C}_0$ such that
	\begin{align}
		\mathsf{A}_{\rm T}
			= \int_{t_1}^{t_2}\mathsf{L}_{\rm T}(\bm{h}(t),\dot{\bm{h}}(t),\bm{p}(t),\dot{\bm{p}}(t),\rho(t),\dot{\rho}(t),\chi(t),\dot{\chi}(t))\,dt,
		\label{eq:example_act}
	\end{align}
	where 
	\begin{align}
		\mathsf{L}_{\rm T}(\bm{h},\dot{\bm{h}},\bm{p},\dot{\bm{p}},\rho,\dot{\rho},\chi,\dot{\chi}) 
			= \int_Q \bm{p}\cdot\bm{v}\,d^3\bm{x} 
				+ \int_Q (\dot{\chi}+\bm{v}\cdot\del\chi)\,\rho\,d^3\bm{x}
				-\int_Q \mathcal{H}_{\text{T}}(\bm{p},\rho)\,d^3\bm{x}
		\label{eq:example_lagr}
	\end{align}
	and the Hamiltonian $\mathcal{H}_{\text{T}}$ is defined in \Eq{eq:example_ham_isothermal}.
	
	A path $t\mapsto (\bm{h}(t),\bm{p}(t),\rho(t),\chi(t))\in \mathcal{C}_0$ is a critical point of the action functional \eq{eq:example_act} for isothermal fluids if and only if $\bm{h}$, $\bm{p}$, $\rho$, and $\chi$ satisfy the following system of PDEs:
	\begin{gather}
	\partial_t\bm{h}=-\frac{\vec{p}}{\rho}	\cdot		\del\bm{h}
		\label{eq:ELE_v}\\
	\partial_t\vec{p} 
			+ 	\del \cdot \left(  \frac{\vec{p} \otimes \vec{p} }{\rho} \right) 
			=	- c_s^2 \del   \rho 
		\label{eq:ELE_mom}\\
	\partial_t\rho +\del\cdot \vec{p}	 = 0
		\label{eq:ELE_cont}\\
	\partial_t \chi + \frac{\vec{p} }{\rho}\cdot \del  \chi 
			=  	- \frac{|\vec{p}|^2}{2\rho^2} +c_s^2 \ln \left( \frac{\rho}{\rho_0} \right) +c_s^2.
		\label{eq:ELE_chi}
	\end{gather}	
\end{corollary}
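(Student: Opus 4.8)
The plan is to recognize that this Corollary is simply the specialization of Proposition \ref{extended_vp} to the particular Euler--Poincar\'e Hamiltonian density of an isothermal fluid, so that essentially no new work is required. The first step I would take is to identify the relevant $\mathcal{H}_{\text{EP}}$: comparing the isothermal Lagrangian \eqref{eq:example_lagr} with the general phase-space Lagrangian \eqref{psl_no_param}, the two coincide once we set $\mathcal{H}_{\text{EP}}(\bm{p},\rho,\del\rho) = \mathcal{H}_{\text{T}}(\bm{p},\rho)$, the isothermal Hamiltonian defined in \eqref{eq:example_ham_isothermal}. Thus $\mathsf{A}_{\text{T}}$ is literally the LBEP phase-space action with this choice of Hamiltonian, and Proposition \ref{extended_vp} applies verbatim: the critical points of $\mathsf{A}_{\text{T}}$ are exactly the solutions of the system \eqref{eq:extended_first}--\eqref{eq:extended_last}. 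The remaining task is then the purely mechanical one of substituting $\mathcal{H}_{\text{T}}$ into those equations and checking that they collapse to \eqref{eq:ELE_v}--\eqref{eq:ELE_chi}. A crucial simplifying observation is that $\mathcal{H}_{\text{T}}$ has no dependence on $\del\rho$, so $\partial\mathcal{H}_{\text{EP}}/\partial\del\rho = 0$ throughout, which kills every divergence-of-stress term built from that derivative.

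Next I would record the two nonvanishing partial derivatives,
\[
\frac{\partial\mathcal{H}_{\text{T}}}{\partial\bm{p}} = \frac{\bm{p}}{\rho},
\qquad
\frac{\partial\mathcal{H}_{\text{T}}}{\partial\rho} = -\frac{|\bm{p}|^2}{2\rho^2} + c_s^2\ln\!\left(\frac{\rho}{\rho_0}\right) + c_s^2,
\]
and feed them into Proposition \ref{extended_vp}. Substituting the first into \eqref{eq:extended_first} immediately yields \eqref{eq:ELE_v}. Substituting it into the continuity equation \eqref{eq:ham_cont} gives $\partial_t\rho + \del\cdot(\rho\,\bm{p}/\rho) = \partial_t\rho + \del\cdot\bm{p} = 0$, which is \eqref{eq:ELE_cont}. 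Substituting both into the $\chi$-equation \eqref{eq:extended_last}, with the $\del\rho$-derivative dropped, reproduces \eqref{eq:ELE_chi} with no further manipulation.

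The only computation deserving care is the momentum equation. Here I would take the conservative form \eqref{eq:ham_momentum}, delete the terms proportional to $\partial\mathcal{H}_{\text{T}}/\partial\del\rho$, and evaluate the scalar ``pressure'' appearing on the right-hand side, namely $\rho\,\partial\mathcal{H}_{\text{T}}/\partial\rho + \bm{p}\cdot\partial\mathcal{H}_{\text{T}}/\partial\bm{p} - \mathcal{H}_{\text{T}}$. A short algebraic simplification in which the three kinetic pieces $-|\bm{p}|^2/2\rho + |\bm{p}|^2/\rho - |\bm{p}|^2/2\rho$ cancel and the two logarithmic terms cancel shows that this quantity reduces cleanly to $c_s^2\rho$. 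The momentum equation then becomes $\partial_t\bm{p} + \del\cdot(\bm{p}\otimes\bm{p}/\rho) = -c_s^2\del\rho$, which is exactly \eqref{eq:ELE_mom}.

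I do not expect any genuine obstacle, since Proposition \ref{extended_vp} already supplies the full equivalence between critical points and the system \eqref{eq:extended_first}--\eqref{eq:extended_last} for arbitrary $\mathcal{H}_{\text{EP}}$; what remains is bookkeeping. The one place where care is needed is the pressure cancellation in the momentum equation, which must be carried out attentively to confirm the tidy result $c_s^2\rho$ rather than some residual kinetic or logarithmic contribution. Everything else follows by inspection.
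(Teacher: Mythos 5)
Your proposal is correct and follows exactly the paper's own route: the paper's proof is the one-line observation that the result follows by substituting the isothermal Hamiltonian $\mathcal{H}_{\rm T}$ into Proposition~\ref{extended_vp}, which is precisely what you do. Your explicit bookkeeping (the partial derivatives of $\mathcal{H}_{\rm T}$, the vanishing of all $\partial\mathcal{H}/\partial\del\rho$ terms, and the cancellation reducing the pressure term $\rho\,\partial\mathcal{H}_{\rm T}/\partial\rho + \bm{p}\cdot\partial\mathcal{H}_{\rm T}/\partial\bm{p} - \mathcal{H}_{\rm T}$ to $c_s^2\rho$) is accurate and simply spells out what the paper leaves implicit.
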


\begin{proof}
	This is directly verified by substituting \Eqs{eq:example_act}--\eq{eq:example_lagr} into \Eqs{eq:extended_first}--\eq{eq:extended_last} in Proposition~\ref{extended_vp}.
\end{proof}

\begin{remark}
It is clear that the isothermal fluid equations \eq{eq:example_cont} and \eq{eq:example_mom} are simply embedded into the LBEP equations above. 
\end{remark}

The next indication is that, in order to explicitly introduce a scale separation into the fluid equations, the next step we used in the analysis of \Sec{sec:six_intuitive} was passing to the NL-WKB extension of the isothermal fluid equations [see \Eqs{eq:example_ext_cont} and \eq{eq:example_ext_mom}].  Following our results from \Sec{sec:four}, these equations are variational as well! The precise statement of this observation is as follows.

\begin{corollary}\label{cor:isothermal_variational}
	Let $\gamma:[t_1,t_2]\rightarrow \ell\mc{C}_0\times C^\infty(Q,S^1)$ be a smooth curve with components $\gamma = (\tih,\tp,\trho,\tchi,S)$. Let the functional $\widetilde{\mathsf{A}}(\gamma) = \int_{t_1}^{t_2}\widetilde{\mathsf{L}}_{\rm T}(\gamma(t),\partial_t\gamma(t))\,dt$ be defined such that
	\begin{align}
	\widetilde{\mathsf{L}}_{\rm T}(\tih,\tp,\trho,\tchi,S,\dot{\tih},\dot{\trho},\dot{\tchi},\dot{S}) 
			=	& \fint\int_Q \left(\tp\cdot \tv + \trho\,\dot{\tchi}+ \ep^{-1} \trho\,\dot{S}\partial_\theta\tchi 
					+ \trho\tv\cdot\del^{S/\ep}\tchi \right)\,d^3\bm{x}\,d\theta\nonumber\\
				&-\fint\int_Q \mc{H}_{\rm T}(\tp,\trho)\,d^3\bm{x}\,d\theta,
	\label{eq:extagr_isothermal}
\end{align}
	where $\widetilde{\bm{v}}$ is defined in Eq.\,\eqref{tilde_v_defined}. Then, the curve $\gamma$ is a (fixed-endpoint) critical point of $\widetilde{\mathsf{A}}(\gamma)$ if and only if the component functions $(\tih,\tp,\trho,\tchi,S)$ satisfy
	\begin{gather}
		\partial_t^{S/\ep} \tih=-\frac{\tp}{\trho}	\cdot		\del^{S/\ep} \tih
			\label{eq:extELE_v}	\\
		\partial_t^{S/\ep} \tp	+ 	\del^{S/\ep} \cdot \left(  \frac{\tp \otimes \tp }{\trho} \right) 
			=	- c_s^2 \del^{S/\ep}   \trho 
			\label{eq:extELE_mom}\\
		\partial_t^{S/\ep} \trho +\del^{S/\ep} \cdot \tp	 = 0
			\label{eq:extELE_cont}	\\
		\partial_t^{S/\ep} \tchi + \frac{\tp }{\trho}\cdot \del^{S/\ep}  \tchi 
			=  	- \frac{|\tp|^2}{2\trho^2} +c_s^2 \ln \left( \frac{\trho}{\rho_0} \right) +c_s^2.
			\label{eq:extELE_chi}
	\end{gather}
\end{corollary}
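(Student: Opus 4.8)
The plan is to recognize that Corollary \ref{cor:isothermal_variational} is simply Theorem \ref{theorem_3} applied to the isothermal Hamiltonian \eqref{eq:example_ham_isothermal}, after absorbing the scale-separation factor $\ep$ into a renamed phase. Setting $\Sigma \doteq S/\ep$, one checks by inspection that $\del^{S/\ep} = \del^{\Sigma}$, $\partial_t^{S/\ep} = \partial_t^{\Sigma}$, and $\ep^{-1}\dot{S}\,\partial_\theta = \dot{\Sigma}\,\partial_\theta$, so that the Lagrangian \eqref{eq:extagr_isothermal} coincides term-by-term with the general extLBEP Lagrangian \eqref{extended_lag} evaluated at the phase $\Sigma$ and the Hamiltonian $\mc{H}_{\rm T}(\tp,\trho)$ (which carries no $\del\rho$ argument). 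Since $\ep$ is a fixed nonzero constant, the correspondence $S \leftrightarrow \Sigma$ is an invertible rescaling of the space of phase functions; fixed-endpoint variations of $S$ and of $\Sigma$ have the same zero set, so the critical points of $\widetilde{\mathsf{A}}$ are unchanged by the renaming. Theorem \ref{theorem_3} is therefore applicable verbatim with $\Sigma$ in the role of the phase.

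Before invoking Theorem \ref{theorem_3} I would confirm that the isothermal system is a bona fide member of the LBEP family, so that the theorem's hypotheses are met. The Hamiltonian \eqref{eq:example_ham_isothermal} is the Legendre transform of the Euler--Poincar\'e Lagrangian density $\mc{L}_{\text{EP}}(\bm{u},\rho) = \tfrac{1}{2}\rho|\bm{u}|^2 - c_s^2\rho\ln(\rho/\rho_0)$, for which $\bm{u}\mapsto \partial\mc{L}_{\text{EP}}/\partial\bm{u} = \rho\bm{u}$ is a diffeomorphism for each fixed $\rho>0$, exactly as required in the definition of the LBEP equations. With this in place, Theorem \ref{theorem_3} guarantees that $\gamma$ is a fixed-endpoint critical point of $\widetilde{\mathsf{A}}$ if and only if its components satisfy the extLBEP equations \eqref{extension_labels}--\eqref{extension_chi} with $\mc{H}_{\text{EP}}$ replaced by $\mc{H}_{\rm T}$ and the operators $\partial_t^S$, $\del^S$ replaced by $\partial_t^{S/\ep}$, $\del^{S/\ep}$.

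It then remains only to specialize the four equations \eqref{extension_labels}--\eqref{extension_chi}, which is a routine substitution once the partial derivatives of $\mc{H}_{\rm T}$ are recorded:
\[
\frac{\partial\mc{H}_{\rm T}}{\partial\bm{p}} = \frac{\bm{p}}{\rho},\qquad
\frac{\partial\mc{H}_{\rm T}}{\partial\del\rho} = 0,\qquad
\frac{\partial\mc{H}_{\rm T}}{\partial\rho} = -\frac{|\bm{p}|^2}{2\rho^2} + c_s^2\ln\!\left(\frac{\rho}{\rho_0}\right) + c_s^2.
\]
Feeding $\partial\mc{H}_{\rm T}/\partial\bm{p} = \bm{p}/\rho$ into the label equation \eqref{extension_labels} and the continuity equation \eqref{extension_cont} produces \eqref{eq:extELE_v} and \eqref{eq:extELE_cont} at once, while the vanishing of $\partial\mc{H}_{\rm T}/\partial\del\rho$ together with the displayed value of $\partial\mc{H}_{\rm T}/\partial\rho$ turns \eqref{extension_chi} into \eqref{eq:extELE_chi}. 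The only place that calls for genuine computation --- and hence the sole obstacle --- is the momentum equation \eqref{extension_momentum}: after dropping every $\del\rho$-derivative term, its right-hand side reduces to $-\del^{S/\ep}$ of the scalar $\trho\,\partial\mc{H}_{\rm T}/\partial\rho + \tp\cdot\partial\mc{H}_{\rm T}/\partial\bm{p} - \mc{H}_{\rm T}$, and I must verify the barotropic identity
\[
\rho\,\frac{\partial\mc{H}_{\rm T}}{\partial\rho} + \bm{p}\cdot\frac{\partial\mc{H}_{\rm T}}{\partial\bm{p}} - \mc{H}_{\rm T} = c_s^2\,\rho,
\]
in which the kinetic and logarithmic contributions cancel and leave only the isothermal pressure $c_s^2\rho$. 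Granting this identity, the right-hand side of \eqref{extension_momentum} collapses to $-c_s^2\,\del^{S/\ep}\trho$ and its flux to $\del^{S/\ep}\!\cdot(\tp\otimes\tp/\trho)$, which is precisely \eqref{eq:extELE_mom}. This establishes the equivalence of \eqref{extension_labels}--\eqref{extension_chi} with \eqref{eq:extELE_v}--\eqref{eq:extELE_chi} and completes the proof.
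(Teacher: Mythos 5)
Your proposal is correct and takes the same route as the paper, whose entire proof is the one-line observation that the result follows by substituting the Lagrangian \eqref{eq:extagr_isothermal} into Theorem \ref{theorem_3}. Your additional checks --- the rescaling $\Sigma = S/\ep$, the verification that the isothermal Lagrangian density satisfies the LBEP diffeomorphism hypothesis, and the barotropic identity $\rho\,\partial_\rho\mc{H}_{\rm T} + \bm{p}\cdot\partial_{\bm{p}}\mc{H}_{\rm T} - \mc{H}_{\rm T} = c_s^2\rho$ that collapses the momentum equation --- are exactly the details the paper leaves implicit, and all of them are correct.
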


\begin{proof}
	This is immediately verified by substituting \Eq{eq:extagr_isothermal} into the result in Theorem~\ref{theorem_3}.
\end{proof}

The question that now remains to be answered is whether the variational structure underlying the NL-WKB extension of the isothermal fluid equations is somehow compatible with the asymptotics leading to the wave--mean-flow equations \eq{eq:example:cont_final}--\eq{eq:example:phase_final}. A geometrically satisfying way to address this question is through the application of a dynamical systems tool known as \emph{slow-manifold reduction}.

The concept of slow manifolds originated from the theory of fast-slow dynamical systems, which essentially are singularly perturbed dynamical systems.\cite{Fenichel_1979, Verhulst:2005et} Before explaining the role played by slow manifolds in our example, we will first give a quick overview of slow manifold theory.

\begin{definition}[Fast-slow dynamical system]\label{def:slow_fast}
	Let $X,Y$ be Banach spaces and $\ep \ll1$. A fast-slow dynamical system is an ODE on $X\times Y$ of the form
	\begin{equation}
		\ep \dot{y} = f_\epsilon(x,y), \qquad
		\dot{x} = g_\epsilon(x,y),
	\end{equation}
	with $D_y f_0(x,y)\colon Y \to Y $ an isomorphism when $(x,y) \in f_0^{-1}(\{0\})$. The functions $f_\epsilon$ and $g_\epsilon$ are required to depend smoothly on $\epsilon$ in such a manner that $f_\epsilon,g_\epsilon = O(1)$ as $\epsilon\rightarrow 0$.
\end{definition}

By convention, the variable $y$ is called the ``fast" variable, while the variable $x$ is called the ``slow" variable. For fast-slow dynamical systems, it then follows that invariant manifolds given as graphs over the slow variables satisfy a nonlinear (functional) PDE. This is illustrated below.

\begin{lemma}
	Suppose a fast-slow dynamical systems admits an invariant manifold $I_{\ep}$ of the form $I_\ep = \{ (x,y) \in X \times Y | y = y_\ep^\star(x)\}$ for some smooth map $y_\ep^\star\colon X \to Y$. Then, 
	\begin{equation}
		\ep Dy^\star_\ep(x)[g(x,y^\star_\ep(x))] = f(x,y^\star_\ep(x)),\label{invariance_equation}
	\end{equation}
	for each $x\in X$. 
\end{lemma}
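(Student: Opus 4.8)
The plan is to read off the result directly from the defining geometric property of an invariant manifold, using nothing more than the chain rule. Fix an arbitrary point $x_0\in X$ and consider the trajectory $t\mapsto (x(t),y(t))$ of the fast-slow system with initial condition $(x(0),y(0)) = (x_0,y_\ep^\star(x_0))\in I_\ep$. Local existence of such a trajectory follows from the assumed smoothness of $f_\ep$ and $g_\ep$ together with the standard existence theory for ordinary differential equations on Banach spaces. Invariance of $I_\ep$ means precisely that this trajectory remains on $I_\ep$ on some interval about $t=0$, i.e. $y(t) = y_\ep^\star(x(t))$ there.

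First I would differentiate the constraint $y(t)=y_\ep^\star(x(t))$ in $t$. Since $y_\ep^\star$ is smooth and $x(\cdot)$ is $C^1$ as a solution of the ODE, the chain rule gives $\dot{y}(t) = Dy_\ep^\star(x(t))[\dot{x}(t)]$. Next I would substitute the equations of motion $\ep\dot{y}=f_\ep(x,y)$ and $\dot{x}=g_\ep(x,y)$ and use the constraint $y=y_\ep^\star(x)$ to express everything along $I_\ep$, obtaining $\ep^{-1}f_\ep(x(t),y_\ep^\star(x(t))) = Dy_\ep^\star(x(t))[g_\ep(x(t),y_\ep^\star(x(t)))]$. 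Evaluating at $t=0$, where $x(0)=x_0$, and multiplying through by $\ep$ yields $f_\ep(x_0,y_\ep^\star(x_0)) = \ep\,Dy_\ep^\star(x_0)[g_\ep(x_0,y_\ep^\star(x_0))]$, which is exactly Eq.~\eqref{invariance_equation} at $x=x_0$ (with $f=f_\ep$, $g=g_\ep$). Since $x_0\in X$ was arbitrary, the invariance equation holds for every $x\in X$.

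There is no serious obstacle here; the entire content is the chain rule combined with the meaning of invariance. The only points requiring a word of care are (i) confirming that a trajectory through each point of $I_\ep$ exists, so that the differentiation at $t=0$ is legitimate for every $x_0$, and (ii) interpreting ``invariant manifold'' as the statement that the graph is preserved under the flow, which is what licenses the substitution $y(t)=y_\ep^\star(x(t))$. Both are immediate from the hypotheses: smoothness of $f_\ep,g_\ep$ secures local well-posedness, and the graph structure of $I_\ep$ makes the constraint an honest functional relation to which the chain rule applies.
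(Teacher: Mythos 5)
Your proof is correct and follows essentially the same route as the paper's: differentiate the graph constraint $y(t)=y_\ep^\star(x(t))$ along a trajectory via the chain rule, then substitute the fast and slow equations of motion. The only difference is that you spell out the local existence of trajectories and the meaning of invariance, details the paper leaves implicit.
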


\begin{proof}
	Supposing $y = y_\ep^\star(x)$, one then inserts this into $\ep \dot{y} = f(x,y)$. Using the chain rule and substituting the time-evolution equation for $x$ leads to the claimed result.
\end{proof}

\begin{definition}[Slow manifold]
	If $I_\ep$ is an invariant manifold given as the graph of $y^\star_\ep \colon X \to Y$, $I_\ep$ is a slow manifold when $y^\star_\ep(x)$ is a formal power series solution of Eq.\,\eqref{invariance_equation}.
\end{definition}


Of the invariant manifolds given as graphs, slow manifolds play a special role for several reasons. First of all, slow manifolds are unique; \ie if $I_\ep$ and $I_\ep'$ are two slow manifolds, then $I_\ep = I_\ep'$. Moreover, the formal power series expansion of the graphing function $y^\star_\ep(x)$ may be obtained using explicit formulas. In addition, dynamics restricted to the slow manifold is indeed slow; it is simple to check that the time derivatives of both the fast and slow variables are $O(1)$ on the slow manifold. The slow manifold may therefore be interpreted intuitively as the region in phase space where the fast degrees of freedom are not excited.

For the purposes of the present discussion, a crucial result on slow-manifold dynamics is that they inherit Hamiltonian structure from the parent fast-slow system whenever the larger system has such a structure. One way to state this fact precisely is as follows.

\begin{theorem}[Inheritance of Hamiltonian structure]\label{theo:slow_manifold_reduction}
	Consider a fast-slow system satisfying the variational principle
	\begin{equation}
		\delta A 
			= \delta \int_{t_1}^{t_2} \bigg( 
				\Theta_\ep(x(t),y(t)) [\dot{x}(t),\dot{y}(t)] - H_\ep(x(t),y(t)) \bigg) dt = 0,
	\end{equation}
	with $\delta(x(t_1),y(t_1)) = \delta(x(t_2),y(t_2)) =0$. Here $\Theta_\ep$ is an $\epsilon$-dependent one-form on $X\times Y$, and $H_\ep$ is an $\epsilon$-dependent smooth function on $X\times Y$.  Suppose a slow manifold exists where $y=y^\star_\ep(x)$. Then, the slow dynamics for the variable $x\in X$ satisfy the variational principle
	\begin{equation}
		\delta A_{\rm slow} 
			= \delta \int_{t_1}^{t_2} \left( 
					\Theta_{\rm slow}(x(t)) [\dot{x}(t)] - H_{\rm slow}(x(t)) \right) dt = 0,
	\end{equation}
	with $\delta x(t_1) = \delta x(t_2) =0$. Here $\Theta_{\rm slow}(x) [\delta x] \doteq  \Theta(x,y_\ep^\star(x)) [\delta x, Dy_\ep^\star(x)[\delta x]]$, and $H_{\rm slow}(x) \doteq H(x,y_\ep^\star(x))$.
\end{theorem}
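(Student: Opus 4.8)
The plan is to exhibit $A_{\rm slow}$ as the restriction of the full action $A$ to curves constrained to lie on the slow manifold, and then to use invariance of that manifold to transfer criticality downward. I introduce the graph embedding $\sigma_\ep:X\to X\times Y$, $\sigma_\ep(x)=(x,y^\star_\ep(x))$, and for a curve $x:[t_1,t_2]\to X$ let $\Sigma(x)$ denote the lifted curve $t\mapsto\sigma_\ep(x(t))$. The first step is the chain-rule identity $A(\Sigma(x))=A_{\rm slow}(x)$. Indeed, along $\Sigma(x)$ one has $\dot z=(\dot x,Dy^\star_\ep(x)[\dot x])$, so $\Theta_\ep(\sigma_\ep(x))[\dot z]=\Theta_\ep(x,y^\star_\ep(x))[\dot x,Dy^\star_\ep(x)[\dot x]]=\Theta_{\rm slow}(x)[\dot x]$ and $H_\ep(\sigma_\ep(x))=H_{\rm slow}(x)$, which are precisely the one-form and Hamiltonian appearing in the statement. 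Thus $A\circ\Sigma=A_{\rm slow}$ by the very definitions of $\Theta_{\rm slow}$ and $H_{\rm slow}$.

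Next I would show that a solution of the slow dynamics, once lifted by $\sigma_\ep$, is a genuine solution of the full Euler--Lagrange equations. Let $x(t)$ generate the slow flow, i.e. $\dot x=g_\ep(x,y^\star_\ep(x))$, and set $z(t)=\Sigma(x)(t)$. The invariance equation of the preceding Lemma, Eq.\,\eqref{invariance_equation}, guarantees that the $y$-component of $z$ obeys $\ep\dot y=\ep\,Dy^\star_\ep(x)[\dot x]=\ep\,Dy^\star_\ep(x)[g_\ep]=f_\ep$, while the $x$-component obeys $\dot x=g_\ep$ by construction. Hence $z(t)$ solves the full fast-slow system, which by hypothesis is exactly the Euler--Lagrange system of $A$; therefore $z$ is a critical point of $A$ with respect to all fixed-endpoint variations.

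Finally I would push this criticality back to $A_{\rm slow}$. Given any fixed-endpoint variation $x_s$ of $x$ in $X$, the lifted family $z_s=\Sigma(x_s)$ is a variation of $z$ that stays on the slow manifold and has fixed endpoints, since $z_s(t_i)=\sigma_\ep(x_s(t_i))=\sigma_\ep(x(t_i))$. Differentiating the identity $A\circ\Sigma=A_{\rm slow}$ and using that $z$ is critical for $A$ gives
\[
\frac{d}{ds}\Big|_{s=0}A_{\rm slow}(x_s)=\frac{d}{ds}\Big|_{s=0}A(z_s)=0,
\]
so $x(t)$ is a critical point of $A_{\rm slow}$. This establishes that the slow dynamics satisfy $\delta A_{\rm slow}=0$ with the advertised $\Theta_{\rm slow}$ and $H_{\rm slow}$.

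The main obstacle is precisely the lifting step: it is here that invariance of the slow manifold (not merely the existence of a graphing function) is indispensable, for without Eq.\,\eqref{invariance_equation} the curve $\Sigma(x)$ would violate the fast equation $\ep\dot y=f_\ep$, $z$ would not solve the full Euler--Lagrange equations, and $\delta A(z)$ would fail to annihilate the transverse components of the variation. A secondary point I would address is the converse inclusion, namely that \emph{every} critical curve of $A_{\rm slow}$ is the projection of a full solution lying on the slow manifold. For this I would recast the argument presymplectically: pulling back the full Hamilton equation $\iota_{Z_\ep}(-\mathbf{d}\Theta_\ep)=\mathbf{d}H_\ep$ along $\sigma_\ep$ is legitimate because invariance makes $Z_\ep$ tangent to the slow manifold, and the pullback coincides with the reduced equation $\iota_{\dot x}(-\mathbf{d}\Theta_{\rm slow})=\mathbf{d}H_{\rm slow}$; if $-\mathbf{d}\Theta_{\rm slow}$ is nondegenerate this pins down the slow vector field uniquely and upgrades the correspondence to a bijection. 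For the theorem as stated, however, only the restriction argument above is needed.
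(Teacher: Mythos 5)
Your proposal is correct and follows essentially the same route as the paper's proof: both restrict $A$ to curves lying on the slow manifold, observe that the restricted action is exactly $A_{\rm slow}$ with the stated $\Theta_{\rm slow}$ and $H_{\rm slow}$, note that induced variations have fixed endpoints so criticality of the lifted solution for $A$ transfers to criticality of $x$ for $A_{\rm slow}$; your version merely makes explicit (via the graph embedding and the invariance equation) what the paper leaves implicit, namely that slow solutions lift to genuine solutions of the full Euler--Lagrange system. The closing discussion of the converse bijection is extra material not required by the theorem as stated.
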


\begin{proof}
	With the given boundary conditions, it is clear that $\delta A =0$ holds even if the trajectory $t\mapsto (x(t),y(t))$ that is subject to variations lies in the slow manifold, i.e. $y(t) = y^\star_\ep (x(t))$, since $\delta y(t_{1,2})= Dy^\star_\ep (x(t_{1,2})) \, \delta x(t_{1,2})=0$. In particular, $\delta A = 0$ when variations are constrained to lie along the slow manifold. This is equivalent to saying that the first variation of $A_{\text{slow}}$, which is $A$ restricted to paths contained in the slow manifold, is zero along a solution contained in the slow manifold. After restriction to the slow manifold, the two terms in the integrand of $A$ may be written	%
	\begin{equation}
		\Theta_\ep(x,y_\ep^\star(x)) \left[\dot{x}, \frac{d}{dt} y^\star_\ep(x) \right]
			=	\Theta_\ep(x,y_\ep^\star(x)) \left[\dot{x}, Dy^\star_\ep(x) \, \dot{x}\right]
			=	\Theta_{\rm slow}(x) [\dot{x}],
	\end{equation}
	and $H_\ep (x,y_\ep^\star(x)) = H_{\text{slow}}(x)$. (We have omitted writing the time dependence explicitly.) Thus, $A_{\text{slow}}$ may be written as in the Theorem statement. Moreover, we have already argued $\delta A_{\text{slow}} = 0$ along any solution of the fast-slow system contained in the slow manifold. This completes the proof.
\end{proof}

We will now argue that Theorem \ref{theo:slow_manifold_reduction} may be used to systematically derive the variational principle for the leading-order wave--mean-flow equations \eq{eq:example:cont_final}--\eq{eq:example:phase_final}.
The first thing to be verified is that the NL-WKB extension of the isothermal fluid equations \eq{eq:extELE_v}--\eq{eq:extELE_chi}, together with the dispersion relation \eqref{eq:example:phase_final} for specifying the dynamics of $S$, indeed form a fast-slow system. If this is the case, then by Theorem \ref{theo:slow_manifold_reduction}, slow-manifold reduction will allow us to construct a variational principle for the slow, wave--mean-flow system. Our argument will then be complete if we can show that the variational principle given by Theorem \ref{theo:slow_manifold_reduction} reproduces the variational principle from Theorem \ref{theo:wave_mean_flow}. The rest of this subsection will be devoted to establishing that Eqs. \eq{eq:extELE_v}--\eq{eq:extELE_chi}, together with the dispersion relation \eqref{eq:example:phase_final}, comprise a fast-slow system. The following subsection will sketch the details of manipulating $A_{\text{slow}}$ from Theorem \ref{theo:slow_manifold_reduction} in order to produce $\overline{\msf{A}}_{\rm T}$ from Theorem \ref{theo:wave_mean_flow}.

As in the previous section, we consider only high-frequency, small-amplitude waves. Hence, we adopt the parameterization given in \Eqs{eq:example:para} for the fields $\fluctrho$ and $\fluctp$. Although it is not technically necessary for the slow-manifold analysis, we shall also parameterize the back-to-labels map $\flucth$ by following the generalized-Lagrangian-mean (GLM) approach proposed by Andrew and McIntyre.\cite{Andrews:1978fg} For more information on this approach, we recommend reading as well the works by Holm and Gjaja,\cite{Holm:2002ju,Holm:2002kh,Gjaja_Holm_1996,Holm:2002ex} as well as Buhler's accessible book.\cite{Buhler_2009} In GLM theory, one introduces a space $\overline{Q}$ that is diffeomorphic to $Q$ and that is interpreted as the collection of ``mean" Eulerian positions. Then $\flucth$ is written as the composition of a mean component $\meanh\in\text{Diff}(\overline{Q},Q_0)$ and a fluctuating component $\hattau\in \text{Diff}(Q,\overline{Q})$, i.e. $\flucth = \meanh \circ \hattau$. Additionally, and in order to uniquely specify $\hattau$, we consider $\hattau$ to be a near-identity transformation of the form
\begin{equation}
	\hattau(\bm{x}) =\bm{x}  + \ep^2 \hatalpha(\bm{x}),
	\label{eq:example:para_II}
\end{equation}
where $\hatalpha : Q\rightarrow \mathbb{R}^3$ satisfies $\fint \hatalpha \,d\theta = 0$. Finally, we shall parameterize the Lagrange multiplier $ \tchi $ according to
\begin{equation}
	\tchi = \overline{\chi} + \ep^2 \hatchi,
	\label{eq:example:para_III}
\end{equation}
where $\fint \hatchi\,d\theta = 0$.

\begin{proposition}\label{prop:fast_slow}
	With the parameterizations given in \Eqs{eq:example:para}, \eq{eq:example:para_II}, and \eq{eq:example:para_III}, \Eqs{eq:extELE_v}--\eq{eq:extELE_chi}, together with the dispersion relation \eqref{eq:example:phase_final}, are equivalent to a fast-slow dynamical system.
\end{proposition}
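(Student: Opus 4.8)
My plan is to substitute the parameterizations \eqref{eq:example:para}, \eqref{eq:example:para_II} and \eqref{eq:example:para_III} into the extended isothermal equations \eqref{eq:extELE_v}--\eqref{eq:extELE_chi} and the dispersion relation \eqref{eq:example:phase_final}, and then to sort the resulting field equations into a slow block $\dot x = g_\ep(x,y)$ and a fast block $\ep\dot y = f_\ep(x,y)$. The organizing observation is that $\partial_t^{S/\ep} = \pd_t + \ep^{-1}(\pd_t S)\pd_\theta$ and $\del^{S/\ep} = \del + \ep^{-1}(\del S)\pd_\theta$, so a factor $\ep^{-1}$ sits in front of every $\pd_\theta$ acting on a fluctuation, while the $\theta$-independent mean fields are annihilated by $\pd_\theta$. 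Taking the $\theta$-average of each equation removes every $\ep^{-1}$ term (each is a total $\theta$-derivative), leaving the mean continuity, mean momentum, and averaged $\meanh$ and $\meanchi$ equations, which together with \eqref{eq:example:phase_final} are manifestly $O(1)$. These form the slow block, with slow variables $x = (\meanh,\meanp,\meanrho,\meanchi,S)$ supplemented by the acoustic amplitude identified below.

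Next I would dispose of the fields that are fast for a transparent reason, namely $\hatchi$ and $\hatalpha$. Subtracting the mean from \eqref{eq:extELE_chi}, inserting $\tchi = \meanchi + \ep^2\hatchi$, and dividing by the power of $\ep$ that \eqref{eq:example:para_III} supplies, the leading fast operator is $(\pd_t S + \meanv\cdot\del S)\pd_\theta$, which by the dispersion relation \eqref{eq:example:phase_final} equals $-c_s|\del S|\pd_\theta$. On zero-mean loops this acts in the Fourier basis $e^{in\theta}$ as $-in\,c_s|\del S|$, an isomorphism as soon as $c_s|\del S| \neq 0$. The same computation applied to the fluctuating part of \eqref{eq:extELE_v} with $\tih = \meanh\circ(\mathrm{id} + \ep^2\hatalpha)$ produces the identical operator on $\hatalpha$; both are therefore bona fide fast variables.

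The genuinely delicate block is the acoustic pair $(\hatrho,\hatp)$. After subtracting means from \eqref{eq:extELE_cont} and \eqref{eq:extELE_mom} and Taylor expanding the flux $\tp\otimes\tp/\trho$, the leading fast operator is $\mathcal{M}\pd_\theta$, where $\mathcal{M}$ is exactly the linear acoustic operator underlying \Eqs{eq:example_cont_osc}--\eq{eq:example_mom_osc}. The crux is that $\mathcal{M}$ is \emph{not} injective: the dispersion relation \eqref{eq:example:phase_final} was derived in Proposition \ref{prop:eigen} precisely as the condition for $\mathcal{M}$ to possess a kernel, spanned by the acoustic eigenmode \eqref{eq:example_hatp_sol}. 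Hence one cannot take $(\hatrho,\hatp)$ wholesale as the fast variable. The remedy is to diagonalize $\mathcal{M}$ into its four characteristic fields; the neutral (zero-eigenvalue) field is the forward-sound mode, whose time evolution is $O(1)$ because its $\ep^{-1}$ contribution cancels identically against the kernel --- this is the slow variable whose amplitude is the wave action $\mc{I}$ of Theorem \ref{theo:wave_mean_flow} --- while the other three characteristic fields constitute the remaining fast variables.

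The step I expect to be hardest is the clean bookkeeping of this splitting together with the verification of the isomorphism condition of Definition \ref{def:slow_fast} on the three-dimensional complement. In characteristic coordinates $D_y f_0$ becomes $\mathrm{diag}(\mu_1,\mu_2,\mu_3)\,\pd_\theta$, where the $\mu_j$ are the non-vanishing eigenvalues of $\mathcal{M}$. A short computation, using $\pd_t S + \meanv\cdot\del S = -c_s|\del S|$, shows that these eigenvalues are the Doppler-shifted characteristic speeds $-2c_s|\del S|$ (the backward-sound mode) and $-c_s|\del S|$ (the two transverse modes), so that $D_y f_0$ is an isomorphism on the zero-mean loops precisely when $c_s|\del S| \neq 0$, i.e. when the local wavevector $\del S$ does not vanish --- the mild regularity hypothesis flagged in the introduction. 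Assembling the slow block (means, $S$, and the neutral forward-sound amplitude) and the fast block (the three non-resonant acoustic fields together with $\hatalpha$ and $\hatchi$) then exhibits \eqref{eq:extELE_v}--\eqref{eq:extELE_chi} plus \eqref{eq:example:phase_final} in the form required by Definition \ref{def:slow_fast}, completing the proof.
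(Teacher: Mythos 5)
Your proposal is correct, and its skeleton matches the paper's proof: $\theta$-averaging makes the mean fields (together with $S$, whose evolution is supplied by the dispersion relation) slow; the advective operator $\pm c_s|\del S|\,\pd_\theta$, invertible on zero-mean loops whenever $\del S\neq 0$, makes $\hatalpha$ and $\hatchi$ fast; and the crux in both arguments is that the acoustic pair $(\hatrho,\hatp)$ cannot be taken wholesale as fast, because the dispersion relation \eqref{eq:example:phase_final} is exactly the condition for the leading acoustic operator of \eqref{eq:example_cont_osc}--\eqref{eq:example_mom_osc} to acquire a kernel. Where you genuinely diverge is in how that degenerate block is split. The paper does not diagonalize: it introduces the single scalar $\widehat{\lambda} = \hatrho + \hatp\cdot\del S/(c_s|\del S| - \meanp\cdot\del S/\meanrho)$ of Eq.\,\eqref{lambda_def} --- which is precisely the projection of $(\hatrho,\hatp)$ onto the \emph{left} null vector $\bigl(c_s|\del S|-\meanp\cdot\del S/\meanrho,\ \del S\bigr)$ of your operator $\mathcal{M}$, up to an $x$-dependent normalization --- declares $x=(\meanh,\meanp,\meanrho,\meanchi,\widehat{\lambda})$ slow and $y=(\hatalpha,\hatp,\hatchi)$ fast with $\hatp$ kept in its original coordinates, and then proves that $D_yf_0=A(x)$ is invertible by explicitly solving $A(x)[y]=\delta y$, using the zero-mean antiderivative $I$ of Eq.\,\eqref{anti_op} and a decomposition parallel/perpendicular to $\del S$ (Eqs.\,\eqref{fast_slow_first}--\eqref{fast_slow_last} and what follows). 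Your full passage to characteristic fields buys transparency: invertibility is immediate from the nonzero Doppler-shifted eigenvalues $2c_s|\del S|,\,c_s|\del S|,\,c_s|\del S|$, which do agree (I checked) with the spectrum of the paper's matrix acting on the $\hatp$-block, so your eigenvalue claim is sound. What the paper's route buys is reusability: the explicit formulas for the inverse of $A(x)$ are recycled immediately afterwards to compute the leading-order slaving functions \eqref{eq:alpha_sol}--\eqref{eq:hatchi_sol}, which are the essential input to the effective-action computation in Section\,\ref{sec:six_how}; with your diagonalization that computation would still have to be carried out separately. One aside to correct: the slow variable attached to the kernel mode is \emph{linear} in $\hatrho$ (on the slow manifold the paper's $\widehat{\lambda}$ reduces to $2c_s|\del S|\,\hatrho/(c_s|\del S|-\meanp\cdot\del S/\meanrho)$), whereas the wave action $\mc{I}$ of Theorem\,\ref{theo:wave_mean_flow} is quadratic in $\hatrho$; identifying the slow amplitude with $\mc{I}$ is therefore loose, although this remark plays no role in your argument and does not affect its validity.
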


\begin{proof}
	We begin by inspecting the equations of motion for the mean fields. Upon $\theta$-averaging \Eqs{eq:extELE_v}--\eq{eq:extELE_chi}, we immediately obtain
	\begin{gather}	
		\pd_t \meanh= -  \frac{\meanp}{\meanrho}	\cdot		\del  \meanh + \mc{O}(\ep^2)	
			\label{eq:example_slowfast_meanh}  \\
		\partial_t \meanp	=	- 	\del \cdot \overline{\left(  \frac{\tp \otimes \tp }{\trho} \right) }	- c_s^2 \del \meanrho 	\\
		\partial_t \meanrho  = - \del \cdot \meanp				\\
		\partial_t \meanchi 
			=  - \overline{ \left( \frac{\tp }{\trho} \right) } \cdot \del  \meanchi  
				- \ep  \overline{ \frac{\tp }{\trho}\cdot \del S \pd_\theta  \hatchi } 
				- \overline{\frac{|\tp|^2}{2\trho^2}} 
				+c_s^2 \overline{ \ln \left( \frac{\trho}{\rho_0} \right) } +c_s^2,	
			\label{eq:example_slowfast_meanchi} 
	\end{gather}
	where in the first equation we used $\overline{ \partial_t^{S/\ep} \tih } = \pd_t \meanh + \mc{O}(\ep^4)$ and $\overline{ (\tp /\trho)	\cdot		\del^{S/\ep} \tih	 } = - (\meanp / \meanrho) \cdot \del \meanh + \mc{O}(\ep^2)$. We have also introduced the shorthand notation $\overline Q = \fint Q\,d\theta$ for denoting averages over $\theta$. When comparing \Eqs{eq:example_slowfast_meanh}--\eq{eq:example_slowfast_meanchi} to Definition \ref{def:slow_fast}, it so far seems that the variables $(\meanh, \meanp, \meanrho,\meanchi) \in \mc{C}_0$ should be included amongst the slow variables. Additionally, according to the dispersion relation \eqref{eq:example:phase_final}, the time derivative $\partial_t S = O(1)$, suggesting that $S$ should be a slow variable.
		
Let us next examine the dynamical equations for the fluctuating quantities. A straightforward calculation leads to
\begin{gather}
	\ep \pd_t \hatalpha =
				-	\left( \pd_t S + \frac{\meanp}{\meanrho} \cdot \del S \right) \pd_\theta \hatalpha
				- 	\left(  \frac{\hatp}{\meanrho} - \frac{\meanp}{\meanrho} \frac{\hatrho}{\meanrho} \right)
				+ \mc{O}(\ep)
		\label{eq:example_slowfast_flucth}	\\
	\ep \pd_t  \hatp 
			= 	- 	\left[ \left( \pd_t S + \frac{\meanp }{\meanrho} \cdot \del S \right) \mathbb{I} 
				+ 	\frac{\meanp \otimes \del S}{\meanrho} \right] \cdot \pd_\theta \hatp
				+	\left( \frac{\del S \cdot \meanp}{\meanrho^2} \, \meanp -c_s^2 \del S \right)  \pd_\theta \hatrho 	
				+\mc{O}(\ep) \label{hatp_evol}\\
	\ep \pd_t \hatrho = - \pd_t S \,\pd_\theta \hatrho -\del S \cdot \pd_\theta \hatp	 + \mc{O}(\ep)	\\
	\ep \pd_t \hatchi
		=	-	\left( \pd_t S + \frac{\meanp }{\meanrho} \cdot \del S\right) \pd_\theta \hatchi 
			- 	\left(	\frac{\hatp}{\meanrho} - \frac{\meanp}{\meanrho} \frac{\hatrho}{\meanrho} \right) \cdot \del  \meanchi
			- 	\frac{\meanp \cdot \hatp}{\meanrho^2}  	+ \frac{|\meanp|^2}{\meanrho^2}  \frac{\hatrho}{\meanrho}+c_s^2 \, \frac{\hatrho}{\meanrho} 
			+	\mc{O}(\ep),
		\label{eq:example_slowfast_chi}
\end{gather}	
where we have omitted $\mc{O}(\ep)$ terms related to nonlinearities in the fluctuations and $\mc{O}(\ep)$ terms involving spatial derivatives. These omissions are motivated by the fact that, in order to prove the singularly-perturbed dynamical system $\epsilon \dot{y} = f_\epsilon(x,y)$, $\dot{x} = g_\epsilon(x,y)$ is in fact a fast-slow system, it is enough to check that $f_\epsilon,g_\epsilon = O(1)$ and that $D_yf_0$ is invertible along the zero level of $f_0$.
	
At first glance, \Eqs{eq:example_slowfast_flucth}--\eq{eq:example_slowfast_chi} seem to suggest that $\hatalpha,\hatp,\hatrho$ and $\hatchi$ should be fast variables. Indeed, the time derivative of each of these fields is generically $O(\epsilon^{-1})$. However, there happens to be a non-trivial combination of these quantities whose time derivative is $\mc{O}(\ep)$. It is straightforward to verify that the field $\widehat{\lambda} \colon Q \times S_1 \to \mathbb{R}$ given by
\begin{align}\label{lambda_def}
	\widehat{\lambda} 
		= \hatrho + \frac{\hatp\cdot\nabla S}{ c_s |\nabla S| - \meanp\cdot\nabla S/\meanrho}
\end{align}
satisfies $\partial_t\widehat{\lambda} = O(1)$. This suggests that a viable set of slow variables might be $x = (\meanh,\meanp,\meanrho,\meanchi,\widehat{\lambda})$ with corresponding fast variables $y = (\hatalpha,\hatp,\hatchi)$. The rest of the proof will be devoted to showing that, when expressed in terms of $x$ and $y$, \Eqs{eq:extELE_v}--\eq{eq:extELE_chi}, together with the dispersion relation \eqref{eq:example:phase_final}, do in fact comprise a fast-slow dynamical system.

In order to write \Eqs{eq:extELE_v}--\eq{eq:extELE_chi} and the dispersion relation \eqref{eq:example:phase_final} in terms of $x$ and $y$, it is only necessary to exchange the dependent variable $\hatrho$ with the new dependent variable $\widehat{\lambda}$. Because this change of dependent variables is independent of $\epsilon$, our calculations so far already demonstrate that $dx/dt = O(\epsilon)$. In order to prove that we have identified the correct fast and slow variables, we therefore only have to show that $\epsilon dy/dt = f_0(x,y) + O(\epsilon)$ and that $D_y f_0(x,y)$ is invertible along the zero level of $f_0$.

In order to identify $f_0(x,y)$, we substitute the definition of $\widehat{\lambda}$ given by Eq.\,\eqref{lambda_def} into Eqs.\,\eqref{eq:example_slowfast_flucth}, \eqref{hatp_evol}, and \eqref{eq:example_slowfast_chi}, thereby obtaining
\begin{align}
	\epsilon \partial_t \hatalpha  
		= & c_s |\nabla S| \partial_\theta \hatalpha 
				- \left[\mathbb{I}+ \frac{1}{1-\bm{e}_{\bm{k}}\cdot{(\meanp/\meanrho)}/{c_s}}\frac{(\meanp/\meanrho)}{c_s}\otimes \bm{e}_{\bm{k}}\right]\cdot \frac{\hatp}{\meanrho}
				+\frac{\widehat{\lambda}}{\meanrho}\frac{\meanp}{\meanrho}+O(\epsilon)\label{fast_slow_first}\\
\epsilon \partial_t \widehat{\bm{p}} 
		=& \left[c_s |\nabla S|\left(\mathbb{I}
				+\bm{e}_{\bm{k}}\otimes\bm{e}_{\bm{k}}\right)-\frac{(\mathbb{I}
				-\bm{e}_{\bm{k}}\otimes\bm{e}_{\bm{k}})\cdot (\meanp/\meanrho)\otimes \nabla S}{1-\bm{e}_{\bm{k}}\cdot{(\meanp/\meanrho)}/{c_s}}\right]\cdot \partial_\theta\widehat{\bm{p}}\nonumber\\
			&-\nabla S\cdot \left[c_s^2 \mathbb{I}
								- \frac{\meanp\otimes\meanp}{\meanrho^2}\right]\partial_\theta\widehat{\lambda}+O(\epsilon)\\
\epsilon\partial_t\hatchi 
		= & c_s |\nabla S|\partial_\theta\hatchi 
				- \left[\frac{\meanp}{\meanrho}+\nabla\meanchi 
				+ \frac{\left[(\meanp/\meanrho)\cdot\nabla\meanchi 
						+ |\meanp/\meanrho|^2+c_s^2 \right]}{c_s - (\meanp/\meanrho)\cdot\bm{e}_{\bm{k}}}\bm{e}_{\bm{k}}\right]\cdot\frac{\hatp}{\meanrho} \nonumber\\
			&+ \left[(\meanp/\meanrho)\cdot\nabla\meanchi + |\meanp/\meanrho|^2+c_s^2 \right]\frac{\widehat{\lambda}}{\meanrho}.\label{fast_slow_last}
\end{align}
These expressions show that $f_0(x,y)$ is of the form $f_0(x,y) =A(x)[y] +C(x) $, where $A(x):Y\rightarrow Y$ is a linear map and $C(x)\in Y$ is independent of $y$. In particular, the derivative of $f_0$ with respect to $y$ is given by $D_y f_0(x,y) = A(x)$. Thus, for any $(x,y)\in X\times Y$, $D_y f_0(x,y)$ is invertible if and only if $A(x)$ is invertible.

We will now complete the proof by showing that $A(x)$ is invertible for all $x\in X$ that satisfy $\nabla S(\bm{x})\neq 0$ for all $\bm{x}\in Q$. Fix $\delta y = (\delta\hatalpha,\delta\hatp,\delta\hatchi)\in Y$. If there is a $y = (\hatalpha,\hatp,\hatchi)$ that solves the equation $A(x)[y] = \delta y$, then, by Eqs.\,\eqref{fast_slow_first}-\eqref{fast_slow_last}, $y$ must satisfy
\begin{align}
\delta\hatalpha = &c_s |\nabla S| \partial_\theta \hatalpha - \left[\mathbb{I}+ \frac{1}{1-\bm{e}_{\bm{k}}\cdot{(\meanp/\meanrho)}/{c_s}}\frac{(\meanp/\meanrho)}{c_s}\otimes \bm{e}_{\bm{k}}\right]\cdot \frac{\hatp}{\meanrho}\label{invertibility_aeqn}\\
\delta\hatp = &  \left[c_s |\nabla S|\left(\mathbb{I}+\bm{e}_{\bm{k}}\otimes\bm{e}_{\bm{k}}\right)-\frac{(\mathbb{I}-\bm{e}_{\bm{k}}\otimes\bm{e}_{\bm{k}})\cdot (\meanp/\meanrho)\otimes \nabla S}{1-\bm{e}_{\bm{k}}\cdot{(\meanp/\meanrho)}/{c_s}}\right]\cdot \partial_\theta\widehat{\bm{p}}\label{invertibility_peqn}\\
\delta\hatchi = &  c_s |\nabla S|\partial_\theta\hatchi - \left[\frac{\meanp}{\meanrho}+\nabla\meanchi + \frac{\left[(\meanp/\meanrho)\cdot\nabla\meanchi + |\meanp/\meanrho|^2+c_s^2 \right]}{c_s - (\meanp/\meanrho)\cdot\bm{e}_{\bm{k}}}\bm{e}_{\bm{k}}\right]\cdot\frac{\hatp}{\meanrho}.\label{invertibility_chieqn}
\end{align}
By decomposing Eq.\,\eqref{invertibility_peqn} into components parallel and perpendicular to $\nabla S$, it is straightforward to show that $\partial_\theta\hatp$ must be given by
\begin{align}
\partial_\theta\hatp =& \frac{1}{2 c_s |\nabla S|}\left[2\mathbb{I}-\bm{e}_{\bm{k}}\otimes \bm{e}_{\bm{k}} +[\mathbb{I}-\bm{e}_{\bm{k}}\otimes\bm{e}_{\bm{k}}]\cdot \frac{(\meanp/\meanrho)\otimes \bm{e}_{\bm{k}}}{c_s - \bm{e}_{\bm{k}}\cdot (\meanp/\meanrho)}\right]\cdot\delta\hatp\nonumber\\
\equiv& \frac{1}{c_s|\nabla S|}{\mathbb{T}}\cdot \delta\hatp,
\end{align}
which implies $\hatp =(c_s |\nabla S|)^{-1} \mathbb{T}\cdot I[\delta\hatp]$, where $I[\delta\hatp]$ is the unique $\theta$-antiderivative of $\delta\hatp$ with zero mean, i.e.
\begin{align}
I[\delta\widehat{\bm{p}}](\theta) = \int_0^\theta \widehat{\bm{p}}(\overline{\theta}) \,d\overline{\theta} - \fint \left(\int_0^\theta  \widehat{\bm{p}}(\overline{\theta}) \,d\overline{\theta} \right)\,d\theta.\label{anti_op}
\end{align}
By substituting this expression for $\hatp$ into Eqs.\,\eqref{invertibility_aeqn} and \eqref{invertibility_chieqn}, it follows that $\hatalpha$ and $\hatchi$ must be given by
\begin{align}
	\hatalpha = & \frac{1}{c_s |\nabla S|} I[\delta\hatalpha] + \frac{1}{(c_s|\nabla S|)^2}\left[\mathbb{I}+\frac{(\meanp/\meanrho)\otimes \bm{e}_{\bm{k}}}{c_s-\bm{e}_{\bm{k}}\cdot (\meanp/\meanrho)}\right]\cdot\mathbb{T}\cdot I^2[\delta\hatp/\meanrho]\\
	\hatchi = & \frac{1}{c_s |\nabla S|}I[\delta\hatchi] + \frac{1}{(c_s|\nabla S|^2)}\left[\frac{\meanp}{\meanrho}+\nabla\meanchi + \frac{\left[(\meanp/\meanrho)\cdot\nabla\meanchi + |\meanp/\meanrho|^2+c_s^2 \right]}{c_s - (\meanp/\meanrho)\cdot\bm{e}_{\bm{k}}}\bm{e}_{\bm{k}}\right]\cdot \mathbb{T}\cdot I^2 \left[\frac{\delta\hatp}{\meanrho}\right],
\end{align}
where $I^2[\delta\hatp/\meanrho]= I[ I[ \delta\hatp/\meanrho] ]$ denotes the antiderivative operator applied two times.
Thus, if there is a $y$ satisfying $A(x)[y] = \delta y$, then that $y$ is unique. Conversely, by substituting the above expressions for $y$ back into $A(x)[y] = \delta y$, we conclude that a solution $y$ exists for any $\delta y$. Therefore $A(x)$ is invertible as claimed.
\end{proof}

\begin{remark}
In the above Proposition, the dispersion relation \eqref{eq:example:phase_final} plays two important roles. First, it supplies an evolution equation for $S$. This is necessary for the Proposition to work because if the dispersion relation was not imposed, then \Eqs{eq:extELE_v}--\eq{eq:extELE_chi} would not specify a dynamical system, let alone a fast-slow dynamical system. Second, it ensures that the system supports wave motion whose asymptotic behavior is captured by the NL-WKB ansatz.
\end{remark}

\begin{remark}
As is true of all fast-slow systems, \Eqs{eq:extELE_v}--\eq{eq:extELE_chi}, together with the dispersion relation \eqref{eq:example:phase_final}, admit a slow manifold. In terms of the fast and slow variables identified in the proof of the Proposition, the slow manifold is a subset of $X\times Y$ of the form $I_\epsilon = \{(x,y)\in X\times Y\mid y = y^\star_\epsilon(x)\}$, where $y^\star_\epsilon$ is the so-called slaving function.
Using the expressions for the inverse of $A(x)$ from the proof, it is straightforward to find the leading-order term in slaving function $y^\star_0(x) = (\hatalpha_0^\star,\hatp_0^\star,\hatchi_0^\star)$. We have
\begin{align}
\hatalpha_0^{\star} = & \, \frac{1}{2|\nabla S|} \left[1 - \frac{\bm{e}_{\bm{k}}\cdot (\meanp/\meanrho)}{c_s} \right] I[\widehat{\lambda}/\meanrho]\bm{e}_{\bm{k}}\\
\hatp_0^\star =& \, \frac{1}{2} \widehat{\lambda} c_s \left[1 - \left(\frac{\bm{e}_{\bm{k}}\cdot (\meanp/\meanrho)}{c_s}\right)^2\right] \bm{e}_{\bm{k}} + \frac{1}{2}\widehat{\lambda} c_s \left[1- \left(\frac{\bm{e}_{\bm{k}}\cdot (\meanp/\meanrho)}{c_s}\right)\right][\mathbb{I}-\bm{e}_{\bm{k}}\otimes \bm{e}_{\bm{k}}]\cdot\frac{\meanp}{c_s\meanrho}\\
\hatchi_0^\star = & \, \frac{1}{2|\nabla S|}\left[\bm{e}_{\bm{k}}\cdot (\meanp/\meanrho+\nabla\meanchi) - c_s\right]\left[1-\frac{\bm{e}_{\bm{k}}\cdot (\meanp/\meanrho)}{c_s}\right] I[\widehat{\lambda}/\meanrho].
\end{align}

While it was convenient to introduce the dependent variable $\widehat{\lambda}$ for the sake of showing equivalence with a fast-slow system, now that the existence of the slow manifold has been established, we are free to express the slow manifold in terms of $\hatrho$ instead of $\widehat{\lambda}$. By a slight abuse of notation, the slow manifold may written in terms of $\hatrho$ as 
\begin{align}
I_\epsilon = \{(\meanh,\meanp,\meanchi,\hatrho,S,\hatalpha,\hatp,\hatchi)\mid \hatalpha = \hatalpha^\star_\epsilon,\,\hatp = \hatp^\star_\epsilon,\,\hatchi = \hatchi_\epsilon^\star\},
\end{align}
where now $y_\epsilon^* = (\hatalpha^\star_\epsilon,\hatp^\star_\epsilon,\hatchi_\epsilon^\star)$ is a function of $(\meanh,\meanp,\meanchi,\hatrho,S)$.
In this alternate representation, the leading-order terms in the slaving functions are given by:
\begin{align}
		\hatalpha_0^\star	=& \frac{\vec{e}_{\vec{k}}}{|\del S|} I[\hatrho/\meanrho]
			\label{eq:alpha_sol} \\
		\hatp_0^\star =& \meanp \, \frac{\hatrho}{\meanrho} +c_s \hatrho \,\vec{e}_{\vec{k}} 	
			\label{eq:hatp_sol}\\
		\hatchi_0^\star =& \frac{1}{|\del S|}
						\left( \vec{e}_{\vec{k}} \cdot \del \meanchi  + \frac{\meanp}{\meanrho} \cdot \vec{e}_{\vec{k}} - c_s \right) 
						I[\hatrho/\meanrho]
			\label{eq:hatchi_sol}.
\end{align}
We remind the reader that the antiderivative operator $I$ was defined in Eq.\,\eqref{anti_op}.

\end{remark}

\begin{remark}
We are now in a good position to prove Proposition \ref{prop:wact}. 

\begin{proof}[proof of Proposition \ref{prop:wact}]
In Corollary \ref{cor:isothermal_variational}, we demonstrated that the NL--WKB extension of the isothermal fluid equations \eq{eq:example_ext_cont} and \eq{eq:example_ext_mom} is variational. Additionally, Theorem~\ref{theorem_3} shows that all extLBEP fluid equations imply a wave-action conservation equation \eq{prf_3_last}. Upon substituting the action \eq{eq:extagr_isothermal} and the Hamiltonian \eq{eq:example_ham_isothermal} into \Eq{prf_3_last}, we obtain
\begin{equation}
	\pd_t	\fint	\widetilde{\mathcal{I}}	\,d\theta 
		+\del	\cdot \fint \frac{\fluctp}{\fluctrho}\,  \widetilde{\mathcal{I}}\,d\theta = 0.
	\label{eq:wact_aux}
\end{equation}
The specific wave action density $\widetilde{\mathcal{I}}$ is given by \Eq{eq:specific_wave_density}, which we rewrite below for clarity:
\begin{equation}
	\widetilde{\mc{I}} =\trho \,\pd_\theta\tchi  + (\tp+\trho \,\del^S\tchi)\cdot\widetilde{\bm{\zeta}}  .
	\label{eq:swd_aux}
\end{equation}
Here $\fluctzeta = -(  \pd_\theta \flucth ) \cdot ( \del \flucth + \ep^{-1} \del S \otimes \pd_\theta \flucth )^{-1}$. Let us now calculate the terms in \Eqs{eq:wact_aux} and \eq{eq:swd_aux} by substituting the leading-order slaving functions \eq{eq:alpha_sol}--\eq{eq:hatchi_sol}. Specifically, when inserting $\flucth = \meanh \circ \hattau$ and $\hattau(\bm{x}) \simeq \bm{x}  + \ep^2 \hatalpha_0^\star(\bm{x})$ into $\fluctzeta$, we obtain
\begin{equation}
	\fluctzeta
		=	- \ep^2 \pd_\theta \hatalpha_0^\star 
			+ \ep^3 ( \pd_\theta \hatalpha_0^\star \cdot \del S ) \pd_\theta \hatalpha_0^\star 
			+ \mc{O}(\ep^4).
\end{equation}
We then substitute this result as well as the parameterizations \eq{eq:example:para}, \eq{eq:example:para_II}, and \eq{eq:example:para_III} and the leading-order slaving functions \eq{eq:alpha_sol}--\eq{eq:hatchi_sol} into the first term in \Eq{eq:wact_aux}. We obtain
\begin{equation}
	\fint	\widetilde{\mathcal{I}}	\,d\theta  
		= - \ep^3 \fint \meanrho \, \frac{c_s}{|\del S|} \left(\frac{\hatrho}{\meanrho} \right)^2 + O(\epsilon^4)
		= - \ep^3 \mc{I} + \mc{O}(\ep^4),
	\label{eq:wact_aux_II}
\end{equation}
where $\mc{I}$ is the wave action density defined in \Eq{eq:example_wave_action}. For the second term in \Eq{eq:wact_aux}, a similar calculation leads to
\begin{align}
	\fint \frac{\fluctp}{\fluctrho}\,  \widetilde{\mathcal{I}}\,d\theta
		&	= \fint \left( \frac{\meanp}{\meanrho} 
							+ \ep \frac{\hatp^\star_0}{\meanrho} 
							- \ep \frac{\meanp}{\meanrho} \frac{\hatrho}{\meanrho} 
							+ \mc{O}(\ep^2) \right)   \widetilde{\mathcal{I}}\,d\theta 
				\notag \\
		&	=	\frac{\meanp}{\meanrho}  \fint \widetilde{\mathcal{I}}\,d\theta
							+ \ep \fint  \left( \frac{\hatp^\star_0}{\meanrho} - \frac{\meanp}{\meanrho} \frac{\hatrho}{\meanrho}  + \mc{O}(\ep) \right) 
							 \widetilde{\mathcal{I}}\,d\theta 
				\notag	\\
		&	=	- \ep^3 \left( \frac{\meanp}{\meanrho}  + c_s \vec{e}_{\vec{k}} \right) \mc{I} + \mc{O}(\ep^4).
	\label{eq:wact_aux_III}
\end{align}
Finally, inserting \Eqs{eq:wact_aux_II} and \eq{eq:wact_aux_III} into \Eq{eq:wact_aux} leads to our claim in Proposition \ref{prop:wact}.
\end{proof}
\end{remark}

\subsection{Calculation of the effective action $\overline{\msf{A}}_{\rm T}$}
\label{sec:six_how}

In the previous section \Sec{sec:six_why}, we gave the general arguments explaining why the wave--mean-flow equations \eq{eq:example:cont_final}--\eq{eq:example:phase_final} are variational. We also broadly discussed how to calculate the effective action \eq{eq:example_act_reduced} for wave--mean-flow interactions by using slow-manifold reduction (see Theorem \ref{theo:slow_manifold_reduction}). In this section, we present some of the technical details in obtaining \Eq{eq:example_act_reduced}.

We start from the NL--WKB extended action \eq{eq:extagr_isothermal} for the isothermal fluid equations. As was done in the proof of Theorem \ref{theorem_4}, we apply a phase shift to the fields. This gives
\begin{align}
	\widetilde{\msf{L}}_{\rm T}
		&	= 		\int_Q  \fint
						\left( \fluctp^{S/\ep} \cdot  	\fluctv^{S/\ep} 
						+ \fluctrho^{S/\ep} \big( \pd_t \fluctchi^{S/\ep} 
							+ \fluctv^{S/\ep}  \cdot   \del    \fluctchi^{S/\ep}  \big) \right)	\, d \theta \, d \vec{x}  
				-	\int_Q  \fint 	\mc{H}_{\rm T} (\fluctp^{S/\ep} , \fluctrho^{S/\ep} )  \,  d\theta \,  d \vec{x} ,
	\label{eq:lagr_non_averaged}
\end{align}
where the superscript ``$S/\ep$" denotes that $\theta$ angle is shifted by $S/\ep$ (see Definition \ref{def:phase_shift}).

As it was explained in Theorem \ref{theo:slow_manifold_reduction}, after a slow manifold $I_\ep$ has been identified, one can restrict the action of the parent fast-slow system onto the slow manifold $I_\ep$ in order to obtain an effective action for the slow variables only. Following this same procedure, we substitute the parameterizations \eq{eq:example:para}, \eq{eq:example:para_II}, and \eq{eq:example:para_III} into \Eq{eq:lagr_non_averaged}. We then restrict the fast variables $(\hatalpha, \hatp, \hatchi)$ to the slow manifold by using \Eqs{eq:alpha_sol}--\eq{eq:hatchi_sol}. This leads to
\begin{align}
	\widetilde{\msf{L}}_{\rm T}
		&	= 		\int_Q  \fint
						\left( \fluctp^{\star S/\ep}_0 \cdot  	\fluctv^{ \star S/\ep}_0 
						+ \fluctrho^{S/\ep} \big( \pd_t \fluctchi^{\star S/\ep}_0 
							+ \fluctv^{\star S/\ep}_0  \cdot   \del    \fluctchi^{\star S/\ep}_0  \big) \right)	\, d \theta \, d \vec{x}  
				-	\int_Q  \fint 	\mc{H}_{\rm T} (\fluctp^{\star S/\ep}_0 , \fluctrho^{S/\ep} )  \,  d\theta \,  d \vec{x} ,
	\label{eq:lagr_non_averaged_II}
\end{align}
where $\fluctp^{\star S/\ep}_0 = \meanp + \ep \hatp^{\star S/\ep}_0$ and similarly for the rest of the variables restricted to the slow manifold. {From hereon, we shall consider the Lagrangian \eq{eq:lagr_non_averaged_II} restricted to the lowest-order slaving functions. Hence, to simplify our notation, we shall omit the ``0" subscript when referring to the lowest-order slaving functions \eq{eq:alpha_sol}--\eq{eq:hatchi_sol}.}

Before explicitly substituting the expressions for $\hatalpha^\star$, $\hatp^\star$, and $\hatchi^\star$ into \Eq{eq:lagr_non_averaged_II}, it convenient to first perform a variable transformation. (The following transformation is closely related to the well-known oscillation-center transform used in kinetic theories for plasma--wave interactions.\cite{Dewar:73aa}) First, we note that the velocity
\begin{equation}
		\fluctv^{\star S/\ep} 
			= - \pd_t \widetilde{\vec{h}}^{\star S/\ep} \cdot ( \del \flucth^{\star S/\ep} )^{-1}
			= - \pd_t (\meanh \circ \flucttau^{\star S/\ep} ) \cdot [ \del  (\meanh \circ \flucttau^{\star S/\ep} )]^{-1}		
\end{equation} 
can be written as
\begin{equation}
	\fluctv^{\star  S/\ep} = (\widetilde{\tau}^{\star S / \ep})^* ( \meanv - \fluctnu^{\star S/\ep} ),
	\label{eq:example_vel_osc}
\end{equation}
where $(\widetilde{\tau}^{\star S / \ep})^*$ is the pullback associated to $\flucttau^{\star S / \ep}$, $\meanv$ is the mean Lagrangian velocity
\begin{equation}
	\meanv \doteq  -\pd_t \meanh \cdot (\del  \meanh)^{-1},
\end{equation}		
and $\fluctnu^{\star S/\ep}$ is the velocity associated to $\flucttau^{\star S / \ep}$:
\begin{equation}
	\fluctnu^{\star S/\ep}	\doteq	\pd_t ( \flucttau^{\star S / \ep} ) \circ (\flucttau^{\star S / \ep})^{-1}.
	\label{eq:nu}
\end{equation}
In order to simplify the expression for the velocity \eq{eq:example_vel_osc},  it is convenient to apply the pushforward $\widetilde{\tau}^{\star S / \ep}_*$ to the symplectic part of the Lagrangian \eq{eq:lagr_non_averaged_II}. This leads to
\begin{align}
	\widetilde{\msf{L}}_{\rm T}
		&	= 		\int_Q  \fint
						\left(  \widetilde{\tau}^{\star S / \ep}_* \fluctp^{\star S/\ep}  \cdot 
								\left( 	\meanv - \widetilde{\vec{\nu}}^{\star S/\ep} 	\right)
							+ 	\widetilde{\tau}^{\star S / \ep}_* \fluctrho^{S/\ep} 
								\big( \pd_t \widetilde{\varphi}^{\star S/\ep} 
											+ \meanv  \cdot   \del    \widetilde{\varphi}^{\star S/\ep}  \big)  \right)\,
								d \theta \, d^3 \vec{x}  
					\notag \\
		&			\qquad
					-	\int_Q \fint  	\mc{H} (\fluctp^{\star S/\ep},  \fluctrho^{S/\ep}  )  \, d \theta \,  \mathrm{d}^3 \vec{x} ,
	\label{eq:example:act_transformed}
\end{align}
Some remarks should be given on the symbols appearing in \Eq{eq:example:act_transformed}. First, the term $\widetilde{\tau}^{\star S / \ep}_* \fluctp^{S/\ep}$ is understood as the $\widetilde{\tau}^{\star S / \ep}_*$ acting on $\fluctp^{\star S/\ep} $ which is treated as a one-form density in the domain $Q$. Similarly, $\widetilde{\tau}^{\star S / \ep}_* \fluctrho^{\star S/\ep}$ is interpreted as the pull-back $\widetilde{\tau}^{\star S / \ep}_*$ acting on $\fluctrho^{S/\ep} $ treated as a three-form in the domain $Q$. We also introduced a new variable
\begin{equation}
	 \widetilde{\varphi}^{\star S / \ep} 
		= \widetilde{\tau}^{\star S / \ep}_* \fluctchi^{\star S / \ep} ,
	\label{eq:example_varphi}
\end{equation}
which is a scalar in the domain $Q$. { Finally, we used the Lie derivative theorem to establish the identities $\pd_t \fluctchi^{\star S / \ep} = \pd_t [ (\widetilde{\tau}^{\star S / \ep})^* \widetilde{\varphi}^{\star S / \ep}  ]  = (\widetilde{\tau}^{\star S / \ep})^* \pd_t   \widetilde{\varphi}^{\star S / \ep}  + (\widetilde{\tau}^{\star S / \ep})^* \mathfrak{L}_{\fluctnu^{\star S/\ep}} \widetilde{\varphi}^{\star S / \ep} $ and $\mathfrak{L}_{\fluctv^{\star  S/\ep}}  \fluctchi^{\star S / \ep} =  (\widetilde{\tau}^{\star S / \ep})^* \mathfrak{L}_{\meanv - \fluctnu^{\star S/\ep}  } \widetilde{\varphi}^{\star S / \ep}  $.} Note that, by applying the pushforward $\widetilde{\tau}^{\star S / \ep}_*$, we were able to replace the oscillating velocity $\fluctv^{\star S/\ep}$ appearing in the $\fluctv^{\star S/\ep}  \cdot   \del    \fluctchi^{\star S/\ep}$ term of \Eq{eq:lagr_non_averaged} with the mean Lagrangian velocity $\meanv$. This was originally the main motivation for the transformation.

The next step is to explicitly calculate the terms $\widetilde{\vec{\nu}}^{\star  S/\ep}$, $\widetilde{\tau}^{\star S / \ep}_* \fluctp^{\star S/\ep}$, $\widetilde{\tau}^{\star S / \ep}_* \fluctrho^{S/\ep}$, and $\widetilde{\varphi}^{\star S / \ep}$  appearing in \Eq{eq:example:act_transformed}. Let us first start by calculating $\widetilde{\vec{\nu}}^{\star S/\ep}$ in \Eq{eq:nu}. Since $(\flucttau^{\star S / \ep})^{-1}$ is a near-identity transformation, one can verify that $(\flucttau^{\star S / \ep})^{-1}(\vec{x}) = \vec{x} - \ep^2 \hatalpha^{\star S/\ep}(\vec{x}) + \mc{O}(\ep^3)$. Substituting this into \Eq{eq:nu} gives
\begin{align}
	\widetilde{\vec{\nu}}^{\star S/\ep} 
		&	=	[ \ep (\pd_t S) \pd_\theta \hatalpha^{\star S/\ep} + \ep^2 (\pd_t \hatalpha)^{\star S/\ep} ] \circ
				( \vec{id} - \ep^2 \hatalpha^{\star S/\ep}  ) + \mc{O}(\ep^3)  
			\notag \\
		&	=	( 1 - \ep^2 \hatalpha^{\star S/\ep} \cdot \del  ) \left(  \ep (\pd_t S) \pd_\theta \hatalpha^{\star S/\ep}	
				+ \ep^2 (\pd_t \hatalpha)^{\star S/\ep} \right) + \mc{O}(\ep^3)  
			\notag \\
		&	=	\ep (\pd_t S) \pd_\theta \hatalpha^{\star S/\ep}	
				+ \ep^2 (\pd_t \hatalpha)^{\star S/\ep} 
				- \ep^2 ( \del S \cdot \hatalpha^{\star S/\ep}  ) (\pd_t S) \pd_\theta^2 \hatalpha^{\star S/\ep}	 
				+ \mc{O}(\ep^3)  
			\notag \\
		&	=	\ep (\pd_t S) \pd_\theta \hatalpha^{\star S/\ep}	
				+ \ep^2 (\del S \cdot \pd_\theta \hatalpha^{\star S/\ep} )  (\pd_t S)  \pd_\theta \hatalpha^{\star S/\ep} 	\notag \\
		&		\quad
				+ \ep^2 (\pd_t \hatalpha^\star )^{S/\ep}  
				- \ep^2 \pd_\theta \left( ( \del S \cdot \hatalpha^{\star S/\ep}  ) (\pd_t S) \pd_\theta \hatalpha^{\star S/\ep} \right) 
				+ \mc{O}(\ep^3)  
			\notag \\
		&	=	\ep (\pd_t S) \pd_\theta \hatalpha^{\star S/\ep}	
				+ \ep^2 (\del S \cdot \pd_\theta \hatalpha^{\star S/\ep} )  (\pd_t S)  \pd_\theta \hatalpha^{\star S/\ep} 
				+ \ep^2 \mathrm{Osc}.
	\label{eq:example_tau_exp}
\end{align}
Here the term ``$\ep^2 \mathrm{Osc}$" means that we have neglected $\mc{O}(\ep^3)$ terms and that we have omitted writing fluctuating terms that are $\mc{O}(\ep^2)$ whose $\theta$-average is zero. Since we are only calculating the effective Lagrangian up to $\mc{O}(\ep^2)$, it is safe to omit those terms since they will not contribute anything once the Lagrangian \eq{eq:example:act_transformed} is explicitly $\theta$-averaged. More specifically, the term $\ep (\pd_t S) \pd_\theta \hatalpha^{\star S/\ep}$ is kept because it could later multiply another $\mc{O}(\ep)$ term in the Lagrangian \eq{eq:example:act_transformed}. The term $\ep^2 (\del S \cdot \pd_\theta \hatalpha^{\star S/\ep} )  (\pd_t S)  \pd_\theta \hatalpha^{\star S/\ep} $ is also kept because it is quadratic in $\hatrho$, so it has a non-zero $\theta$-average. The term $\ep^2 (\pd_t \hatalpha^\star)^{S/\ep}$ is omitted because it is oscillatory so its $\theta$-average is zero.  Also, the term $\ep ^2 \pd_\theta ( ( \del S \cdot \hatalpha^{\star S/\ep}  ) (\pd_t S) \pd_\theta \hatalpha^{\star S/\ep} )$ is omitted because it is as a total derivative in $\theta$, which will vanish when integrating over $\theta$. Finally, substituting the expression for $\hatalpha^{\star S/\ep} $ in \Eq{eq:alpha_sol} gives
\begin{equation}
	\widetilde{\vec{\nu}}^{\star S/\ep}
		=	\ep \vec{e}_{\vec{k}} \frac{\pd_t S}{|\del S|} \frac{\hatrho^{S/\ep}}{\meanrho} 
			+ \ep^2 \vec{e}_{\vec{k}} \frac{\pd_t S}{|\del S|} \left( \frac{\hatrho^{S/\ep}}{\meanrho} \right)^2,
\end{equation}			
where $\vec{e}_{\vec{k}} \doteq \del S / |\del S| $. 

Let us now proceed by calculating the term $\widetilde{\tau}^{\star S / \ep}_* \fluctrho^{S/\ep}$ appearing in \Eq{eq:example:act_transformed}. Remembering that the density should be considered as a 3-form, we obtain
\begin{align}
	\widetilde{\tau}^{\star S / \ep}_*	 ( \fluctrho^{S/\ep}  d^3 \vec{x} )
		&	=	 \fluctrho^{S/\ep} \circ ( \flucttau^{\star S / \ep})^{-1} 
				 \, \det ( \del   (\flucttau^{\star S / \ep})^{-1} ) \, d^3 \vec{x}  \notag \\
		&	=	 \fluctrho^{S/\ep} \circ ( \vec{id} - \ep^2 \hatalpha^{\star S/\ep}  )
				 \, \det ( \mathbb{I} - \ep^2 \del   \hatalpha^{\star S/\ep}    ) \, d^3 \vec{x}
				 +\mc{O}(\ep^3)\notag \\
		&	=	(\fluctrho^{S/\ep} - \ep^2 \hatalpha^{\star S/\ep} \cdot \del \fluctrho^{S/\ep} )
				( 1 - \ep^2 \del  \cdot \hatalpha^{\star S/\ep}  )  \, d^3 \vec{x}
				+\mc{O}(\ep^3) \notag \\
		&	=	\fluctrho^{S/\ep} d^3 \vec{x} - \ep^2 \del   \cdot  ( \hatalpha^{\star S/\ep}\fluctrho^{S/\ep} ) \, d^3 \vec{x}	
				+\mc{O}(\ep^3) \notag \\					
		&	=	\meanrho \, d^3 \vec{x} + \ep \hatrho^{S/\ep} d^3 \vec{x} 
				- \ep^2 \del \cdot ( \hatalpha^{\star S/\ep} \meanrho) \, d^3 \vec{x} 
				- \ep^3 \del \cdot ( \hatalpha^{\star S/\ep} \hatrho^{S/\ep}) \, d^3 \vec{x}   
				+\mc{O}(\ep^3) \notag \\
		&	=	 \meanrho \, d^3 \vec{x}  + \ep \hatrho^{S/\ep} d^3 \vec{x}  
						 - \ep \del S \cdot ( \pd_\theta \hatalpha^{\star S/\ep} \meanrho)  \, d^3 \vec{x} 
						- \ep^2 \del S \cdot \pd_\theta ( \hatalpha^{\star S/\ep} \hatrho^{S/\ep})  \, d^3 \vec{x} 
						+\mc{O}(\ep^3)
				\notag \\
		&	=	\meanrho  \, d^3 \vec{x}  + \ep^2 \mathrm{Osc},
	\label{eq:example_push_rho}
\end{align}
where in the last line, we substituted \Eq{eq:alpha_sol} so that $ \del S \cdot ( \pd_\theta \hatalpha^{\star S/\ep} \meanrho)   = \hatrho^{S/\ep}$.  We also used the well-known formula for the determinant of a near-identity matrix:
\begin{align}
	\det ( \mathbb{I} & - \ep^2 \del   \hatalpha^{\star S/\ep}    ) \notag \\
		&	=	1 - \ep^2 \mathrm{Tr}( \del   \hatalpha^{\star S/\ep}  )
				 	- \frac{\ep^4}{2} \left(  \mathrm{Tr}^2 ( \del   \hatalpha^{\star S/\ep} ) 
				 			- \mathrm{Tr}(\del   \hatalpha^{\star S/\ep} )^2 \right)
				 	+\mc{O}(\ep^6)  
				\notag \\
		&	=	1 - \ep^2 \del \cdot \hatalpha^{\star S/\ep} 
				 	- \frac{\ep^4}{2} \left( ( \del  \cdot  \hatalpha^{\star S/\ep} )^2 
				 			- ( \del    \hatalpha^{\star S/\ep}: \del   \ \hatalpha^{\star S/\ep} ) \right)
				 	+\mc{O}(\ep^6)  
				 \notag \\
		&	=	1 - \ep^2 \del  \cdot \hatalpha^{\star S/\ep} 
				 	- \frac{\ep^2}{2} \left(( \del S \cdot \pd_\theta \hatalpha^{\star S/\ep} )^2 
				 			- ( \del S \otimes \pd_\theta  \hatalpha^{\star S/\ep}: \del S \otimes \pd_\theta  \hatalpha^{\star S/\ep} ) \right) 
				 	+\mc{O}(\ep^4)  
				 \notag \\
		&	=	1 - \ep^2 \del  \cdot \hatalpha^{\star S/\ep} +\mc{O}(\ep^4)  .
\label{eq:example_det}
\end{align}
where at the end, the terms in parentheses cancel because $\hatalpha^\star $ is parallel to $\del S$.

In a similar manner, we can calculate the term $\widetilde{\tau}^{\star S / \ep}_* \fluctp^{\star S/\ep}$. Note, however, that we should consider $\fluctp^{\star S/\ep}$ as a one-form density so that the above is written as $\widetilde{\tau}^{S\star  / \ep}_*( \fluctp^{\star S/\ep} \cdot d \vec{x} \otimes d^3 \vec{x} )$. A direct calculation leads to
\begin{align}
& \widetilde{ \tau}^{\star S / \ep}_*  ( \fluctp^{\star S/\ep} \cdot d \vec{x} \otimes d^3 \vec{x} ) \notag \\
	&	=		d \vec{x} \cdot \del  (( \widetilde{\tau}^{\star S / \ep})^{-1} )  
				\cdot (\fluctp^{\star S/\ep} \circ  (\flucttau^{\star S / \ep})^{-1} )
				\otimes \det ( \vec{D} (\flucttau^{\star S / \ep})^{-1} ) \, d^3 \vec{x}
				\notag \\
	&	=		d \vec{x} \cdot ( \mathbb{I} - \ep^2 \del  \hatalpha^{\star S / \ep} ) 
				\cdot (\fluctp^{\star S/\ep} \circ  (\vec{id} - \ep^2 \hatalpha^{\star S / \ep} ) )
				\otimes \det ( \mathbb{I}  - \ep^2 \del   \hatalpha^{\star S/\ep}    )  \, d^3 \vec{x}
				+\mc{O}(\ep^3)
				\notag \\
	&	=		d \vec{x} \cdot ( \mathbb{I} - \ep^2 \del  \hatalpha^{\star S / \ep} ) 
				\cdot (\fluctp^{\star S/\ep} - \ep^2 (\hatalpha^{\star S / \ep} \cdot \del) \fluctp^{\star S/\ep}  )
				\otimes ( 1  - \ep^2 \del \cdot  \hatalpha^{\star S/\ep}    )  \, d^3 \vec{x}
				+\mc{O}(\ep^3)
				\notag \\
	&	=		d \vec{x} \cdot ( \mathbb{I} - \ep \del S \otimes \pd_\theta \hatalpha^{\star S / \ep} ) 
				\cdot (\fluctp^{\star S/\ep} - \ep (\hatalpha^{\star S / \ep} \cdot \del S) \pd_\theta \fluctp^{\star S/\ep}  )
				\otimes ( 1  - \ep \del S \cdot  \pd_\theta \hatalpha^{\star S/\ep}    )  \, d^3 \vec{x}
				+\ep^2 \mathrm{Osc}
				\notag \\					
	&	=		\bigg( \meanp 
					+ \ep \hatp^{\star S/\ep}
					- \ep (\meanp \cdot \pd_\theta \hatalpha^{\star S/\ep} ) \del S 
					- \ep ( \del S  \cdot \pd_\theta \hatalpha^{\star S/\ep} )\meanp 
					+ \ep^2 \pd_\theta ( (\hatalpha^{\star S/\ep} \cdot \del S ) \hatp^{\star S/\ep} 			)
				 \notag \\
	&			\quad					
					- \ep^2 (\hatp^{\star S/\ep} \cdot \pd_\theta \hatalpha^{\star S/\ep} ) \del S 		
					+ \ep^2 (\meanp   \cdot \pd_\theta \hatalpha^{S/\ep})  ( \del S  \cdot \pd_\theta \hatalpha^{\star S/\ep} ) \del S \bigg)
				\cdot d \vec{x}  \otimes d^3 \vec{x}
				+ \ep^2 \mathrm{Osc}	
				\notag \\
	&	=		\bigg( \meanp 
					+ \ep c_s \hatrho^{S/\ep} \vec{e}_{\vec{k}} 
					- \ep (\meanp \cdot \vec{e}_{\vec{k}}  ) (  \hatrho^{S/\ep} / \meanrho ) \, \vec{e}_{\vec{k}} 
					-\ep^2 c_s \meanrho (  \hatrho^{S/\ep} / \meanrho )^2 \vec{e}_{\vec{k}}  \bigg)
				\cdot d \vec{x}  \otimes d^3 \vec{x}
				+  \ep^2 \mathrm{Osc}	,
\label{eq:example_push_p}
\end{align}
where in the last line, we substituted the expressions for $\hatalpha^\star$ and $\hatp^\star$ in \Eqs{eq:alpha_sol} and \eq{eq:hatp_sol}.

Finally, a far simpler calculation of $\widetilde{\varphi}^{\star S / \ep} $ introduced in \Eq{eq:example_varphi} gives
\begin{align}
 \widetilde{\varphi}^{\star S / \ep} 
	&	= (\widetilde{\tau}^{\star S / \ep} )_* \fluctchi^{\star S / \ep} \notag \\
	&	=		\fluctchi^{\star S / \ep} \circ  (\flucttau^{\star S / \ep})^{-1} 
				\notag \\
	&	=		\widetilde{\chi}^{\star S / \ep} \circ  (\vec{id} - \ep^2 \hatalpha^{\star S / \ep} ) 
				+\mc{O}(\ep^3)
				\notag \\
	&	=		\widetilde{\chi}^{\star S / \ep} 
				-\ep^2 \hatalpha^{\star S / \ep} \cdot \del \widetilde{\chi}^{\star S / \ep}
				+\mc{O}(\ep^3)
				\notag \\
	&	=		\meanchi + \ep^2 \hatchi^{\star S / \ep} 
				-\ep^2 \hatalpha^{\star S / \ep} \cdot \del  ( \meanchi + \ep^2 \hatchi^{\star S / \ep}  )
				+\mc{O}(\ep^3)
				\notag \\
	&	=		\meanchi 
				+  \ep^2 \mathrm{Osc}	.
\label{eq:example_push_varphi}
\end{align}

We now insert \Eqs{eq:example_tau_exp}, \eq{eq:example_push_rho}, \eq{eq:example_push_p}, and \eq{eq:example_push_varphi} into the Lagrangian \eq{eq:example:act_transformed}. Starting from the first integral in \Eq{eq:example:act_transformed}, we substitute the obtained expressions for $\widetilde{\tau}^{S / \ep}_* \fluctp^{S/\ep}$ and $\fluctnu^{S/\ep}$. We then Whitham average, or $\theta$ average, the Lagrangian and only keep terms up to $\mc{O}(\ep^2)$. We obtain
\begin{equation}
	\int_Q \fint 	 
						( \widetilde{\tau}^{\star S / \ep}_* \fluctp^{\star S/\ep} ) \cdot 
						(	\meanv - \fluctnu^{\star S/\ep} 	) \,
						d \theta \,  d^3 \vec{x}  
			= 		\int_Q 
						\left( \meanp \cdot \meanv - \ep^2 \mc{I} (\pd_t S + \meanv \cdot \del S ) 	\right)  	\, d^3 \vec{x} 
						+\mc{O}(\ep^3) 		,
	\label{eq:example_act_sym}
\end{equation}
where $\mc{I}$ is the wave action density introduced in \Eq{eq:example_wave_action}. For the next term of the Lagrangian \eq{eq:example:act_transformed}, we substitute \Eqs{eq:example_push_rho} and \eq{eq:example_push_varphi}. This leads to
\begin{equation}
	\int_Q \fint \widetilde{\tau}^{\star S / \ep}_*	 \fluctrho^{S/\ep} 
					\big( \pd_t \widetilde{\varphi}^{\star S/\ep}_\theta 
							+ \meanv  \cdot   \del   \widetilde{\varphi}^{\star S/\ep}_\theta   \big) \,
						d\theta \, d^3 \vec{x} 
		= 	\int_Q \meanrho
					\left( \pd_t \meanchi	+ \meanv  \cdot   \del  \meanchi  \right)  \, d^3 \vec{x}
			+\mc{O}(\ep^3) .
	\label{eq:example_act_lagr}
\end{equation}
In a similar manner, substituting \Eqs{eq:example:para} and \eq{eq:hatp_sol} gives the following for the $\theta$-averaged Hamiltonian:
\begin{align}
	\int_Q  \fint	\bigg( \frac{|\fluctp^{\star S / \ep}|^2}{2\fluctrho^{S / \ep}} + c_s^2 \fluctrho^{S / \ep} \ln \bigg( \frac{\fluctrho^{S / \ep}}{\rho_0} \bigg) 
						\bigg)  
						d \theta \,  d^3 \vec{x} 	
			= 		\int_Q 
						\left( \frac{|\meanp|^2}{2\meanrho} + c_s^2 \meanrho \ln \left( \frac{\meanrho}{\rho_0} \right)  + \ep^2 c_s \mc{I} |\del S| 	\right)
							 \, \mathrm{d}^3 \vec{x} 
						+\mc{O}(\ep^3) .
	\label{eq:example_act_ham}
\end{align}
When combining the results in \Eqs{eq:example_act_sym}--\eq{eq:example_act_ham}, we obtain the effective action given in \Eq{eq:example_act_reduced}, which we rewrite below for convenience:
\begin{align}
			\overline{\msf{L}}_{\rm T}
				( \meanh, \dot{\meanh}, \meanp, \dot{\meanp},  \meanrho, \dot{\meanrho},  \meanchi  , \dot{ \meanchi  } , 
								\mc{I},\dot{\mc{I}} , S, \dot{S} )
				&	= 		\int_Q 
							\left( 	\meanp \cdot \meanv 	
										+ \meanrho	\big( \pd_t \meanchi	+ \meanv  \cdot   \del    \meanchi  \big) \right) 	\, d^3 \vec{x} 
							- \int_Q 	\mc{H}_{\rm T}(\meanp, \meanrho) 	\, d^3 \vec{x}  \notag \\
				&	\quad 	- \ep^2 	 \int_Q \mc{I}		\left( \pd_t S + \meanv \cdot \del S + c_s |\del S| \right). 
\end{align}

In summary, in this section we have presented additional details for calculating the effective action for wave--mean-flow interactions in an isothermal fluid. Our method was primarily based on slow-manifold reduction, whose general ideas were presented in \Sec{sec:six_why}. Our derivation followed to two main steps. First, we restricted the variational principle to the slow manifold. This is essentially done by substituting the expressions obtained for the fast variables. Second, we identified a transformation that facilitated computations of the wave--mean-flow action.

\section{Discussion\label{sec:seven}}

In this article we have identified the variational structure underlying the nonlinear WKB method\cite{Whitham_1965_eom,Miura_1974} as it applies to ideal fluid equations in the Eulerian frame. This work therefore compliments previous studies on variational nonlinear WKB in the \emph{mean} Eulerian frame.\cite{Dewar_1970,Bretherton_1971,Gjaja_Holm_1996} Our main results concern what we have termed the nonlinear WKB extension procedure, which is the technique used for generating a system of equations governing the profile functions appearing in the nonlinear WKB ansatz. Our results may be summarized as follows.
(i) Given Eulerian fluid equations arising from an Euler-Poincar\'e variational principle,\cite{Holm_1998} we have shown that the enlarged system resulting from the nonlinear WKB extension procedure also arises from a variational principle.
(ii) This new variational principle inherits a ``looped" version of the original system's symmetry group. After recognizing that a subgroup of this looped group comprises a looped version of the particle relabeling group, we have used Noether's theorem to identify a family of circulation invariants parameterized by $S^1$.
(iii) By combining the newly discovered class of variational principles with ideas from the theory of slow manifold reduction, we have presented an example of a systematic procedure for identifying variational principles governing the self-consistent interaction between (possibly nonlinear) locally-plane waves and mean flows.

Our analysis made use of several technical assumptions that are straightforward to relax. In particular, we restricted our attention to barotropic fluid equations that arise from a local Lagrangian. A more general equation of state involving an advected entropy could readily be incorporated into our discussion by an enterprising reader. Similarly, it would not be prohibitively difficult to allow for spatial non-locality in the Lagrangian. (On the other hand, temporal nonlocality would not be simple to include.)  More generally, extensions of our work to fluid systems not discussed in this paper may readily be accommodated as long as the proof of Theorem \ref{theorem_4} remains in tact.

Two key technical features that distinguish our work from much of the previous work on variational fluid mechanics are (i) our use of the inverse of the Lagrangian configuration map $\bm{h} = \bm{g}^{-1}$, and (ii) our use of the fluid phase-space Lagrangian (akin to $L = p \dot{q} - H(q,p)$). The use of $\bm{h}(\bm{x})$ instead of $\bm{g}(\bm{x}_0)$ allowed us to reformulate the Euler-Poincar\'e approach to fluid variational principles in terms of conventional classical field theory, which in turn enabled us to apply Whitham's averaged Lagrangian technique in the Eulerian frame. Our inspiration for this shift in perspective came from Ref.\,\onlinecite{Beig_2003}, which explains the use of $\bm{h}$ within the theory of relativistic elastic solids. 
Using the phase-space Lagrangian allowed us to apply Theorem \ref{theo:slow_manifold_reduction} on the inheritance of Hamiltonian structure in order to explain the variational principle underlying the interaction between small-amplitude acoustic waves and a compressible barotropic mean flow.  This same idea was used in Refs.\,\onlinecite{Burby_two_fluid_2017} and \onlinecite{Burby_Sengupta_2017_pop} to explain the Hamiltonian structures underlying magnetohydrodynamics and kinetic magnetohydrodynamics, respectively.

It is most interesting to compare the approach we have introduced here for variational modeling of wave-mean-flow interaction with earlier approaches\cite{Dewar_1970,Bretherton_1971,Gjaja_Holm_1996,Holm:2002ju,Holm:2002kh,Gjaja_Holm_1996,Holm:2002ex} based on generalized Lagrangian mean (GLM) theory.\cite{Andrews:1978fg,Buhler_2009} As an intuitively appealing way of representing waves superimposed on a mean flow, previous authors have decomposed the Lagrangian configuration map $\bm{g}$ as the composition of a mean configuration map with a fluctuating configuration map. This decomposition forms the foundation of GLM theory. The mean configuration map takes values in (and in fact defines) the mean Eulerian frame, while the fluctuating configuration map takes values in the conventional Eulerian frame. When the averaging operation is identified with WKB phase averaging, the prevailing trend has then been to express the fluctuating configuration map in terms of the WKB ansatz. This effectively amounts to applying the WKB method within the mean Eulerian frame. While this approach obfuscates the connection between wave-mean-flow dynamics and the conventional Eulerian-frame WKB method, especially at higher orders in asymptotic expansions, it is compatible with variational formulations of fluid dynamics in a simple manner. Indeed, it is straightforward to decompose the Lagrangian configuration map in an Euler-Poincar\'e variational principle using the GLM ansatz, and then apply WKB phase averaging to the result.\cite{Dewar_1970,Bretherton_1971,Gjaja_Holm_1996} In contrast, the perspective taken in our new approach is that, in principle, there is no need to introduce the mean Eulerian frame in order to identify wave-mean-flow variational principles. Instead, one can start from our new variational principle for the nonlinear WKB extension of the Eulerian-frame fluid equations, and then apply slow manifold reduction to obtain the desired reduced variational principle for wave-mean-flow interaction. Aside from maintaining a clear link with the Eulerian-frame WKB procedure, a benefit of this approach is that it systematically incorporates the closure (a.k.a. ``slaving" or ``balance") relations needed to express the rapidly-varying fluctuations in terms of slowly-varying mean quantities, thereby eliminating the risk of unwanted fast modes creeping into the variational principle. (For an example of the latter phenomenon, see the Hamiltonian models in Refs.\,\onlinecite{Burby_gvm_2015} and \onlinecite{Brizard_Tronci_2016} for low-frequency dynamics of strongly-magnetized plasmas. Those models support high-frequency electromagnetic waves that must be handled with care.) Interestingly however, our calculations have revealed that it is practically expedient to express our variational principle in terms of mean-Eulerian frame quantities. In so doing, the Lagrangian simplifies dramatically. In fact, it is not at all clear that Lagrangian expressed in terms of conventional Eulerian frame quantities behaves well with respect to truncation, i.e. when high-order terms in the asymptotic expansions of the slaving functions are dropped. Thus, GLM theory plays an important practical role in our new formalism, even though it is not a necessary ingredient at a conceptual level.

Given the dichotomy between our new method and the established mean-Eulerian frame approach, it is also interesting to ask how the family of circulation invariants given in Corollary \ref{corollary_1} relates to the mean circulation invariants of Refs.\,\onlinecite{Bretherton_1971} and \onlinecite{Gjaja_Holm_1996}. Because the family of circulation invariants identified in Corollary \ref{corollary_1} is parameterized by the angle $\theta\in S^1$, it may be averaged over $\theta$ to obtain a mean circulation invariant for the nonlinear WKB extension of the ideal barotropic fluid equations. In particular, the averaged circulation is constant along solutions that lie in the slow manifold. Thus, the $S^1$-mean of our family of circulation invariants restricted to the slow manifold is a circulation invariant for the wave-mean-flow dynamics. It is in this manner that mean circulation invariants of the types found in Refs.\,\onlinecite{Bretherton_1971} and \onlinecite{Gjaja_Holm_1996} emerge from our formalism. As an illustration of this point, we prove in Appendix \ref{appendix_A} that the average of our family of circulation invariants restricted to the slow manifold is equivalent to the circulation invariant associated with mean particle relabeling symmetry of the wave--mean-flow Lagrangian \eqref{eq:extagr_isothermal}.

In the future, we plan to use the tools developed in this article to capture the effects of harmonic generation and corrections to ray trajectories caused by space-dependent wave polarization\cite{Littlejohn_1991_phases,Ruiz:2017ij} in wave-mean-flow problems arising in fluids and plasmas.

\begin{acknowledgements}
The authors would like to thank Richard Montgomery and Cesare Tronci for a number of helpful discussions of this work at MSRI. This material is based upon work supported by the National Science Foundation under Grant No. DMS-1440140 while one of the authors (JWB) was in residence at the Mathematical Sciences Research Institute in Berkeley, California, during the Fall 2018 semester. In addition, research presented in this article was supported by (1) the Laboratory Directed Research and Development program of Los Alamos National Laboratory under project number  20180756PRD4, and (2) Sandia National Laboratories. Sandia National Laboratories is a multimission laboratory managed and operated by National Technology and Engineering Solutions of Sandia, LLC., a wholly owned subsidiary of Honeywell International, Inc., for the U.S. DOE National Nuclear Security Administration under contract DE-NA-0003525. This paper describes objective technical results and analysis. Any subjective views or opinions that might be expressed in the paper do not necessarily represent the views of the U.S. DOE or the U.S. Government.
\end{acknowledgements}

\appendix
\section{Proof of mean circulation theorem\label{appendix_A}}

\begin{corollary}[Kelvin's theorem for wave--mean-flow system]
Given a closed curve $C_0 \subset Q_0$ and a solution $(\meanh, \meanp, \meanrho, \meanchi, \mc{I},S)$ of the wave--mean-flow equations \eq{eq:example:cont_final}--\eq{eq:example:phase_final}, then
\begin{equation}
	\frac{\mathrm{d}}{\mathrm{d}t} 
		\oint_{\overline{ C }} 
			\left( \frac{\meanp}{\meanrho} - \frac{\mc{I}}{\meanrho} \del S \right)  
			\cdot \mathrm{d} \vec{x} =0.
	\label{eq:example_Kelvin}
\end{equation}
where $\overline{C}=\meanh^{-1} ( C_0)$.
\end{corollary}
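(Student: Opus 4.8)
The plan is to prove the statement by Noether's theorem, exactly paralleling the derivation of Kelvin's theorem for the pre-extension LBEP equations given in the remark after Lemma \ref{relabeling_lemma}, and the derivation of Corollary \ref{corollary_1} from Proposition \ref{prop4}. The starting point is that the wave--mean-flow equations \eqref{eq:example:cont_final}--\eqref{eq:example:phase_final} are embedded in the Euler--Lagrange equations of the effective action $\overline{\msf{A}}_{\rm T}$ of Theorem \ref{theo:wave_mean_flow}, so that a solution of the wave--mean-flow system is accompanied by an evolving back-to-labels map $\meanh$ and multiplier $\meanchi$ solving the remaining Euler--Lagrange equations. It therefore suffices to produce a conserved momentum map for a relabeling symmetry of $\overline{\msf{A}}_{\rm T}$ whose mass-normalized, loop-integrated form reproduces \eqref{eq:example_Kelvin}. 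Since $C_0$ is a \emph{fixed} label-space loop, conservation of the pushed-forward momentum density integrated over $C_0$ is literally conservation of the circulation over the moving loop $\overline{C} = \meanh^{-1}(C_0)$.

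The first step is to identify the relabeling symmetry. The group $\mathcal{G} = \text{Diff}(Q_0)\ltimes C^\infty(Q_0)$ of Definition \ref{def:semi_direct} acts on $\overline{\mathcal{C}}_0$ by acting on the pair $(\meanh,\meanchi)$ exactly as in Lemma \ref{relabeling_lemma}, while leaving the Eulerian fields $\meanp,\meanrho,\mc{I},S$ fixed. Because $\meanv$ is relabeling-invariant and both $\mc{H}_{\rm T}$ and the eikonal term $c_s|\del S|$ depend only on Eulerian quantities, $\overline{\msf{A}}_{\rm T}$ is $\mathcal{G}$-invariant. The one subtlety, which is the crux of the calculation, is that the wave-action term $-\ep^2\int_Q \mc{I}(\pd_t S + \meanv\cdot\del S + c_s|\del S|)\,d^3\bm{x}$ contributes to the symplectic potential: collecting all terms linear in the velocities shows that the effective potential is obtained from the one-form $\Theta$ of Proposition \ref{looped_momentum_prop} by the replacement $\meanp \mapsto \meanp - \ep^2\mc{I}\del S$ in the momentum density, together with an extra $\delta S$-term that is annihilated by the generators of $\mathcal{G}$ (since those generators fix $S$). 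Repeating the computation of Lemma \ref{relabeling_lemma} verbatim with this shifted momentum density then yields the $\text{Ad}^*$-equivariant momentum map
\begin{align}
\overline{\mu} = \Big(\meanh_*\big[(\meanp - \ep^2\mc{I}\del S)\cdot d\bm{x}\otimes d^3\bm{x} + \mathbf{d}\meanchi\otimes\meanrho\,d^3\bm{x}\big],\ \meanh_*\big[\meanrho\,d^3\bm{x}\big]\Big).
\end{align}

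With the momentum map in hand the conclusion is immediate. Since $\overline{\msf{A}}_{\rm T}$ is $\mathcal{G}$-invariant and its Hamiltonian is $\mathcal{G}$-invariant, Noether's theorem forces both components of $\overline{\mu}$ to be constant along solutions. The second component $\meanh_*[\meanrho\,d^3\bm{x}]$ is the advected mass; dividing the first component by it and using that $\meanrho$ is nowhere vanishing produces the time-independent one-form $\meanh_*\big[\tfrac{\meanp - \ep^2\mc{I}\del S}{\meanrho}\cdot d\bm{x} + \mathbf{d}\meanchi\big]$, exactly as in the manipulation following Lemma \ref{relabeling_lemma}. Integrating around $C_0$, pulling back through $\meanh$, and discarding $\oint_{\overline{C}}\mathbf{d}\meanchi = 0$ shows that $\oint_{\overline{C}}\big(\tfrac{\meanp}{\meanrho} - \ep^2\tfrac{\mc{I}}{\meanrho}\del S\big)\cdot d\bm{x}$ is constant in time, which is \eqref{eq:example_Kelvin} once the $\ep^2$ scaling of the wave-action density is accounted for in a manner consistent with the $\ep^2$ already appearing in the mean momentum equation \eqref{eq:example:mom_final}.

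The hard part is the symplectic bookkeeping of the second paragraph: confirming that the wave-action term shifts the conserved momentum density by precisely $-\ep^2\mc{I}\del S$ and contributes nothing else to the relabeling momentum map, which hinges on the fact that the generators of $\mathcal{G}$ leave $S$ and $\mc{I}$ untouched. A completely independent route, which I would include as a consistency check and which is the one advertised in the discussion of Section \ref{sec:seven}, is to begin from the $S^1$-family of exact circulation invariants of Corollary \ref{corollary_1}, average over $\theta$, and restrict to the slow manifold using the leading-order slaving functions \eqref{eq:alpha_sol}--\eqref{eq:hatchi_sol}. The obstacle there is evaluating $\fint \tp^S/\trho^S\,d\theta$ on the slow manifold and recognizing the emergent ponderomotive contribution $-\ep^2(\mc{I}/\meanrho)\del S$; carrying this out would identify the mean circulation invariant of \eqref{eq:example_Kelvin} with the $\theta$-average of the looped invariant, thereby recovering the circulation theorems of the generalized-Lagrangian-mean literature as the slow-manifold shadow of Corollary \ref{corollary_1}.
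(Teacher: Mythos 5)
Your proposal is correct, and in fact both of its strands appear in the paper's own proof of this corollary --- just with the emphasis reversed. The paper's appendix opens by asserting precisely your Noether route in a single sentence: it says the argument of Lemma \ref{relabeling_lemma} goes through once the symplectic potential is replaced by $\overline{\Theta}$, whose extra term $-\int_Q \mc{I}\,(\delta S + \overline{\bm{\xi}}\cdot\del S)\,d^3\bm{x}$ is exactly the source of your shift $\meanp\mapsto\meanp-\ep^2\mc{I}\,\del S$; the momentum map $\overline{\mu}$ you write down is the one this $\overline{\Theta}$ generates for the relabeling generators (which fix $S$ and $\mc{I}$), so your ``symplectic bookkeeping'' step is sound. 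The paper then declines to carry that computation out and instead proves the corollary in detail by the route you relegate to a consistency check: it restricts the $S^1$-family of invariants of Corollary \ref{corollary_1} to the slow manifold, pushes the one-form $\fluctp^{\star S/\ep}/\fluctrho^{\star S/\ep}\cdot d\bm{x}$ forward by $\flucttau^{\star S/\ep}$, and $\theta$-averages, with the pseudomomentum term $-(\mc{I}/\meanrho)\del S$ emerging from the terms quadratic in $\hatrho$ after inserting the slaving relations \eqref{eq:alpha_sol}--\eqref{eq:hatchi_sol}. The trade-off between the two developments is real: your Noether argument is short, structural, and yields an exact invariant of the truncated wave--mean-flow system, whereas the paper's calculation (carried out only modulo $\mc{O}(\ep^3)$ remainders) establishes the stronger claim advertised in Section \ref{sec:seven} --- that the mean circulation invariant is literally the $\theta$-average of the looped family of Corollary \ref{corollary_1} restricted to the slow manifold, i.e.\ the slow-manifold shadow of the loop-group symmetry --- which your primary route does not by itself deliver. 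A minor point in your favor: you track the $\ep^2$ on the pseudomomentum term explicitly and note it must be reconciled with \eqref{eq:example:mom_final}, whereas the paper's statement of $\overline{\Theta}$ and of \eqref{eq:example_Kelvin} silently absorbs that factor.
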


Thus, in the wave--mean-flow framework developed here, the circulation theorem \eq{eq:example_Kelvin} is now a closed contour integral of the fluid momentum minus a term related to the wave momentum. The modification of Kelvin's circulation theorem due to wave effects has been noticed before.\cite{Bretherton_1971} In essence, this result shows that waves can affect the vorticity of the bulk fluid. The last term in \Eq{eq:example_Kelvin} is sometimes referred as ``wave pseudomomentum."\cite{Gjaja_Holm_1996, Buhler_2009} 

\begin{proof}
Equation \eq{eq:example_Kelvin} can be proven by following a similar procedure as that used in Lemma \ref{relabeling_lemma}. The only difference is that now the simplectic form associated to $\overline{\mc{C}}_0$ is $-\vec{\mathrm{d}} \overline{\Theta} $, where the 1-form $\overline{\Theta}$ is given by
\begin{align}
	\overline{\Theta}
		[\delta \meanh,\delta \meanp,\delta\meanrho,\delta\meanchi,\delta\mc{I},\delta S] 
			= 	\int_Q \meanp \cdot \overline{\vec{\xi}} \,d^3\bm{x} 
				+ \int_Q \meanrho (\delta \meanchi
						+\overline{\vec{\xi}} \cdot \del \meanchi)\,d^3\bm{x}
				- \int_Q \mc{I} (\delta S+ \overline{\vec{\xi}}
							\cdot\del S)\,d^3\bm{x}
\end{align}
and $\overline{\vec{\xi}} \doteq - \delta \meanh \cdot (\del \meanh)^{-1}$. Alternatively, we can also show the result in \Eq{eq:example_Kelvin} by using the general Kelvin's theorem for Eulerian WKB obtained in Corollary \ref{corollary_1}.  Indeed, since solutions along the slow manifold [see, \eg \Eqs{eq:alpha_sol}--\eq{eq:hatchi_sol}] are solutions of the NL--WKB isothermal fluid equations \eq{eq:extELE_v}--\eq{eq:extELE_chi}, then \Eq{eq:example_Kelvin} can be obtained by restricting \Eq{circulation_family} onto the slow manifold and averaging over the phase $\theta$. Specifically, we have
\begin{equation}
	0 	=\frac{\mathrm{d}}{\mathrm{d}t} 
			\overline{ \left( \oint_{C_\theta} 
				\frac{\fluctp^{\star S/\ep}}{\fluctrho^{\star S/\ep}} \cdot d\bm{x} \right) }
		=	\frac{\mathrm{d}}{\mathrm{d}t} 
				\oint_{\overline{ C }} 
				\overline{ \left( \widetilde{\tau}^{\star S / \ep}_*	 
				\frac{\fluctp^{\star S/\ep}}{\fluctrho^{\star S/\ep}} \cdot d\bm{x}\right) }.
	\label{eq:example_Kelvin_aux}
\end{equation}
Following a similar calculation as in \Eq{eq:example_push_p}, we can calculate the pushforward appearing in the integral above:
\begin{align}
\widetilde{\tau}^{\star S / \ep}_*	 
				\frac{\fluctp^{\star S/\ep}}{\fluctrho^{\star S/\ep}} \cdot d\bm{x} 
	&	=		d \vec{x} \cdot \del  (( \flucttau^{\star S / \ep})^{-1} )  
				\cdot 
				\left( \frac{\fluctp^{\star S/\ep}}{\fluctrho^{\star S/\ep}}  \right) 
						\circ  (\flucttau^{\star S / \ep})^{-1} 
				\notag \\
	&	=		d \vec{x} \cdot ( \mathbb{I} - \ep^2 \del  \hatalpha^{\star S / \ep} ) 
				\cdot 
				\bigg( 
					\frac{\meanp}{\meanrho}
					+ \ep \frac{\hatp^{\star S/\ep}}{\meanrho}		
					- \ep \frac{\meanp}{\meanrho}	\frac{\hatrho^{\star S/\ep}}{\meanrho}	
					- \ep^2 \frac{\hatp^{\star S/\ep}}{\meanrho}	
							\frac{\hatrho^{\star S/\ep}}{\meanrho}			
						\notag \\
	&			\qquad
					+ \ep^2 \frac{\meanp}{\meanrho}	
							\bigg( \frac{\hatrho^{\star S/\ep}}{\meanrho^{\star S/\ep}}\bigg)^2	
				\bigg) 
						\circ  (\vec{id} - \ep^2 \hatalpha^{\star S / \ep} ) 
				+\mc{O}(\ep^3)
				\notag \\
	&	=		d \vec{x} \cdot ( \mathbb{I} - \ep \del S \otimes \pd_\theta \hatalpha^{\star S / \ep} ) 
				\cdot 
				\bigg( 
					\frac{\meanp}{\meanrho}
					+ \ep \frac{\hatp^{\star S/\ep}}{\meanrho}		
					- \ep \frac{\meanp}{\meanrho}	\frac{\hatrho^{\star S/\ep}}{\meanrho}	
					- \ep^2 \frac{\hatp^{\star S/\ep}}{\meanrho}	
							\frac{\hatrho^{\star S/\ep}}{\meanrho}			
					\notag \\
	& \qquad
					+ \ep^2 \frac{\meanp}{\meanrho}
						\bigg( \frac{\hatrho^{\star S/\ep}}{\meanrho^{\star S/\ep}}\bigg)^2	
					- \ep^2 \del S \cdot \hatalpha^{\star S / \ep} \pd_\theta 
								\frac{\hatp^{\star S/\ep}}{\meanrho}	
					+ \ep^2 \del S \cdot \hatalpha^{\star S / \ep} \pd_\theta 
								\frac{\meanp}{\meanrho}	\frac{\hatrho^{\star S/\ep}}{\meanrho}	
				\bigg) 
				+  \ep^2 \mathrm{Osc}	,
				\notag \\
	&	=		\bigg( 
					\frac{\meanp}{\meanrho}
					- \ep^2 \frac{\hatp^{\star S/\ep}}{\meanrho}	
								\frac{\hatrho^{\star S/\ep}}{\meanrho}			
					+ \ep^2 \frac{\meanp}{\meanrho}	
								\bigg( \frac{\hatrho^{\star S/\ep}}{\meanrho^2}\bigg)^2
					- \ep   \pd_\theta \hatalpha^{\star S / \ep}   \cdot \frac{\meanp}{\meanrho} \del S
				\bigg) \cdot d \vec{x}
				+  \ep^2 \mathrm{Osc}	,
\end{align}
where in the last line, some $\mc{O}(\ep^2)$ terms were written as a total derivative of $\theta$. Since their $\theta$-average is zero, we omitted writing them and placed them under the symbol ``$\ep^2 \mathrm{Osc}	$". Finally, averaging over $\theta$ and substituting the expressions for $\hatp^{\star S/\ep}$ and $\mc{I}$ leads to
\begin{equation}
	\overline{ \left( 
		\widetilde{\tau}^{\star S / \ep}_*	 
			\frac{\fluctp^{\star S/\ep}}{\fluctrho^{\star S/\ep}} \cdot d\bm{x}  \right) }
		=	\frac{\meanp}{\meanrho} \cdot d \vec{x}
			-	\frac{\mc{I}}{\meanrho} \del S \cdot d \vec{x}.
\end{equation}
Inserting this into \Eq{eq:example_Kelvin_aux} finishes the proof.
\end{proof}


\bibliography{cumulative_bib_file.bib}



\end{document}